\renewcommand{\cite}[1]{\citep{#1}}
\providecommand{\bart}[1]{{\footnotesize\color{red}BART:#1}}
\providecommand{\marc}[1]{{\footnotesize\color{blue}MARC:#1}}
\providecommand{\joost}[1]{{\footnotesize\color{ForestGreen}JOOST:#1}}
\providecommand{\new}[1]{{\color{blue}#1}}
\renewcommand{\marc}[1]{}
\renewcommand{\joost}[1]{}
\renewcommand{\bart}[1]{}
\renewcommand{\new}[1]{#1}
\newcommand{\goedel}[1]{{\lceil #1\rceil}}
\newcommand{\equi}{\Leftrightarrow} 
\newcommand{\mim}{\Rightarrow}
\newcommand{\rul}{\leftarrow} 
\newcommand{\defin}[1]{\left\{\begin{array}{ll}#1\end{array}\right\}}
\newcommand{\natnrs}{\mathbb{N}}
\newcommand{\integers}{\mathbb{Z}}
\newcommand{\rankN}[1]{\lVert #1\rVert_\NI}
\newcommand{\order}{\prec}
\newcommand{\revorder}{\succ}
\newcommand{\dep}{\propto}
\newcommand{\stdep}{\prec_\dep}
\newcommand{\stdept}{\prec_{\dep_t}}
\newcommand{\der}{\vdash_\D}
\newcommand{\notder}{\nvdash_\D}
\newcommand{\NI}{\mathcal{N}}
\newcommand{\limit}[1]{lim(#1)}
\newcommand{\ra}{\rightarrow}
\newcommand{\Tr}{{\bf t}} 
\newcommand{\Fa}{{\bf f}}
\newcommand{\ass}{{\upsilon}}
\newcommand{\spass}{{\bf \mathit s}}
\newcommand{\kass}{{\bf \mathit k}}
\newcommand{\xxx}{\bar{x}}
\newcommand{\aaa}{\bar{a}}
\newcommand{\bbb}{\bar{b}}
\newcommand{\ttt}{\bar{t}}
\newcommand{\graph}{{\mathcal G}}
\newcommand{\reach}{{\mathcal R}}
\newcommand{\OO}{\mathcal{O}}
\newcommand{\I}{\mathfrak{A}}
\newcommand{\J}{\mathfrak{B}}
\newcommand{\K}{\mathfrak{C}}
\newcommand{\res}[2]{{#1}|_{{#2}}}
\DeclareMathOperator{\leqt}{{\leq_t}} 
\DeclareMathOperator{\geqt}{{\geq_t}}
\newcommand{\Pa}{{P(\aaa)}}
\newcommand{\Qb}{{Q(\bbb)}}
\newcommand{\D}{\Delta}
\newcommand{\voc}{\Sigma}
\newcommand{\defp}[1]{\mathit{def}(#1)}
\newcommand{\pars}[1]{\mathit{pars}(#1)}
\newcommand{\struct}[1]{\langle#1\rangle}
\newcommand{\domat}[2]{{At^{#1}_{#2}}}
\newcommand{\ignore}[1]{}
\providecommand\citet[1]{\citeauthor{#1}~\shortcite{#1}}
\declaretheorem[style=plain,	name=Theorem,		numberwithin=section]{thm}
\declaretheorem[style=plain,	name=Theorem,		numberlike=thm]{theorem}
\declaretheorem[style=plain,	name=Proposition,	numberlike=thm]{proposition}
\declaretheorem[style=plain,	name=Lemma,		numberlike=thm]{lemma}
\declaretheorem[style=plain,	name=Lemma,		numbered=no]   {lem*}
\declaretheorem[style=plain,	name=Corollary,		numberlike=thm]{corollary}
\declaretheorem[style=definition,	name=Definition,	numberlike=thm]{definition}
\declaretheorem[style=definition,	qed=$\blacktriangle$,	numberlike=thm]{example}
\declaretheorem[style=definition,	qed=$\blacktriangle$,	numbered=no]{ex*}
\declaretheorem[style=remark,	name=Notation,		numbered=no]{nota*}
\declaretheorem[style=remark,	name=Note,		numbered=no]{note*}
\begin{document}

\markboth{Marc Denecker et al.}{A Logical Study of Some Common Principles of Inductive Definition
}

\title{A Logical Study of Some Common Principles of Inductive Definition and its Implications for Knowledge Representation
}
\ignore{
\author{MARC DENECKER
\affil{KU Leuven}
  BART BOGAERTS 
 \affil{KU Leuven}
 JOOST VENNEKENS
 \affil{Campus De Nayer,  KU Leuven}
 }
 }
 
 \author{Marc Denecker, Bart Bogaerts, Joost Vennekens}


\ignore{
The claim that logics programs can be understood as (inductive) definitions is old, but so far convincing arguments were missing. In this paper, we treat definitions as we find them in mathematical and formal scientific text as our empirical reality, and apply the methods of formal empirical science to study them. We build a scientifically justified, verifiable formalization of them using which we are then capable to prove the correctness of the well-founded semantics.

This is a new scientifically more solid approach to an old and
essentially unresolved problem: what is the declarative meaning of logic
programs? It is for the first time that the methods of formal empirical science are directly applied to the semantic problem of logic programming. 
}

\maketitle

\begin{abstract} 
The definition is a common form of human expert knowledge, a building block of formal science and mathematics, a foundation for database theory and is supported in various forms in many knowledge representation and formal specification languages and systems. 
This paper is a formal
  study of some of the most common forms
  of inductive definitions found in scientific text:  monotone
  inductive definition, definition by induction over a well-founded
  order and iterated inductive definitions. We define a logic of definitions offering a uniform formal
  syntax to express definitions of the different sorts, and we define its semantics by a faithful
  formalization of the induction process. Several fundamental properties
  of definition by induction emerge: the non-determinism of the
  induction process, the confluence of induction processes, the role of the induction order and its relation to the inductive rules, how the induction order constrains the induction process and, ultimately, that the induction order is irrelevant: the defined set does not depend on the induction order. We propose an inductive construction capable of constructing the defined set {\em without} using the induction order. 
We investigate borderline definitions of the sort that appears in definitional paradoxes. 
\end{abstract}

\ignore{
\category{I.2.4}{ARTIFICIAL INTELLIGENCE}{Knowledge Representation Formalisms and Methods}
\category{D.1.6}{PROGRAMMING TECHNIQUES}{Logic Programming}
\category{F.3.2}{LOGICS AND MEANINGS OF PROGRAMS}{Semantics of Programming Languages}
\category{F.4.1}{MATHEMATICAL LOGIC AND FORMAL LANGUAGES}{Mathematical Logic}

\terms{Languages, Theory}

\keywords{Formal semantics, Inductive definitions, Informal semantics, Knowledge representation, Logic programming, Well-founded semantics}
 
\acmformat{Marc Denecker, Bart Bogaerts and Joost Vennekens, 2016. TITLE}


\begin{bottomstuff}
\new{Part of this work appeared in the proceedings of Principles of Knowledge Representation and Reasoning (2014) \cite{KR/DeneckerV14}.}

This work was supported by the KU Leuven under project GOA 13/010 and by the Research Foundation - Flanders (FWO-Vlaanderen).

Author's addresses: M. Denecker {and} B. Bogaerts, Department of Computer Science, Celestijnenlaan 200a, 3001 Heverlee, Belgium;
J. Vennekens, Jan De Nayerlaan 5,
2860 Sint-Katelijne-Waver, Belgium.
\end{bottomstuff}}

\section{Introduction}
\label{sec:intro}


This paper is a formal scientific study of certain types of
definitions as they appear in mathematical and scientific text. The {\em definition}
is one of the building blocks of science and mathematics and its use
is ubiquitous there. Consequently, it received due attention from
logicians and computer scientists. Inductive definitions were investigated in metamathematical
studies \cite{Moschovakis74,Aczel77,Feferman70,MartinLoef71,BuchholzFPS81}. 
Definitions play an  important role in many declarative paradigms
and systems. In databases (SQL, Datalog), a query is essentially a symbolic definition of a set that the user wants to be calculated.  As such, definitions and definability are key concepts in database theory \cite{aw/AbiteboulHV95}. Fixpoint logics \cite{focs/GurevichS85} have their origin in metamathematical studies of inductive definitions and inductive definability. In logic programming, definitions play an important role as one of the solutions to the problem of explaining the meaning of logic programs with negation as failure \cite{adbt/Clark78,jcss/Schlipf95,Denecker98}. In knowledge representation, it is widely recognized that definitions are an important form of human expert knowledge that should be supported in knowledge representation and specification logics \cite{Brachman83,tocl/DeneckerT08,DeneckerT07}. Many declarative systems in
various fields of computational logic support some form of
definitions, e.g., Minizinc \cite{conf/cp/NethercoteSBBDT07}, ProB \cite{journals/sttt/LeuschelB08}, IDP \cite{WarrenBook/DeCatBBD14}. 

The study in this paper focuses on definitions that are, or
can be, formulated as a set of informal base rules and inductive rules,
possibly equipped with an induction order. 
%
%
This covers the class of non-recursive definitions as a trivial case, but the focus is on induction, evidently.
A prototypical example of a definition of this kind is the well-known  definition  of the transitive closure of a graph.  
\begin{definition} \label{def:TC}
The \emph{reachability graph} $\reach$ of a directed graph $\graph$ is defined inductively:
\begin{itemize}
\item $(d,e)\in \reach$ if $(d,e)\in \graph$;
\item $(d,e)\in \reach$ if there exists a vertex $f$ such that $(d,f), (f,e)\in \reach$. 
\end{itemize}
\end{definition}
An equally well-known definition is the one of the satisfaction relation of propositional logic:
\begin{definition}\label{def:sat} Given a propositional vocabulary $\Sigma$, 
  the \emph{satisfaction relation} $\models$ between $\Sigma$-structures and
  $\Sigma$-formulas of propositional logic is defined by induction
  over the structure of formulas:
\begin{itemize}
\item $I\models P$ if $P$ is a propositional symbol and  $P\in I$.
\item $I\models \alpha\land\beta$ if $I\models \alpha$ and $I\models \beta$. 
\item $I\models \alpha\lor\beta$ if $I\models \alpha$ or $I\models \beta$ (or both).  
\item $I\models \neg\alpha$ if $I\not \models \alpha$. 
\end{itemize}
\end{definition}
These two definitions are instances of what are probably the two most
common forms of inductive definitions in mathematical and formal
scientific text. Definition~\ref{def:TC} is an example of a {\em
  monotone inductive definition}. Such definitions were studied extensively in mathematical logic \cite{Moschovakis74,Aczel77}.
Definition~\ref{def:sat} is a definition {\em by structural
  induction}, or {\em by induction on the complexity of the formula}.
It is an example of a {\em definition by induction over a well-founded
  induction order}: here the induction order is the subformula
order.  Definitions over an induction order may be non-monotone. For instance, in the fourth rule of Definition~\ref{def:sat}, the condition ``$I\not\models\alpha$'' is a non-monotone condition, in words ``{\em it is not the case that   $I\models\alpha$}''. This sort of definition  has not been studied so well.\footnote{Some reserve the term {\em inductive definition} for what we call here ``monotone inductive  definitions'', and {\em recursive definition} for what we call ``definitions by induction over an induction order''.   }
These two informal definitions are clear instances of the sort of definitions that we want to study here in this paper and they will serve as running examples throughout the paper. Our study includes also {\em iterated inductive definitions}, definitions that combine features of monotone induction and induction over a well-founded order.
Such definitions were studied in \cite{Feferman70,MartinLoef71,BuchholzFPS81}. They are discussed later in the paper. 
These are the forms of definitions that we study here, and we study them from a logical and semantical point of view. It may seem unlikely  that about such common and fundamental objects of mathematical reasoning much remains to be discovered at the semantical level, and yet we shall argue that this is the case. 

Definitions in mathematical or scientific text serve to define formal
objects, but they are not formal objects themselves. As such, we will
refer to them as {\em informal definitions}. Definitions are propositions that state a particular sort of logical relationship: they define one set (or possibly more than one) \emph{in terms of} other sets which we call the {\em parameters} of the definition.  For instance, the defined set of Definition~\ref{def:TC} is the reachability graph $\reach$ and its  unique parameter the graph $\graph$; the defined set of  Definition~\ref{def:sat} is the satisfaction relation between $\Sigma$-structures and formulas over $\Sigma$ and the parameter is the vocabulary $\Sigma$. 

Despite their informal nature, inductive definition found in mathematical text strike us for their precision.  
The set defined by  such an informal definition can often be characterized in two quite different ways: ``non-constructively'', as the least set closed under rule application, and ``constructively'', as the set obtained by iterated rule application. By Tarski's least fixpoint theorem, both sets coincide.  

Tarski's result, however, holds only for monotone operators. The operator induced by Definition~\ref{def:sat} is non-monotone due to its fourth rule, and Tarski's theorem does not apply to it: the satisfaction relation $\models$ is {\bf not} the least relation satisfying the rules of Definition~\ref{def:sat}. The least relation does not exist; there are infinitely many minimal sets that are closed under these rules and some are just weird (we return to this in Example~\ref{ex:even:1}). While experts are aware of this, this comes as a surprise to many people, even those skilled in mathematics. This shows that theoretical understanding of this sort of definition is less widely spread than deserved. What it also shows is that the constructive principle is the more fundamental of the two principles.  We choose this principle as the foundation of our study. As such, the sort of definition studied here  defines a set by describing how to construct it through an {\em induction process}. The induction process starts from the empty set and proceeds by applying rules until the set is closed (saturated) under rule application. In case of an induction order, rules must be applied ``along'' the specified order. The considered class of definitions covers non-inductive definitions as a trivial case (no inductive rules) and also the above sorts of inductive definitions.

Our study is a formal,
logical, semantical study of the selected sort of informal definitions. Syntactically, a formal
definition will be defined as a set of formal rules
$$\forall \xxx\ (P(\bar{t}) \rul \phi)$$ where 
$P(\bar{t})$ is an atomic formula (the definiendum) with $P$ the
defined set or relation and $\phi$ is a formula (the definiens) of
first-order logic (FO). Given a suitable first-order vocabulary $\Sigma$ to express the concepts of the informal definition, we will say that a formal definition {\em faithfully expresses} an informal rule-based definition if there is a one-to-one correspondence between formal and informal rules such that the definiendum (i.e., the head) of the rule correctly formalizes the conclusion of the informal rule and the definiens (i.e., the body) of the formal rule correctly formalizes that of the informal rule. For the running example
Definition~\ref{def:TC}, the formalization may be as follows:
\begin{equation*}
\D_{TC} = \defin{ \forall x \forall y (R(x,y)\rul G(x,y))\\
\forall x \forall y (R(x,y) \rul \exists z (R(x,z) \land R(z,y))}
\end{equation*}
To formalize Definition~\ref{def:sat}, we use the symbol
$Sat(i,f)$ to express that $f$ is a formula
 satisfied in structure $i$,  $Atom(p)$ that $p$ is a
propositional atom of the vocabulary and $In(p,i)$ that $p$ is
true in structure $i$, and function symbols $And/2, Or/2, Not/1$ on formulas to express connectives (see Example~\ref{ex:sat} in Section~\ref{sec:formal:def} for details).
\begin{equation*}
\D_{\models} =  \defin{
\forall i \forall p (Sat(i,p) &\rul Atom(p) \land In(p,i))\\
\forall i \forall f \forall g (Sat(i,And(f,g)) &\rul Sat(i,f)\land Sat(i,g))\\
\forall i \forall f \forall g (Sat(i,Or(f,g)) &\rul Sat(i,f)\lor Sat(i,g))\\
\forall i \forall f  (Sat(i,Not(f)) &\rul \neg Sat(i,f))
} 
\end{equation*}
It can be seen (and it will be shown in a precise mathematical way) 
that both formal definitions faithfully express the corresponding informal deinition. E.g., in $\D_{\models}$, the conclusion $I\models\neg\alpha$ of the fourth informal rule is correctly translated into $Sat(i,Not(f))$  and  its condition $I\not\models\alpha$ is faithfully translated into the formula $\neg Sat(i,f)$ (where $i$ stands for $I$ and $f$ for $\alpha$).

On the semantical level, we will define a model semantics for this
formalism. The key concept in this semantics is the formalization of the  induction process. For a formal definition $\D$, it will be formalized as an increasing, possibly transfinite sequence of sets (or, more generally,
of structures)
\[\struct{\I_0, \I_1, \I_2, \dots}\]
where $\I_0=\emptyset$ and at each stage $i$ a set of applicable rule instances of $\D$ is applied to obtain $\I_{i+1}$. The defined set is then the limit of the induction process. A model of a definition will be defined as a structure in  which the interpretation of the defined symbol is this defined set. 

E.g., a small segment of an induction
process for Definition~\ref{def:sat} in the context of formulas and structures of the vocabulary $\Sigma=\{P,Q\}$ is
\[ \emptyset \ra \{ (\{P,Q\}, P), (\{P,Q\},Q)\} \ra  \{ (\{P,Q\}, P\land Q),  (\{P,Q\}, P), (\{P,Q\},Q)\} \ra \dots
\]
It is obtained  by applying,
first, two instances of the base rule of $\D_{\models}$ to derive
satisfaction of $P$ and $Q$ in the structure $\{P,Q\}$; second, an instance of the rule for
conjunctive formulas to derive $P\land Q$ in this same structure.



The above formal notion of the induction process is a faithful
formalization of the iterated rule application that is inherent to the
class of informal definitions that we  study here. The concept is a generalization of the
formal notion of induction process found in the standard studies of
monotone induction such as by \citet{Moschovakis74} and \citet{Aczel77}.
There, the induction process is formalized as the sequence $$\struct{\emptyset, \Gamma(\emptyset), \Gamma^2(\emptyset), \dots, \Gamma^n(\emptyset), \dots}$$ obtained by
iterating the operator $\Gamma$ induced by the definition. In the
context of the formalism used here, such an operator driven sequence is a
special case of our notion of induction process obtained by applying
at each stage {\em every} applicable rule. Our concept allows for the possibility that at some stage only one or a subset of the applicable rules is actually applied.

There are three reasons why
we base our theory on this more fine-grained notion of induction
process. First, we claim that it is a more faithful formalization of
the way humans actually perform the induction process. E.g., to
mentally compute the transitive closure of a graph from
Definition~\ref{def:TC}, we probably do this by individual rule
applications, not by applications of the operator. Second, the more
fine grained induction process is needed to capture the induction process in
definitions over an induction order. There, the induction process simply cannot apply all 
applicable rules at each stage: only those that respect
the induction order can be applied. E.g., in the case of
Definition~\ref{def:sat}, in the initial stage $\I_0 = \emptyset$,
every rule instance ``$I\models\neg\varphi$ if $I\not\models
\varphi$'' applies (since $(I,\varphi)\not\in \emptyset$) but
application of them should be delayed until the induction process is
finished with deriving $I\models\varphi$. In such cases, an
operator-based induction process that applies all applicable rules at each stage, does not match with informal
inductions and does not construct the defined set.
Third, the more fine-grained formalization of the
induction process exposes several fundamental aspects of the studied class of inductive
definitions that, to the best of our knowledge, did not surface in
earlier studies. 

Perhaps the most striking aspect is the {\em
  non-determinism} of the induction process: by applying rules in
different sequences, many induction processes can be built, even in
the presence of an induction order. Now an all-important issue emerges: do all
induction processes converge to the same set? If not, the definition
would be ambiguous! The confluence of different induction processes of informal definitions is
a fact that most of us probably take for granted; however, it is a
fundamental and non-trivial property of induction. This is one of the topics that will be analyzed in this paper. 

\ignore{
One such an aspect is the {\em
  non-determinism} of the induction process: by applying rules in
different orders, many induction processes can be built, even in
presence of an induction order. In the presence of this
non-determinism, the following all-important issue emerges: do all
induction processes converge to the same set? If not, the definition
would be ambiguous! For informal inductive definitions found in
mathematical text, the confluence of different induction processes is
a fact that most of us probably take for granted; however, it is a
fundamental and non-trivial property of induction.  One key
result of this paper is a formal condition on the induction order that
warrants confluence of all induction processes that respect the
order. We believe that this condition is natural; that is, it is
satisfied in informal definitions in mathematical and scientific text.
}

Also other aspects emerge from our study. E.g., how the induction
order constrains the induction process, what the link is between the
induction order and the rules, or the surprising fact that the set
defined by a definition over an induction order does {\em not} depend
on that order. 


In the next section, we discuss the scope of the work of this paper, its limitations and contributions  and we situate it in the broader context of mathematical and computational logic and of knowledge representation. 

\ignore{ The first part of the paper
(Sections 2, 3 and 4) is devoted to the formalization and semantical
analysis of the above informal concepts: definitions, the induction
order, the induction process and the links between them.

Also other aspects emerge from our study. E.g., how the induction
order constrains the induction process, what the link is between the
induction order and the rules, or the surprising fact that the set
defined by a definition over an induction order does {\em not} depend
on that
order. 

}

\ignore{
Perhaps the most striking aspect of the induction
process is that it is non-deterministic: by applying rules in
different orders, many induction processes can be built. In the
presence of this non-determinism, the following all-important issue
emerges: do all ``legal'' induction processes converge to the same
set? If not, the definition would be ambiguous!  In the case of
informal definitions, we often take this for granted but it is a
crucial and non-trivial property.  As a main result, we will obtain
simple conditions on the induction order such that, indeed, all
induction processes that respect such orders converge. We believe that
our conditions are natural; that is, they are satisfied in informal
definitions of this type. A further discovery is that the defined set
will turn out to be {\em independent} of the choosen induction
order. That is, we sometimes have the choice of different induction
orders to build an induction process. E.g., we might define the
satisfaction relation by induction on the subformula order, or the
depth of a formula, or the size of the formula. These different
choices correspond to different induction orders. Allthough these
orders admit different induction processes, the defined set does not
depend on the choosen order. Again, this is something we may easily
take for granted for such informal definitions but it is not a trivial
property. In fact, it is not satisfied in some earlier formal studies
of induction over an induction order.  In our formal study, we will be
able to formally prove this.
}

\section{Scope of the study and related work}

\subsection{Related work} 

(Informal) definitions, including inductive ones, are fundamental in
building mathematics and consequently, they have been a prime topic of
research in the field of metamathematics (the mathematical study of
mathematical methods). The logical study of monotone induction was
started by \citet{ajm/Post43} and was continued in many later
studies \cite{Spector61,Moschovakis74,Aczel77}. The study of
iterated induction (which generalizes monotone induction and induction
over a well-founded order) was started by \citet{Kreisel63} and
extended in later studies of so-called Iterated Inductive Definitions
(IID) by \citet{Feferman70}, \citet{MartinLoef71}, and
\citet{BuchholzFPS81}. Common to all these studies is that they focus
on formal expressivity results, formal accounts of what classes of
objects can be defined. In the words of
\citet{journals/tcs/Hallnas91}, these studies were primarily concerned
with {\em inductive definability}, more than with {\em inductive
  definitions}.

The perspective of this paper is different.  
The focus is on the semantical properties of sorts of  definitions  that have not been fully analysed yet from a semantical point of view: definitions over an induction order and iterated inductive definitions. Our study departs from earlier studies by using a different formalization of the induction process. As a consequence, novel aspects of informal definitions emerge that were not formally studied before. 

\ignore{ Also, one aim is to build a logic that is useful and
easy to express informal definitions. This is no concern in any of the
logics defined above. E.g., an inductive definition of \citet{Aczell77}
is a recursively enumerable set of pairs $(n,B)$ with $n$ a number and
$B$ a recursively enumerable set of numbers. This is the right
abstraction level for an expressivity theoretic study but not for a
practical use. Moreover, it focusses on monotone inductive
definability. The semantical principle of monotone induction is
straightforward and our study contributes almost nothing to this
topic. As such, one can see that our study is radically different than the one by \citet{Aczell77}.  Our study has a slightly tighter relationship with logics of
iterated inductive definitions; this will be discussed later in the
paper after we studied this topic.
}

\ignore{
\subsection{No linguistic study}

Definitions in mathematical text are expressed in different linguistic styles. 
For example, in other works, the rules of Definition~\ref{def:sat} of the satisfaction relation may be  expressed a bit differently. For example, in \cite{Enderton}, one rule is stated as ``... if and only if ...''. And in some texts, the satisfaction relation  is  defined non-recursivily in terms of the truth evaluation function which is then recursively defined. In this paper, we do not spend time with an analysis of these linguistic styles. No matter how these definitions are expressed in natural language, the underlying idea of the  the induction process as sketched above applies to it, and this is where we focus on.\footnote{That also others express, e.g.,  the definition of satisfaction in the same style as Definition~\ref{def:sat}, can be seen on the wikipedia webpage \url{}. Wikipedia sources like this page have grown into the most broadly read and well reviewed sources.}

In different communities in mathematics and formal science,  different conventions and terminologies are used. E.g., frequently  the definitional conditional ``if'' is used to define a concept, as in ``A number is prime if it is only divided by itself and by 1'', but some prefer to phrase definitions using ``if and only if'' as in ``A number is prime if and only if it is only \dots''. In the Stanford Encyclopedia of Philosophy, the definition of the satisfaction relation of predicate logic  is called a definition ``{\em by recursion on the complexity of the formula}'' and  rules are expressed as in ``$I \models \neg \varphi$ if and only if it is not the case that $I\models\varphi$'' \url{http://plato.stanford.edu/entries/logic-classical/}. Such a definition is sometimes also called a definition by {\em structural induction}. It is what we called a definition by induction on a well-founded order. For  Definition~\ref{def:sat}, the order is the subformula relation which is indeed a strict well-founded order. 

}

\subsection{A formal ``empirical'' scientific study}
%
\ignore{Terms like exact, formal, empirical science are unclear and vague terms that are often used exchangably and they cannot be used without specifying what is meant with them. The study here is an exact, formal, empirical scientific study, in the following senses of these words:
\begin{itemize}
\item empirical: it studies something that exists (definitions), that can be observed (although not by our normal senses), using which experiment can be set up that falsify or confirm the theory;
\item  formal: the theory provides mathematical models of the informal reality;
\item  exact: this formal model characterise the inductive definition; the inductive definition can be ``reconstructed'' from its formalization. 
\end{itemize}
}

Informal definitions exist. They appear in mathematical and scientific
text. They are written, read, broadly understood, reasoned upon and
computed with. 
Definitions are the ``reality'' that we
study here.   They are not tangible, physical objects. They are of cognitive
nature and we can express and ``sense'' them only via language.  Nevertheless, they are of mathematical, objective precision and this makes them suitable for formal scientific research. 

A definition is not a physical reality, and as such some will argue
that this study cannot be called an ``empirical'' scientific
study. 
Nevertheless, our study shares many properties with empirical formal science. Most importantly, the theory is {\em falsifiable}. Any well structured informal definition of the studied class (e.g., Definitions \ref{def:TC} and \ref{def:sat}) presents a potential experiment. To ``execute'' the experiment, one needs to establish a correspondence between the mathematical objects involved in the informal definition (parameters and potential defined sets) and structures of the vocabulary of the formal definition; then one needs to verify that the formal definition {\em faithfully expresses} an informal definition, in the sense defined in the introduction, then compare the mathematical object defined by the informal definition  with the one defined by the formal definition. If a difference is found, the experiment refutes the theory; otherwise it confirms (corroborates) it. %

In empirical sciences, there is a fundamental asymmetry between
proving and disproving a theory. No number of successful experiments
suffices to prove an empirical theory; but one failed experiment
suffices to disprove it \cite{popper-scientificdiscovery}. Likewise,
we cannot ``prove'' that our mathematical theory of informal definitions
is correct. After all, there is no mathematical definition of what is
an informal definition. Nevertheless, we are confident of our
theory. In the first place,  the semantic principle of the informal definitions under investigation is a solid intuition, and easy to formalize. While this is not a proof of our theory, it certainly is a compelling argument in favour for it. In the second place,  contrary to physical science, any ``experiment'' in a study like this one is a {\em mathematical problem}. The correspondence between the mathematical objects of the informal definition and the structures of the formal definition, the question whether the formal definition faithfully
expresses an informal definition and the correspondence between the
informally defined object and the formally defined structure: they can
be analysed with mathematical methods and precision.  

As such, it might be easier for us to convince the reader of the correctness of our theory than it is for a physicist to argue the correctness of a set of mathematical postulates about some physical reality. The cognitive nature of the studied objects does not prevent a precise mathematical approach to it.



\ignore{

Any experiment of this kind is a mathematical problem and can be addressed  with the precision of mathematical analysis. To illustrate this 
consider for example Definition~\ref{def:TC} and its formalization $\D_{TC}$. Each graph $\graph$ and potential value for $\reach$ induces  a unique structure of the  symbols $\graph, \reach$  of $\D_{TC}$. T$\D_{TC}$ is a correct formalization of the informal definition if each of its formal rules formalize the corresponding informal rule. E.g., for the inductive rule, what must be verified is whether in any graph $\graph$ and binary relation $\reach$ over the vertices of $G$, for any pair of vertices $x$ and $y$, there is a vertex $z$ such that $(x,z), (z,y)\in \reach$ if and only if $\I[u:x][v:y] \models \exists w (R(u,w)\land R(w,v))$, where $\I$ is the structure with domain the set of vertices, $G^\I=\graph$ and $R^\I=\reach$. This is easily and formally verifiable using the definition of satisfaction of FO and the simple correspondence between $\I$ and the relations $\graph$ and $\reach$.
PROBLEMEN MET WISKUNDIGE VARIABELEN EN LOGISCHE VARIABELEN. ER ZOU EEN ONDERSCHEID MOETEN ZIJN. 
The theory defined on the following pages will define a defined structure $\I_\D$ of $\D_{TC}$. It will then be verified that $R^{\I_\D}$ is identical to the graph $\reach$ as defined by the informal definition. 
}

\subsection{Limits of the scope of the study}

The scope of our study is limited in several respects. We sum up and
discuss some limitations.

Many definitions in scientific and mathematical text define {\em (partial) functions}. The standard inductive definitions of the Fibonacci numbers and factorials, and of the truth evaluation function of propositional and predicate logic are examples. Most inductive definitions of functions are definitions over a well-founded order.  \citet{journals/tcs/Hallnas91} defines and investigates a logic of inductive definitions of {\em (partial) functions}. In contrast, the logic that we define in this paper is to define sets.  

Sets are Boolean functions and vice versa, functions are particular sets. It would not be difficult to extend our definition logic to define functions. But for now, to express definitions of $n$-ary (partial) functions $F/n$, they  need to be translated to definitions of their $n{+}1$-ary graphs. Such  transformations are routine in logic and mathematics. We see no essential difference between a definition of a function and the definition of its graph. As such, we believe that our study covers  function definitions. Later this section, a small example is worked out. 

The definition formalism introduced here is built on first order logic (FO): it serves to define FO predicates and the definiens of
  a definitional rule is a FO-formula. This is for simplicity only. In fact, all concepts in Section 3, 4, and 5 and many of them later are defined semantically in terms of the satisfaction relation of FO, and their definitions readily extend to extensions of FO equipped with a satisfaction relation (e.g., to higher-order logic, aggregate expressions, \dots). Later in this section, such an example is worked out. 

\ignore{
It should be added that in this theoretical study,  the restriction to first order sets and FO bodies is  less important than it may seem. Indeed, when defining a higher order concept, we often may choose an FO vocabulary to represent objects that are intrinsically higher-order. E.g., the satisfaction relation is a higher  order concept (propositional structures are sets). Yet, using the appropriate vocabulary, we expressed it in our formalism as the formal definition $\D_{\models}$ introduced above. However, in a logic intended for practical use (specification, problem solving), one does not always have the freedom to switch from vocabulary. 
\bart{Ik ben niet echt een fan van voorgaande paragraaf. Ik zou hem laten vallen. Redenen:
* Hij roept extra vragen op: bijvoorbeeld: de techniek om FO encodering te gebruiken voor satisfaction relatie vereist dat ``alle structuren'' in je domein zitten. Maar de klasse van alle logische structuren (over een voc) is helemaal geen verzameling.
* We hebben net geargumenteerd dat onze theorie ook werkt voor uitbreidingen van FO, waarom nog argumenteren dat sommige van die uitbreidingen (met een truukje) toch als FO beschouwd kunnen worden?
}
}

As a third limitation, our study covers non-inductive definitions, monotone definitions, definitions over an induction order and iterated
  inductive definitions. These are likely the most frequent types of definitions found in mathematical
  text and knowledge representation but there are other  types of definitions. Not all informal definitions use the inductive constructions that we study here.  Some definitions might define mathematical objects by a specific construction process expressed in the definition. Some types of definitions use a different type of induction process. Examples are inflationary induction \cite{Moschovakis74a,focs/GurevichS85}
  (discussed in Section \ref{sec:formal:def}) and nested induction/coinduction
  \cite{journals/toplas/Sangiorgi09}. The latter principle is
  implemented in logics with nested least fixpoints such as FO(LFP) \cite{focs/GurevichS85}, $\mu$-calculus
  \cite{tcs/Kozen83} and fixpoint logics with nested least and greatest fixpoints
  \cite{concur/Bradfield96,phd/Hou10}.

\subsection{Definitional paradoxes}

One failed experiment suffices to refute a formal scientific theory but a refuted theory is not necessarily useless. E.g., Newtonian physics was soundly  refuted, but it is still by far the most used physics theory. Many useful formal scientific theories are known to be only approximations of reality and to fail on borderline cases. This is not necessarily an argument to reject a theory, but rather a challenge to develop an understanding of where the theory is sufficiently precise and where are the borderline cases where it becomes unreliable.

Such phenomena will arise also in our theory. In particular, not every
informal rule set constitutes a sensible informal definition. Some certainly do while others certainly don't. Where there is white and black, there is often also grey, and there is a grey zone between sensible and insensible definitions. We will argue that some quite famous ``definitions'' that emerged in philosophy belong to this grey or black zone: the definitional paradoxes.
\footnote{One aspect in which a formal study of inductive definitions is different than an empirical science is in the analysis of border cases of the system. In physics, Einstein discovered black holes by extrapolating relativity theory to its border cases; it turned out that these objects that had never been observed actually existed.  But, in a formal study of a cognitive ``reality''  like ours, extrapolating the formalism to cases beyond what is found in scientific and mathematical text,  ends up in the void: there is nothing there.
 } 
\ignore{\bart{Hmm... de ``grey zone'' wordt hier gebruikt voor ``slechte'' definities. Paradoxical definitions behoren tot de grey zone. Klinkt een beetje raar. Ik zou een definitie als ``we define the Berry number to be the smallest positive integer not definable in under ten words'' in de black zone classifieren. Deze definitie is niet sensible. Anderzijds komt de grey zone wel teveroorschijn bij definities zoals die van Truth, die op veel formules het juiste antwoord geeft, en op liar paradoxen undefined blijft... 

In elk geval klinkt het vreemd dat definities waarvan we weten dat ze slecht zijn (definitional paradoxes) tot de grey zone behoren. Wat is de black zone dan? Dingen die syntactisch niet in het formalisme passen?  }
\marc{Juist. Ik heb ``black'' toegevoegd aan de tekst.}}

\ignore{

Monotone definitions, and definitions by induction over an induction order occur frequently in formal sciences and mathematics, and we may safely assume that most readers of such texts understand them, in the sense that they at least subconsciously can reason correctly on the defined relation; e.g., they are able to compute parts of the defined relation whenever needed. It is  remarkable then that it is difficult to explain on the conscious level how the defined relations are to be obtained from the  definition. In many experiments, we noticed that even trained mathematicians often erroneously will agree that, e.g., the satisfaction relation is the least relation satisfhying the rules of the definition. Such observations are a strong motivation for a semantical study of the kind presented in this paper.

}

\subsection{The informal semantics of connectives}

The definition logic defined here is built on classical logic FO. 
Connectives and quantifiers in formal rule bodies are those of FO and they retain the informal interpretation they have in FO: 
\begin{itemize}
\item The methodology of expressing informal rules is based on the standard interpretation of the FO connectives and quantifiers. 
\item The evaluation of rules during the induction process is based on standard FO semantics.
\end{itemize}
Specifically, the meaning of the negation connective in definitional rules is standard objective negation, like negation in FO.

The only truly non-standard connective in the formalism is the rule operator $\rul$. A rule describes a step in  the induction process, a step that produces a defined fact. As such, rules of a definition were called {\em productions} by \citet{MartinLoef71}. Rules are not truth functional; it does not even make sense to consider them as propositions (that can be true or false). 

\subsection{More examples} Below, a few additional informal
definitions (experiments) are specified that belong to the class of
definitions that we study here. 

The linguistic style of expressing informal definitions as sets of cases  also applies to non-recursive definitions. The definition below defines  the symmetric closure $SG$ of a graph $G$ with two cases:
\[ \defin{\forall x \forall y( SG(x,y)\rul G(x,y))\\ \forall x \forall y( SG(x,y)\rul G(y,x))}\] 

A second example is in the context of the stock market. A company A {\em controls} a company B if the sum of the shares in B possessed by A or by any other company controlled by A, is more than 50\%. To express this inductive definition, we use the sum aggregate over set expressions. 
\[ \defin{\forall a \forall b(Contr(a,b) \rul Sum(\{(x,c) \mid Shares(c,b,x) \land (c=a \lor  Contr(a,c))\})  > 0.5)}\]
Here, $Contr(a,b)$ has the obvious meaning that $a$ controls $b$; $Shares(c,b,x)$ means that $c$ has $x$ shares in $b$.  The value of a sum term $Sum(\{(x,y)\mid\varphi\})$ in structure $\I$ is defined as $$\sum_{\{(d,d')\mid \I[x:d;y:d'] \models \varphi\}} d$$
The example definition is a monotone definition according to Definition~\ref{def:monotone} introduced later. It faithfully expresses the informal definition.

For a next example, the context is a (finite) transition structure
$\struct{S,\ra}$ with set of states $S$ and transition graph
$\ra$. We call a state $s\in S$  {\em terminating} if no infinite path
$s\ra s_1\ra s_2 \ra \dots$  in the transition graph exists. This concept can be defined through the following inductive definition.
\begin{definition}
A state $s\in S$ is {\em terminating} if for each transition $s\ra y$, $y$ is terminating.
\end{definition}

The base case of this definition is any state $x$ without outgoing edge. This monotone inductive definition involves a universal quantifier in the definiens. Using symbol $G/2$ to express the graph and $Term/1$ to express the set of terminating states, the faithful translation of the  definition  in our formalism is:
\[\defin{ \forall x (Term(x) \rul  \forall y (G(x,y) \mim Term(y))) }\]

As a second example, we define the {\em rank} of a terminating state  $s$ as the length of the longest  path $s\ra s_1\ra \dots \ra s_n$ through the transition graph. This is a partial function defined on terminating states.  It can be defined inductively, over the induction order $\prec$ that is  the transitive closure $\leftarrow^*$ of the inverse relation of $\ra$. On the set of terminating states, $\prec$  is a strict  well-founded order relation. 
\begin{definition} We define the rank of terminating states of $S$ by induction on $\prec$:
\begin{itemize}
\item   The rank of a terminating  state  $s \in L$  is the  least strict upperbound of the ranks of its successors in $G$.  
\end{itemize}
\end{definition} 
This is an informal definition of a partial function (since the rank of non-terminating states is not defined). We  define the graph of the rank function in  a typed variant of the definition logic:
\[ \defin{\forall x \forall r (Rank(x,r) \rul Term(x)\land \\
  \hspace{1cm}\forall y \forall r_1(G(x,y) \land Rank(y,r_1) \mim r>r_1) \land \\
  \hspace{1cm}\forall r_2( \forall y \forall r_1(G(x,y) \land
  Rank(y,r_1) \mim r_2>r_1) \mim r\leq r_2))}\] The definiens has three conditions: that $x$ is terminating, that $r$ is 
strictly larger than the rank $r_1$ of any successor $y$ of $x$, and thirdly, that if $r_2$ is strictly larger than the rank of any successor of
$x$, then $r$ is smaller than $r_2$. Hence, this definition faithfully
expresses the informal definition. 

This is a non-monotone definition, due to the negative occurrence of $Rank$ in the middle condition. The induction process builds up $Rank$ starting from the minimal elements of $\prec$ (states without outgoing edges) for which the rank is defined to be 0 (the smallest number), and then gradually for other terminal ranks further away from these base cases. The role of the induction order is essential to obtain a correct induction proces. E.g., in the initial state of the induction process, when $Rank$ is still empty, the three conditions of the rule hold for the  variable assignment $\{x=s, r=0\}$ with $s$ a terminating state. So untimely application of such a rule instance would derive $Rank(s)=0$ for any terminating state $s$.

In the extension of FO with a minimum aggregate, the definition can be stated equivalently as:
\[ \defin{\forall x \forall r (Rank(x,r) \rul Term(x)\land \\
  \hspace{1cm}r = Minimum(\{r_1+1 \mid \exists y(G(x,y) \land Rank(y,r_1))\}))}\] 
In a suitable extension of the definition logic, the body of this  rule would be equivalent with the FO body above and hence, in the theory defined below, these two formal definitions of $Rank$ would have the same semantical properties: the same induction order, the same induction processes and they define the same set.

Later in this paper, one more example of an informal definition will be given, an iterated inductive definition with mixed monotone and ordered induction.

\section{Formal definitions and natural inductions}
\label{sec:formal:def}

\subsection{Preliminaries}

We introduce the concepts and notations of syntax and semantics of first order logic. 

We assume an infinite supply of symbols. A symbol is either an object
symbol, a predicate symbol or a function symbol.  Predicate and function
symbols have a unique arity $n$, denoting the number of arguments. Object symbols correspond to function symbols of arity 0.

The {\em logical symbols} are $\Tr$ (true), $\Fa$ (false), the binary equality predicate $=$, connectives $\land, \lor, \neg$ and the quantifiers
$\exists, \forall$. Other symbols are called non-logical.

A {\em vocabulary} $\Sigma$ is a set of non-logical symbols. 


A {\em term} is built as usual: an object symbol is a term; if
$t_1,\dots,t_n$ are terms and $f$ an n-ary function, then
$f(t_1,\dots,t_n)$ is a term. {\em Atomic formulas} or {\em atoms} are
expressions of the form $P(t_1,\dots,t_n)$ with $P$ an $n$-ary
predicate symbol and $t_1,\dots, t_n$ terms. {\em Formulas} are built
from atomic formulas and the logical symbols in the usual way: atoms
are formulas, and if $x$ is an object symbol and $\varphi, \psi$ are
formulas, then $\neg\varphi, \varphi\land\psi, \varphi\lor\psi,
\forall x\ \varphi, \exists x\ \varphi$ are formulas. Terms and
formulas are called {\em expressions}. 

An occurrence of a symbol $\tau$ in an expression $\varphi$ is {\em
  free} if $\tau$ does not occur in a subformula $\exists \tau \psi$
or $\forall \tau \psi$ of $\varphi$. The set $free(\varphi)$ is
the set of all symbols that have a free occurrence in
$\varphi$. A {\em sentence} over $\Sigma$ is a formula $\varphi$  with
$free(\varphi)\subseteq\Sigma$. 

An occurrence of a subformula $\varphi$ in $\psi$ is called {\em
  positive} if it occurs in the scope of an even number of negations,
otherwise it is called {\em negative}. A formula $\varphi$ is {\em
  positive} with respect to a set $\sigma$ of predicate symbols if
there are no atoms $P(\ttt)$ with $P\in \sigma$ that have a negative
occurrence in $\varphi$.

\begin{definition} A {\em structure} $\I$ of vocabulary $\Sigma$ consist of
  a non-empty set $D^\I$ called the {\em domain} of $\I$ and an
  assignment of a value $\tau^\I$ to symbols $\tau\in\Sigma$, called
  the {\em interpretation} of $\tau$ in $\I$. The value $\tau^\I$ is 
  an element of $D^\I$ if
  $\tau$ is an object symbol, an $n$-ary relation over $D^\I$ if
  $\tau$ is an $n$-ary predicate symbol, and an $n$-ary function on
  $D^\I$ if $\tau$ is an $n$-ary function symbol. The interpretation of $=$ is
  the identity relation on $D^\I$.

  A structure $\I$ of $\Sigma$ is often called a $\Sigma$-structure.

The restriction of a $\Sigma$-structure $\I$ to some vocabulary $\Sigma'\subseteq\Sigma$ is denoted $\I|_{\Sigma'}$.
\end{definition}

While the value $P^\I$ of a predicate symbol is a set of $n$-tuples,
we sometimes use it as a Boolean function, more specifically as its
characteristic function. For an $n$-tuple $\aaa$ of
domain elements, $P^\I(\aaa)$ denotes $\Tr$ if $\aaa\in P^\I$, and
$\Fa$ otherwise.

\begin{definition}\label{def:truth}
We define the truth evaluation function of FO by structural induction, extending  $\I$ to arbitrary terms and sentences over $\Sigma$ using the standard inductive rules: 
\begin{itemize}
\item $f(t_1,\dots,t_n)^\I=f^\I({t_1}^\I,\dots,{t_n}^\I)$;  
\item $P(t_1,\dots,t_n)^\I=P^\I({t_1}^\I,\dots,{t_n}^\I)$;
\item  $(\neg\varphi)^\I=\boldsymbol{\neg} (\varphi^\I)$, 
\item  $(\varphi\land\psi)^\I=\varphi^\I\boldsymbol{\land} \psi^\I$,
\item  $(\varphi\lor\psi)^\I=\varphi^\I\boldsymbol{\lor} \psi^\I$, with $\boldsymbol{\neg, \land, \lor}$ representing the standard boolean functions;
\item $(\forall x \varphi)^\I = minimum_{\leqt}\{\varphi^{\I[x:d]} \mid d\in D^\I\}$, where $\I[x:d]$ is the structure identical to $\I$ except that $x^\I=d$ and  $\leq_t$ is the  {\em truth order}  defined by $\Fa<_t\Tr$; 
\item $(\exists x \varphi)^\I = maximum_{\leqt}\{\varphi^{\I[x:d]} \mid d\in D^\I\}$. 
\end{itemize}
\end{definition}

Let $D$ be a non-empty set. A {\em domain atom} of $D$ is a pair
$(P,\aaa)$ with $P/n$ a predicate symbol and $\aaa\in D^n$. Abusing
notation, we write domain atoms as atoms $\Pa, \Qb$. We use $A, B, C$
as mathematical variables for domain atoms. A {\em domain literal} is
a domain atom $A$ or its negation $\neg A$. Domain literals are
denoted as $L, L'$. Given a structure $\I$ with domain $D$, we define
$\Pa^\I = P^\I(\aaa)$. For a given set $\sigma$ of predicate symbols,
we denote the set of domain atoms with predicates in $\sigma$ by
$\domat{\sigma}{D}$.

For two structures $\I, \J$ interpreting the same vocabulary, with the same domain and interpretation of all object and function symbols, we write $\I\leqt\J$ if for every domain atom $A$, $A^\I\leqt A^\J$.

\subsection{Formalization of informal definitions: syntax and the induction process}

In this section, we formalize the syntax of the logic, when rules apply, when a definition is saturated and the induction process. 

\begin{definition} 
A (formal) \emph{definition} over $\Sigma$ is a set of definition rules of the
form $$\forall \xxx\ (P(\bar{t}) \rul \phi)$$ where $\phi$ is a FO
formula and $P(\bar{t})$ is an atomic formula over $\Sigma$ such that
$P$ is not the equality predicate $=$.
\end{definition}
We call $P(\bar{t})$ the {\em head} of the rule or the {\em definiendum}, and $\phi$ the {\em body} or the {\em definiens}. The
connective $\rul$ is called \emph{definitional implication}. It should be distinguished from material implication. Note that this formal notion of definition does not (yet) include an induction order.

A predicate appearing in the head of a rule of a
definition $\Delta$ is called a \emph{defined predicate} of $\D$; all
other non-logical symbols with free occurrences in $\D$ are called its {\em parameters}.
The sets of defined predicates and parameters of $\D$ are denoted by
$\defp{\D}$ and $\pars{\D}$, respectively. Below, a domain atom $\Pa$
of a defined predicate of $\D$ is called a \emph{defined domain
  atom}. For simplicity, we assume that every rule is of the form
$\forall \xxx\ (P(\xxx) \rul \phi)$ where $\xxx$ is a tuple of
distinct variables. Other rules $\forall \xxx\ (P(\bar{t}) \rul \phi)$
are seen as shorthands for $\forall \bar{y}\ (P(\bar{y}) \rul \exists
\xxx (\bar{y}=\bar{t} \land \phi))$.

An informal rule-based definition is formally represented by choosing
a suitable vocabulary $\Sigma$, and translating its informal rules
rule by rule with heads representing the definiendum and bodies
expressing the definiens of the informal rule. The formal definitions
$\D_{\models}$ and $\D_{TC}$ were obtained this way from
Definition~\ref{def:sat}, respectively Definition~\ref{def:TC}. 

We will define the set defined by such formal definitions by
formalizing the induction process as we discussed it above: the
construction process proceeds by iterated rule application (along an induction
order if there is one). As such, rules in this formalism are similar
in nature to and generalize {\em productions} as defined by
\citet{MartinLoef71}. 

To formally define the induction process, a few auxiliary concepts are needed. First, an informal definition is always evaluated in a context of specific values for the parameters. Likewise, a formal definition is always evaluated in a {\em context structure}, which  provides values for the parameter symbols.  
\begin{definition}\label{def:context}  We call a $\pars{\D}$-structure $\OO$  a {\em context structure} of $\D$.
\end{definition}
Given a $\defp{\D}$-structure $\I$ and a context structure $\OO$ with the same domain as $\I$, we write
$\OO\circ\I$ to denote the structure $\J$ such that  $\J|_{\pars{\D}}=\OO$ and $\J|_{\defp{\D}}=\I$. 

The following examples expose the context structures underlying the formal definitions $\D_{TC}$ and $\D_{\models}$. At the same time, we verify that these definitions faithfully express the corresponding informal definitions.

\begin{example} 
The context structure $\OO$ for defining the reachability relation of graph $\graph$ on the set of vertices $V$, has domain $V$ and $G^\OO=\graph$. 

Combinations of graphs $\graph$ and  $\reach$ on some domain $V$ correspond one-to-one to  $\voc$-structures with domain $V$, $G^\I=\graph$ and $R^\I=\reach$. Under this correspondence,  we can verify that $\D_{TC}$ faithfully expresses Definition~\ref{def:TC}. E.g., for the  inductive rule, we need to verify that in the context of arbitrary graphs  $\graph$ and  $\reach$ on domain $V$ and corresponding $\I$, that for every pair $d, e\in V$, $(d,e)\in \reach$ if and only if $\I[x:d][y:e]\models R(x,y)$ and that  there exists a vertex $f$ such that $(d,f), (f,e)\in\reach$ if and only if $$\I[x:d][y:e]\models \exists z (R(x,z)\land R(z,y)).$$
This follows straight from the formal definition of FO's satisfaction relation. A similar  argument holds for the base rule. Thus,  $\D_{TC}$ faithfully expresses Definition~\ref{def:TC}. 
\end{example}

\newcommand{\satvoc}{\xi}
\begin{example} \label{ex:sat}
  The definition $\Delta_{\models}$ of $Sat$ formalizes the informal
  Definition~\ref{def:sat} of satisfaction. Its parameter symbols are
  $Atom, In, And, Or$ and $Not$. We view $\Delta_{\models}$ as a
  many-sorted definition, with one sort for structures and another for formulas.
  For any  propositional vocabulary $\satvoc$, we define $PropF(\satvoc)$ as the set of
  propositional formulas over $\satvoc$ and $Struct(\satvoc)$ as the set of
  propositional $\satvoc$-structures. 
   A propositional vocabulary $\satvoc$  induces the context structure  $\OO$ which is  the sorted $\pars{\D}$-structure defined as follows:
\begin{itemize}
\item $D^{\OO}$ consists of two sort domains $PropF(\satvoc)$ and $Struct(\satvoc)$. 
\item $And^{\OO}$ is the function that maps pairs of formulas
  $(\psi,\phi)$ to the formula $\psi\land\phi$. The functions $Or^{\OO}$
  and $Not^{\OO}$ are defined in a similar vein.
\item Finally, $In^{\OO}$ is $\{(I,p) \mid I\in Struct(\satvoc), p\in I\}$.
\end{itemize}
As an example, two defined domain atoms for $\satvoc=\{P\}$ are   $Sat(\{P\},P)$ and $Sat(\{\}, P\land \neg P)$. The value $t^{\OO}$ of the term $t = And(P,Not(P))$ is the formula $P\land\neg P$.

Given $\OO$, there is an obvious correspondence between binary relations between structures and formulas of $\satvoc$ and $\defp{\D_{\models}}$-structures.  Given this correspondence, it is a simple  exercise  to prove that the formal rules of $\D_{\models}$ are faithful formalisations of the corresponding informal rules. Hence, $\D_{\models}$ faithfully expresses Definition~\ref{def:sat}. 

\end{example}

From here till the end of this section, we assume the presence of a
definition $\D$ and a context structure $\OO$ for $\D$ with domain
$D$. 
In the sequel, we frequently evaluate formulas
with respect to structures $\OO\circ\I$. Because $\OO$ is given and
fixed, we take the liberty to write only the ``variable'' part and
write, e.g., $\I\models\varphi$ instead of $\OO\circ\I\models\varphi$,
or $A^\I$ instead of $A^{\OO\circ\I}$, etcetera.

The next definitions are formalizations of the concept of an element being  {\em derivable } from a definition, and a set  being   {\em closed} or {\em saturated}  under a definition.
\begin{definition}\label{def:derivable}  We say that a defined domain atom $\Pa$ is \emph{derivable} by  rule $\forall
  \xxx(P(\xxx)\rul\varphi)$ from $\I$ if
  $\varphi^{\I[\xxx:\aaa]}=\Tr$.

  We say that $\Pa$ is \emph{derivable} from $\I$  (by $\D$) if it is
  derivable by a rule of $\D$ from $\I$. Below, we denote this by
  $\I\der\Pa$.
\end{definition}

\begin{definition}
We say that $\I$ is \emph{closed} (or \emph{saturated})  on a set $S$ of defined domain atoms (under $\D$  in $\OO$) if for every $A\in S$,  $\I\der A$ implies  $A^\I=\Tr$. We say that $\I$ is {\em closed} (or \emph{saturated}) (under $\D$ in $\OO$) if it is closed on $\domat{\defp{\D}}{D}$ (under $\D$ in $\OO$). 
\end{definition}

We observe that the set of $\defp{\D}$-structures with domain $D^\OO$
is isomorphic with the powerset of $\domat{\defp{\D}}{D}$, where the
isomorphism maps such a structure $\I$ to the set $\{A
\in\domat{\defp{\D}}{D} \mid A^\I=\Tr\}$. Under this isomorphism, the
$\leqt$-least structure corresponds to the empty set, and the truth
order $\leqt$ on structures corresponds to the subset relation $\subseteq$. We find it
convenient to exploit this isomorphism to apply standard set theoretic
operations on $\defp{\D}$-structures; e.g., denoting structures as
sets of defined domain atoms, writing $A\in \I$ instead of $A^\I=\Tr$,
or $\I\setminus \I'$ to denote the structure that interprets each predicate symbol $P\in\defp{\D}$ as $P^\I\setminus P^{\I'}$. 

The following examples are the simplest sensible non-monotone definitions that we know of. They will be used as running examples through this text.
\begin{example} \label{ex:neg:formulas} 
Let us take the instance of Definition~\ref{def:sat} of propositional satisfaction obtained by fixing $\satvoc=\{P\}$, by fixing the structure $I$ to be $\{P\}$, and by limiting the formulas to those that use only the negation symbol. That is, the formulas are $P, \neg P, \neg\neg P, \dots, (\neg)^nP,\dots$. This instantiates the definition to 
\begin{itemize}
\item $\{P\}\models P$
\item $\{P\}\models \neg \varphi$ if $\{P\}\not\models\varphi$.
\end{itemize}
Obviously, the formulas defined to be true here are of the form $(\neg)^{2n}P$ with an even number of negations. 
\end{example} 


The above definition is further simplified by transposing it to the natural numbers.  
\begin{definition}\label{def:even}
The set of even numbers is defined by induction on the standard order of natural numbers: 
\begin{itemize}
\item $0$ is even; 
\item $n{+}1$ is even if $n$ is not even.
\end{itemize} 
\end{definition}

Note the correspondence with Example~\ref{ex:neg:formulas}. This is not a common way of defining even numbers (there are much simpler ways) but it is a sensible way nevertheless. 
\begin{example}\label{ex:even}
Definition \ref{def:even} is faithfully expressed in the definition formalism as follows:
\begin{equation} \D_{ev}=
\defin{ Even(0)\rul\Tr\\
\forall x (Even(x+1) \rul \neg Even(x))
}
\label{eq:even}
\end{equation}
The  context structure denoted $\OO_{ev}$ is the structure of the natural numbers, with the standard interpretation of $0, 1$ and $+$.  
\end{example}
%

In the following example, it is demonstrated that the non-constructive characterisation of the defined set of an inductive definition, as the least set closed under application of the rules, does not work for non-monotone definitions.

\begin{example} \label{ex:even:1} 
Let us verify that the non-constructive characterisation of the defined set of a definition  does not work for $\D_{ev}$.  Consider the following sets (for succinctness, we abbreviate  $Even$   to $Ev$): 
\[ \begin{array}{c} 
\{ Ev(0), Ev(2), Ev(4),\dots  \}\\
\{ Ev(0), Ev(1), Ev(3), \dots\}\\
\{ Ev(0), Ev(2), Ev(3), Ev(5), \dots\}\\
\dots\\
\{ Ev(0), Ev(2), \dots, Ev(2n), Ev(2n{+}1), Ev(2n{+}3),\dots \}\\ 
\dots\\
\end{array}
\]
Each of these sets represents a structure   closed under $\D_{ev}$. None has a strict subset that is closed under $\D_{ev}$, hence each of them is minimal. Consequently, there is no least closed set. Thus, the defined set of this definition is not the least set closed under the rules. A similar phenomenon arises for the satisfaction Definition~\ref{def:sat}.
\end{example}

Now we formalize the main  concept of this paper: the induction  process. 
\begin{definition}\label{def:nat:ind}
  A {\em natural induction} $\NI$ of $\D$ in $\OO$ (with domain $D$) is a $\leqt$-increasing sequence $(\I_\alpha)_{0\leq\alpha\leq\beta}$ of
  $\defp{\D}$-structures with domain $D$ such that:
\begin{itemize}
\item $\I_0$ is the empty structure $\emptyset$. 
\item For each successor ordinal $i+1\leq\beta$, for each domain atom $A\in \I_{i+1} \setminus \I_i$, $A$ is derivable from $\D$ in $\I_i$ ($\I_i\der A$). We say that $A$ is derived at $i$ and define  $\rankN{A}:= i$, the stage  of $A$ in $\NI$.
\item For each limit ordinal $\lambda\leq\beta$, $\I_\lambda = \bigcup_{\alpha
    < \lambda} \I_\alpha$.
\end{itemize}
We call $\beta$ the length of $\NI$, and denote $\I_\beta$ as $\limit{\NI}$.
\end{definition}

\begin{definition} A natural induction is called {\em terminal} if $\I_\beta$ is closed under $\D$ (in $\OO$).
\end{definition}
Natural inductions will be denoted compactly as a sequence of the
(disjoint) sets of atoms that are derived at each step.  For instance, 
\[  \ra \{A_1,\dots,A_n\} \ra \{B_1,\dots,B_m\} \ra  \dots \] 
derives the $A_i$'s in step 1 and the $B_j$'s in step 2. If such a set is a singleton we drop the brackets.

\begin{example}\label{ex:TC}
  Consider the formal definition $\Delta_{TC}$ formalizing the
  transitive closure Definition~\ref{def:TC}. Take context structure $\OO$ such
  that $D^\OO=\{a,b,c\}$, $G^\OO=\{(a,a), (b,c), (c,b)\}$. All terminal
 natural inductions converge to  $\{ (a,a), (b,c), (c,b), (b,b),(c,c)\}$. For instance, the following are three different natural inductions that converge to this set:
\[{ \ra T(a,a) \ra T(b,c) \ra T(c,b) \ra T(b,b) \ra T(c,c)}\]
\[ \ra \{T(c,b), T(b,c) \} \ra T(c,c) \ra T(b,b) \ra T(a,a) \] 
\[ \ra \{T(a,a), T(b,c), T(c,b)\} \ra \{T(c,c), T(b,b)\}\]
The third one is the most eager induction in the sense that it applies at each stage every applicable rule. Such a natural induction corresponds to the fixpoint computation of the operator associated with $\Delta_{TC}$. This operator will be defined below.
\end{example} 


We want to link natural inductions of {\em monotone} definitions with the more standard operator-based formalization of the induction process. 

\begin{definition}\label{def:monotone} We call $\D$ \emph{monotone} in $\OO$ if for all pairs of
  $\defp{\D}$-structures $\I\subseteq \J$, for all defined domain
  atoms $A$, if $\I\der A$ then $\J\der A$.
\end{definition}

We now show that the concept of natural induction
generalizes the existing operator-based formalizations of the induction process used, e.g., by \citet{Moschovakis74} and \citet{Aczel77}.  Translated to our context, the induction process for a (monotone) definition $\D$ in $\OO$  is
formalized as the (possibly transfinite) least fixpoint construction $\emptyset, \Gamma(\emptyset),
\Gamma^2(\emptyset),\dots$ of the (monotone) operator $\Gamma$ associated with $\D$ in $\OO$.

\begin{definition} \label{def:operator} The operator $\Gamma_\D^\OO$ of $\D$ in context $\OO$ is the operator of $\defp{\D}$-structures with domain $D^\OO$ such that $\Gamma_\D^\OO(\I)=\{ A\in \domat{\defp{\D}}{D} \mid \I \der A \}$.
\end{definition}
Clearly, $\D$ is monotone in $\OO$ if and only if $\Gamma_\D^\OO$ is a monotone operator.

The least fixpoint construction of this operator is the (potentially transfinite) sequence:
\[ \struct{\I_\alpha}_{0\leq\alpha\leq\beta} \] where $\I_{0}=\emptyset,
\I_{\alpha+1} = \Gamma_\D^\OO(\I_\alpha)$, 
$\I_\lambda=\cup_{\alpha<\lambda}\I_\alpha$ for limit ordinals
$\lambda$, and $\I_\beta$ is a fixpoint of $\Gamma_\D^\OO$.

It follows from Tarski's least fixpoint theorem that if $\Gamma_\D^\OO$ is
monotone, this sequence is monotonically increasing and 
converges to the least fixpoint of $\Gamma_\D^\OO$. It is obvious as
well that in this case, the least fixpoint construction is a special
case of a natural induction; in particular, it is the most eager
natural induction, the one  in which a defined domain atom is
derived as soon as it is derivable.

\begin{corollary}\label{cor:fix:mon} If $\D$ is monotone in $\OO$, the least fixpoint construction of $\Gamma_\D^\OO$  is a natural induction.
\end{corollary}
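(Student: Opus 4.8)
The plan is to verify directly that the least fixpoint construction $\struct{\I_\alpha}_{0\leq\alpha\leq\beta}$ of $\Gamma_\D^\OO$ satisfies each of the three defining clauses of a natural induction (Definition~\ref{def:nat:ind}). Since $\D$ is monotone in $\OO$, the associated operator $\Gamma_\D^\OO$ is monotone, as observed just above the statement. Hence Tarski's least fixpoint theorem applies and guarantees that the construction is a well-defined, $\leqt$-increasing, possibly transfinite sequence converging to the least fixpoint of $\Gamma_\D^\OO$. This immediately establishes the increasingness requirement in the definition of a natural induction, so the construction is a sequence of the right shape.

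For the base clause, $\I_0 = \emptyset$ holds by the very definition of the least fixpoint construction. For the limit clause, the construction sets $\I_\lambda = \bigcup_{\alpha<\lambda}\I_\alpha$ at every limit ordinal $\lambda$, which is exactly the limit condition required of a natural induction. Both of these clauses therefore hold verbatim, with nothing further to check.

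The only clause requiring an argument is the successor clause, and here I would appeal directly to the definition of the operator: for a successor ordinal $i+1\leq\beta$ we have $\I_{i+1} = \Gamma_\D^\OO(\I_i) = \{A \in \domat{\defp{\D}}{D} \mid \I_i \der A\}$ by Definition~\ref{def:operator}. Consequently every $A \in \I_{i+1}$ satisfies $\I_i \der A$, and in particular this holds for every $A \in \I_{i+1}\setminus\I_i$, which is precisely what the successor clause demands. This completes the verification that the construction meets all three clauses of Definition~\ref{def:nat:ind}.

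I expect no genuine obstacle here: each clause is either supplied by Tarski's theorem (increasingness and convergence), holds by the definition of the construction itself (the base and limit clauses), or follows immediately from unfolding the definition of $\Gamma_\D^\OO$ (the successor clause). The substantive content is conceptual rather than technical, namely that the least fixpoint construction is exactly the \emph{most eager} natural induction, in which each defined domain atom is added to the sequence precisely at the first stage where it becomes derivable.
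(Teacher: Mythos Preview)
Your proposal is correct and takes essentially the same approach as the paper: the paper does not give a separate proof of this corollary, merely noting in the paragraph preceding it that ``it is obvious'' that the least fixpoint construction is a natural induction (indeed the most eager one), relying on Tarski's theorem for increasingness. Your write-up simply spells out explicitly the clause-by-clause verification that the paper leaves to the reader.
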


An example of a monotone definition is the running example $\D_{TC}$
from Section~\ref{sec:intro}. It is a {\em positive} definition, one
in which every occurrence of a defined predicate in a rule body is
positive. Positive definitions are monotone in every context structure.

\begin{example} A monotone definition need not be positive. Three such definitions are $\{P\rul P\lor\neg P\}$ and $\{P\rul P\land \neg P\}$ and $\{P\rul Q\lor(Q\land\neg P)\}$. 
\end{example} 

It follows from Corollary \ref{cor:fix:mon} that the concept of a natural
induction generalizes the least fixpoint construction. However, while the least fixpoint construction is a unique construction, a striking
property of natural inductions  is that that they are  highly {\em
  non-deterministic}: rules can be applied in many orders, each order
yielding a different natural induction.  This matches our
intuitive understanding that for an informal definition, there are in general
many ways to construct the defined set.  This is often advantageous,
e.g., it allows us to pick the induction process that suits best our
needs. However, there is a danger as well.  From a practical point of
view, it is {\em all-important} that different induction
sequences converge to the same fixpoint, otherwise the definition
would be ambiguous!

For a monotone informal definition such as Definition~\ref{def:TC},
the order of rule application is not important, because all sequences
converge to the intended set, which is the least relation that is
closed under the rules. In the above framework of natural induction
sequences, this can be proven formally.

\begin{proposition} \label{prop:mono:convergence} Each terminal natural induction of a monotone definition $\D$ in $\OO$ converges to the least $\defp{\D}$-structure $\I$ that  is closed under  $\D$ in $\OO$. 
\end{proposition}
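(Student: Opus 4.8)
The plan is first to pin down the object we are converging to. A structure $\I$ is closed under $\D$ in $\OO$ exactly when it is a pre-fixpoint of the operator $\Gamma_\D^\OO$ of Definition~\ref{def:operator}, i.e.\ $\Gamma_\D^\OO(\I)\subseteq\I$, since the defining condition ``$\I\der A$ implies $A^\I=\Tr$'' is precisely $\Gamma_\D^\OO(\I)\subseteq\I$ read set-theoretically. As $\D$ is monotone in $\OO$, the operator $\Gamma_\D^\OO$ is monotone (by the remark following Definition~\ref{def:operator}), so by Tarski's least fixpoint theorem it has a least pre-fixpoint, which coincides with its least fixpoint; denote this structure $\mathbf{L}$. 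Then $\mathbf{L}$ is the least $\defp{\D}$-structure closed under $\D$ in $\OO$, and the task reduces to showing that every terminal natural induction $\NI=(\I_\alpha)_{0\le\alpha\le\beta}$ satisfies $\limit{\NI}=\I_\beta=\mathbf{L}$.

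For the inclusion $\I_\beta\subseteq\mathbf{L}$, I would prove by transfinite induction on $\alpha\le\beta$ that $\I_\alpha\subseteq\mathbf{L}$. The base case $\I_0=\emptyset\subseteq\mathbf{L}$ is immediate, and limit stages follow because a union of subsets of $\mathbf{L}$ is again a subset of $\mathbf{L}$, matching the clause $\I_\lambda=\bigcup_{\alpha<\lambda}\I_\alpha$ of Definition~\ref{def:nat:ind}. The successor step is the heart of the argument: if $\I_\alpha\subseteq\mathbf{L}$ and $A\in\I_{\alpha+1}\setminus\I_\alpha$, then by the definition of a natural induction $\I_\alpha\der A$; monotonicity of $\D$ (Definition~\ref{def:monotone}) applied to $\I_\alpha\subseteq\mathbf{L}$ gives $\mathbf{L}\der A$, and since $\mathbf{L}$ is closed this forces $A\in\mathbf{L}$. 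Hence $\I_{\alpha+1}\subseteq\mathbf{L}$, completing the induction and yielding $\I_\beta\subseteq\mathbf{L}$.

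The reverse inclusion is where terminality enters: because $\NI$ is terminal, $\I_\beta$ is itself closed under $\D$ in $\OO$, and since $\mathbf{L}$ is the \emph{least} closed structure we obtain $\mathbf{L}\subseteq\I_\beta$. Combining the two inclusions gives $\I_\beta=\mathbf{L}$, so every terminal natural induction converges to the least closed structure, as claimed. I expect the only genuinely non-routine point to be the successor step, where monotonicity is exactly what lets a derivation available from the small structure $\I_\alpha$ be transported to the candidate limit $\mathbf{L}$; the base case, the limit case, and the second inclusion are bookkeeping against Definition~\ref{def:nat:ind} and the notion of closedness.
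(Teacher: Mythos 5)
Your proof is correct, but it follows a genuinely different route from the paper. The paper gives no direct argument at all: it remarks that the proposition "is not difficult to prove" and derives it as a special case of Theorem~\ref{theo:iterated:convergence}, the general confluence theorem for iterated inductive definitions. That derivation works by observing (Proposition~\ref{prop:mono-orders}) that a monotone definition is an iterated inductive definition over the total dependency relation $\dep_t$, which every natural induction trivially respects, and then invoking the safe-natural-induction machinery of Sections~\ref{sec:safe} and~\ref{sec:analysis}. Your argument is instead self-contained and elementary: you identify closed structures with pre-fixpoints of $\Gamma_\D^\OO$, get the least one $\mathbf{L}$ from Tarski, prove $\I_\beta\subseteq\mathbf{L}$ by transfinite induction (with monotonicity transporting derivability from $\I_\alpha$ to $\mathbf{L}$ at successor steps), and get $\mathbf{L}\subseteq\I_\beta$ from terminality plus leastness. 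Each approach buys something. Yours is shorter, needs none of the later machinery, and directly delivers the identification of the limit as the \emph{least} closed structure --- a point the paper's deferral actually leaves slightly implicit, since Theorem~\ref{theo:iterated:convergence} by itself only yields that all terminal natural inductions share a common limit (the safely defined structure); pinning that limit down as the least closed structure requires either an argument like yours or the later propositions on fixpoint minimality combined with Tarski. The paper's route, on the other hand, buys uniformity: one theorem simultaneously covers monotone, ordered, and iterated definitions, so the monotone case needs no separate proof. Your transfinite induction is also exactly the kind of argument the authors presumably had in mind with "not difficult to prove," and it is complete as written.
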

This proposition is not difficult to prove but also follows from the general
Theorem~\ref{theo:iterated:convergence} below.

When the operator $\Gamma_\D^\OO$ is not monotone, the least fixpoint
construction of its operator may not converge. For such operators,
\citet{Moschovakis74a} defined the {\em inflationary fixpoint
  construction} which is defined similarly except that $\I_{\alpha+1}
= \I_\alpha\cup \Gamma_\D^\OO(\I_\alpha)$. Hence, once a defined
domain atom is derived, it remains derived. Consequently, the
inflationary construction yields a monotonically increasing sequence
and eventually reaches a limit, called the {\em inflationary
  fixpoint}. For monotone operators, this construction coincides with
the standard one, and the inflationary fixpoint is the least fixpoint.

Again, it is obvious that the inflationary fixpoint construction is
the most eager natural induction of $\D$ in $\OO$, the one that derives a defined domain atom as soon as it is derivable. 

\begin{corollary}\label{cor:fix:nonmon} For every definition $\D$ and context structure $\OO$, the inflationary  fixpoint construction of $\Gamma_\D^\OO$  is a natural induction.
\end{corollary}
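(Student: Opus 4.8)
The plan is to verify directly that the inflationary fixpoint construction $\struct{\I_\alpha}_{0\leq\alpha\leq\beta}$, with $\I_0=\emptyset$, $\I_{\alpha+1}=\I_\alpha\cup\Gamma_\D^\OO(\I_\alpha)$, and $\I_\lambda=\bigcup_{\alpha<\lambda}\I_\alpha$ at limit ordinals, satisfies the three conditions in the definition of a natural induction (Definition~\ref{def:nat:ind}). Two of these are immediate: the construction starts from $\I_0=\emptyset$ by definition, and it takes unions at limit ordinals by definition, matching the limit-ordinal clause verbatim.

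The remaining obligations are that the sequence be $\leqt$-increasing and that it satisfy the successor clause. For the increasing property I would argue by transfinite induction on $\alpha\leq\beta$ that $\I_{\alpha'}\subseteq\I_\alpha$ for all $\alpha'\leq\alpha$. At a successor step we have $\I_{\alpha+1}=\I_\alpha\cup\Gamma_\D^\OO(\I_\alpha)\supseteq\I_\alpha$, and combining this with the induction hypothesis yields $\I_{\alpha'}\subseteq\I_{\alpha+1}$ for all $\alpha'\leq\alpha+1$; at a limit step the claim holds because $\I_\lambda$ is by construction the union of all earlier structures. Recalling the isomorphism noted after the closedness definition, under which $\subseteq$ on $\defp{\D}$-structures coincides with $\leqt$, this establishes the increasing condition.

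For the successor clause, fix a successor ordinal $i+1\leq\beta$ and a domain atom $A\in\I_{i+1}\setminus\I_i$. Since $\I_{i+1}=\I_i\cup\Gamma_\D^\OO(\I_i)$ and $A\notin\I_i$, we must have $A\in\Gamma_\D^\OO(\I_i)$. By Definition~\ref{def:operator}, $\Gamma_\D^\OO(\I_i)=\{A\in\domat{\defp{\D}}{D}\mid \I_i\der A\}$, so the membership $A\in\Gamma_\D^\OO(\I_i)$ is precisely the statement $\I_i\der A$ required by the successor clause. This completes the verification.

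The argument is a routine unfolding of definitions rather than a deep result; the one point that deserves attention---and the reason an \emph{inflationary} (rather than plain least-fixpoint) construction is needed here---is the increasing condition. When $\Gamma_\D^\OO$ is not monotone, the plain sequence $\emptyset,\Gamma_\D^\OO(\emptyset),(\Gamma_\D^\OO)^2(\emptyset),\dots$ need not be $\leqt$-increasing and would therefore fail to be a natural induction; taking the union $\I_\alpha\cup\Gamma_\D^\OO(\I_\alpha)$ at each successor step is exactly what repairs monotonicity while leaving the successor clause intact, since every newly added atom still lies in $\Gamma_\D^\OO(\I_i)$ and is hence derivable from $\I_i$. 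This is the only place where the proof departs from that of Corollary~\ref{cor:fix:mon}.
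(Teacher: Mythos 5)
Your proof is correct and takes the only natural route: the paper itself offers no explicit argument for this corollary, merely remarking that it is ``obvious'' that the inflationary construction is the most eager natural induction, and your verification of the three clauses of Definition~\ref{def:nat:ind} (with $A\in\I_{i+1}\setminus\I_i$ forcing $A\in\Gamma_\D^\OO(\I_i)$, i.e.\ $\I_i\der A$) is exactly the routine unfolding the paper suppresses. Your closing observation---that the union in the successor step is what rescues the $\leqt$-increasing requirement when $\Gamma_\D^\OO$ is non-monotone---is also the right way to see why this corollary, unlike Corollary~\ref{cor:fix:mon}, needs no monotonicity hypothesis.
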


However, the convergence property does not hold for non-monotone
definitions. In general, many natural inductions converge to different
sets. The problem is that the body of a non-monotone rule may
eventually become false, after it has already been true. Natural
inductions that apply a rule during the ``window'' where its body
holds will derive its head, whereas natural inductions that miss this
window may not.
\begin{example}[Continuation of Example~\ref{ex:even}] \label{ex:even:wrong:order} Consider definition $\Delta_{ev}$ in the  context structure $\OO_{ev}$ of the natural numbers. The following  is a  natural induction:  
\[ \ra Ev(1) \ra Ev(0) \ra Ev(3)\ra Ev(5)\ra Ev(7) \ra\dots\] 
Indeed,  in the first step when $\I_0=\emptyset$, all instances of the rule for $Ev(x+1)$ are applicable. Here, we use it to derive $Ev(1)$. The next step applies the base rule to derive $Ev(0)$, which falsifies the condition of the rule that was applied in the first step. Next, we derive $Ev(3), Ev(5),\ldots$. 
The limit of this natural induction is one of the unintended minimal closed sets  from Example \ref{ex:even:1}.  The { inflationary fixpoint construction} is the  terminal natural induction that converges  in one step and derives evenness of all numbers:
\[ \ra \{ Ev(n) \mid n\in\natnrs\} \]
\end{example}

\begin{example}[Continuation of Example~\ref{ex:sat}] \label{ex:sat:wrong:order} Consider the informal Definition~\ref{def:sat} and its formalization  $\Delta_{\models}$ in the context structure of the structure $\OO$ for the singleton vocabulary $\satvoc=\{P\}$. There are only two structures for the vocabulary $\satvoc$, namely, $\emptyset$ and $\{P\}$. Below is an initial segment of a  natural induction that derives an erroneous fact.  
\[ \ra Sat(\{P\},\neg P) \ra Sat(\{P\},P) \ra \dots\] 
In the first step, with $\I_0=\emptyset$, all instances of the rule for negation are applicable. Here, we use it to derive $Sat(\{P\},\neg P)$. However, the next step applies the base rule to derive $Sat(\{P\},P)$, thus falsifying the condition of the rule that was applied in the first step. 

Likewise, the first step  of the  inflationary fixpoint construction derives all domain atoms $Sat(I,\neg\varphi)$, many of which are erroneous. This natural induction violates  the induction order of Definition~\ref{def:sat}  and  is not one of its   intended induction processes.  
\end{example}

The above discussion illuminates what, in our opinion, is the essential
role of the induction order in informal definitions. In definitions by induction  over an induction order, the  induction order serves to  constrain the  induction processes to ensure  convergence. It does so by {\em delaying the application of rules until it is safe to do so}, that is,
until later rule applications can no longer falsify the premise of a
rule that has been applied before.

\subsection{Formalization of definitions by induction over a well-founded order}

We now formally define the notion of definition by induction over a
well-founded order and its natural inductions. 


In particular, in a context $\OO$, we are interested in pairs $(\D,\order)$ with $\D$ a definition and $\order$ a strict well-founded order on $\domat{\defp{\D}}{D}$, referred to as the \emph{induction order}. 
Recall that a strict order is irreflexive, transitive and asymmetric. A strict order $\order$ is well-founded if it has no infinite descending chains $x_0 \revorder x_1 \revorder x_2 \revorder \dots$. 

The following example illustrates how to formalize the induction order of an informal definition.  

\begin{example} \label{ex:ev:order}
The induction order of Definition~\ref{def:even} of even numbers is the standard order on the natural numbers. Its formalization is the order  $\{ Ev(n)\order Ev(m) \mid n<m\}$. We denote it as $\order_{ev}$.
\end{example}

\begin{example} \label{ex:sat:order} Let us consider informal Definition~\ref{def:sat} and its formalization, the  formal definition $\Delta_{\models}$ in context structure $\OO$ for a selected propositional vocabulary $\satvoc$. 

The induction order of informal Definition~\ref{def:sat} is the subformula order. The first formalization of this order that comes to mind is the strict well-founded order $\order$ on domain atoms defined by $Sat(I,\psi) \order Sat(J,\phi)$  if   $\psi$ is a strict subformula of $\phi$. According to this order, to derive  satisfaction of a formula in $J$, one first needs to determine the satisfaction of its subformulas  in {\em each and  every} structure $I$.  Clearly, it suffices to  determine their satisfaction in the  structure $J$. This effect can be obtained by refining the induction order such that $Sat(I,\psi) \order Sat(J,\phi)$  if  $I=J$ and  $\psi$ is a strict subformula of $\phi$. We call this the formal subformula order and denote it as $\order_{\models}$.
\end{example} 

The induction order provided with an informal definition serves to
constrain the order of rule application in natural inductions. How
does this work? Intuition says that no rule should be applied to
derive a fact as long as there are derivable but not yet derived facts
that are strictly smaller in the induction order.  For instance,
assume that at some point in the induction process $I\models\varphi$
is derivable.  We are allowed to make this derivation only if there is no
strict subformula $\psi$ of $\varphi$ for which $I\models\psi$ is
derivable but was not yet derived. 

Thus, an atom $\Pa$ can be derived in the current set $\I_i$  only if all strictly smaller derivable atoms in the induction order have been derived. Formally, if $\I_i$ is saturated on the set $\{ B \mid B\order A\}$.  This is expressed in the following definition.
(Recall that the stage $\rankN{A}$ of $A$ in a natural induction $\NI$
is the ordinal $i$ such that $A \in \I_{i+1} \setminus \I_i$.)

\begin{definition} \label{def:respects}
A natural induction $\NI$ {\em respects} $\order$ (w.r.t. $\D$ and $\OO$) if for any domain atom $A$ derived at stage $i$, every  atom $B\order A$ that is derivable from  $\I_{i}$ is true in $\I_i$; equivalently, if $\I_i$ is saturated on $\{B \mid B\order A\}$ (under $\D$ in $\OO$). 
\end{definition}

We would expect that a natural induction that respects $\order$ also
derives atoms in this order. 

\begin{definition}
We say that $\NI$  {\em follows}  $\order$ if for every  $A$ and $B$ derived by $\NI$,  $A\order B$ implies  $\rankN{A} <  \rankN{B}$.
\end{definition}

\begin{example}
The natural induction of
Example~\ref{ex:sat:wrong:order}:
\[ \ra Sat(\{P\},\neg P) \ra Sat(\{P\},P) \ra \dots\]
does not {\em respect} the formal subformula  order $\order_{\models}$.  The atom $Sat(\{P\},\neg P)$ is derived in the first step, but  $Sat(\{P\},P)$ is lower in the induction order and is derivable from $\I_0$.  Thus, the  empty set $\I_0$ is not saturated in $\{A\mid A \order_{\models} Sat(\{P\},\neg P)\}$. Also, this natural induction does not {\em follow} the formal subformula order since $Sat(\{P\},\neg P)$ is derived before $Sat(\{P\},P)$.
\end{example}

\begin{example}
The natural induction of
Example~\ref{ex:even:wrong:order}:
\[ \ra Ev(1) \ra Ev(0) \ra Ev(3) \ra Ev(5) \ra \dots\]
does not respect the formal induction order $\order_{ev}$ and formalizes an induction process that does not respect the induction order of the informal definition formalized by $\D_{ev}$. In the first step, the empty set is not saturated on $\{ A\mid A \order Ev(1)\}=\{Ev(0)\}$.  
\end{example}

In general the induction process is  highly underspecified, even if an induction order is given. 

\begin{example}\label{ex:sat:order:cont} (Example~\ref{ex:sat:order} continued).
  Natural inductions of the informal Definition~\ref{def:sat} will
  derive $I\models \varphi$ only after the satisfaction of all
  subformulas has been derived.  This  constrains the order of rule application, but  much freedom is left. There are infinitely many  such natural inductions. A few non-terminal ones are:
\[ \ra Sat(\{P\},P) \ra Sat(\{P\},P\land P) \ra Sat(\{P\},\neg\neg P)\]
\[ \ra Sat(\{P\},P) \ra Sat(\{P\},\neg\neg P)\ra Sat(\{P\},P\lor P) \]
Note that both natural inductions  respect  the subformula order and follow it. Intuition suggests that these sequences can be extended to converging  terminal natural inductions, and this will be  proven below.
\end{example} 

Given our experience with informal definitions, we  expect some
``good'' properties of natural inductions that respect the induction
order $\order$: (1) that they all converge, (2) that they all {\em follow} the induction order,  (3) that once an element is derived, it remains derivable, and (4) that in the limit, the defined set is the intended one.  However, none of these properties  holds right now. 

The major question is related  to (1).  It is essential for the non-ambiguity of an informal ordered definition that  all  inductions that respect its induction  order converge. This should be provable in our framework. However, it is straightforward to see that this is not the case. Take the empty induction order $\emptyset$ for the definition $\Delta_{\models}$ in context structure $\OO$. This order  is a strict well-founded order and all natural inductions  respect  it in a trivial way.  As we saw in Examples~\ref{ex:sat:wrong:order} and~\ref{ex:even:wrong:order},
not all of these natural inductions converge. 

As for (2),  a counterexample is below. \ignore{ an example of a natural induction process that respects an order but derives atoms in the wrong order (i.e., does not follow $\order$) is below.}
\begin{example}\label{ex:PQ} Consider  the order $P\order Q$ and   definition:
\[ \defin{Q\rul\Tr\\ P\rul Q}\]
Here is a terminal natural induction:
\[ \ra Q\ra P\]
It obviously does not follow $\order$ since $P \order Q $. However, it
does respect $\order$. In the first step, when $Q$ is derived, the  structure $\I_0=\emptyset$ is saturated on  $\{A\mid A\order Q\}=\{P\}$, since $P$ is not derivable. In the second step,  $\I_1=\{Q\}$ is trivially saturated on $\{ A \mid A\order P\}=\{\}$. 
\end{example}

\ignore{
\begin{example}\label{ex:TC:wrong:order} If we impose a wrong order on a definition, it may become impossible to construct terminal natural inductions. As an example, reconsider definition $\Delta_{TC}$ of transitive closure, the context structure $\OO$ of Example~\ref{ex:TC} with domain $D=\{a,b,c\}$. Consider the following strict order $\order$:
\[ \{ (T(b,b), T(b,c)) \} \]
With this order, it is impossible to build terminal ordered natural inductions. Indeed, the fact $T(b,b)$ cannot be derived before  $T(b,c)$ and $T(c,b)$. However, once we have derived $T(b,c)$, it is forbidden to derive $T(b,b)$ since this would violate the induction order. Consequently, in the limit of each ordered natural induction, $T(b,b)$ is derivable but false. Hence, each ordered natural induction is non-terminal.
\end{example}
}
A counterexample  for (3) and (4) is given below. 
\begin{example} \label{ex:even:cont}
We reconsider  $\D_{ev}$ and  $\OO$ from Example~\ref{ex:even}.
\begin{equation}
\defin{ Even(0)\rul\Tr\\
\forall x (Even(x+1) \rul \neg Even(x))
}
\end{equation}
Recall that  $\order_{ev}$ is the order induced by the standard order on the natural numbers. That is, $Ev(n) \order_{ev} Ev(m)$ if $n<m$. This order is  {\em total}, and consequently, there is a unique terminal  natural induction that respects it: 
\[
\ra Ev(0) \ra Ev(2) \ra Ev(4) \ra \dots \ra Ev(2n) \ra \dots 
\]
This natural induction follows $\order_{ev}$ and  constructs the set of even numbers.

Now  take the following non-standard induction order:
\[ Ev(1) \order Ev(0) \order Ev(2) \order Ev(3) \order \dots \]
Also this is a total strict well-founded order.  The  unique terminal natural induction that respects $\order$ is:
\[
\ra Ev(1) \ra Ev(0) \ra Ev(3) \ra Ev(5) \ra \dots 
\]
\ignore{
For instance,  $Ev(1)$ is derived in step 0 since $\I_0$ satisfies $\neg
Ev(0)$ and $\I_0$ is saturated in $\{A\mid A\order
Ev(1)\}=\emptyset$. Next, $Ev(0)$ is derived in step 1 using its
base rule and $\I_1$ is saturated on $\{A\mid A\order
Ev(0)\}=\{Ev(1)\}$. In step 3, $Ev(2)$ is underivable due to
$Ev(1)$; $Ev(3)$ is derivable and $\I_2$ is saturated in $\{A\mid
A\order Ev(3)\}= \{ Ev(1), Ev(0), Ev(2)\}$.} 
Note that $Ev(1)$ is no longer derivable after step 2. Also, this
induction clearly does not construct the intended set.
\end{example}

In  non-monotone informal definitions, we impose a well-founded induction order to obtain convergence of the induction process. However, it is clear from the above examples that in selecting the induction order, great care is required. In general, imposing an unsuitable induction order w.r.t. $\D$ and $\OO$ may have a number of undesired effects as just shown. 

The reason for these misbehaviours can be traced back to our earlier claim:
that an induction order ensures convergence by {\em delaying the
  application of rules until it is safe to do so}. It can be seen that
in the above examples, the proposed order does not achieve this. E.g.,
the second induction order in the above example
(Ex.~\ref{ex:even:cont}) allowed to derive $Ev(1)$ in the first step
when it was not safe to do so; indeed, the derivation of $Ev(0)$ in
the next step violates the premise of the rule that was applied to
derive $Ev(1)$.  This is because the proposed order
$\order$ does not reflect the {\em dependencies} between defined facts
induced by the inductive rules. For example, while $Ev(0)$ is strictly
larger than $Ev(1)$ in the proposed induction order, $Ev(1)$ is
defined in terms of $Ev(0)$ and hence, if the value of $Ev(0)$
changes, this may invalidate the definiens of the rule deriving $Ev(1)$.
\vspace{2mm}

What emerges from this discussion is what we think to be one of the {\em implicit  conventions} of the use of informal definitions in mathematics.
Although we have never seen this explicitly stated, not every well-founded order is acceptable for use as induction order of an informal inductive definition. A ``good'' definition over an induction order should define the elementship of an object in the defined relation in terms of presence or absence of {\em strictly smaller} objects in the defined relation. This induces a constraint between the inductive rules and the induction order:  a ``good'' induction order should {\em match}  the dependencies amongst the defined facts induced by the inductive rules. In all above misbehaved examples, this constraint was violated. Such cases are not found in mathematical text. 

We now formalize the intuition that  the induction order ``matches'' the structure of the rules of a definition and  then prove that if this condition is satisfied, the four properties (1-4) are satisfied.
Intuitively, the matching condition is that defined facts may only ``depend'' on facts that are strictly smaller in the induction order.  
First, we formalize this notion of ``dependence''. 

Let  $\dep$ be an binary relation on the set $\domat{\defp{\D}}{D}$ of defined domain atoms. We write $\res{\I}{\dep A}$ to denote $\I\cap\{B\mid B\dep A\}$, the structure obtained from $\I$ by making all domain atoms $B\not\dep A$  false. 

\begin{definition}\label{def:dep}
A binary relation $\dep$ on $\domat{\defp{\D}}{D}$ is a {\em dependency relation} of $\D$ in $\OO$ if 
for all  $A$ and all $\I, \J$, if $\res{\I}{\dep A}=\res{\J}{\dep A}$ then $\I\der A$ iff $\J \der A$. 
\end{definition}
If $\dep$ is a dependency relation, then for any defined atom $A$, the
set $\{ B \mid B \dep A\}$ is (a superset of) the set of atoms on
which $A$ depends. Indeed, in any pair of structures that coincide on
this set, $A$ is derivable in both or in none. Notice that if
$\dep$ is a dependency relation, then any superset of $\dep$ is one as
well.

It is convenient to extend $\dep$ to all domain literals. If $L$ is $A$ or $\neg A$ and $L'$ is $B$ or $\neg B$, then we define that $L\dep L'$ if $A\dep B$.


\begin{example} The definition $\{P\rul P\}$ has a unique dependency
  relation, namely $P\dep P$.  For both $\{P\rul P\lor \neg P\}$ and
  $\{P\rul P\land\neg P\}$, the empty binary relation is a dependency
  relation: $P$ does not depend on itself. Indeed, switching the truth
  value of $P$ does not affect the value of either the tautology
  $P\lor\neg P$ or of the contradiction $P\land\neg P$. 
\end{example} 

That an induction order $\order$  ``matches'' the rules of a definition
means that $\order$ is a dependency relation.

\begin{example}\label{ex:sat-strictly-ordered}
  In case of definition $\D_{ev}$ of even numbers and the
  structure $\OO$ of Example~\ref{ex:even}, we see that the first order
\[ Ev(0)\order Ev(1) \order Ev(2) \order Ev(3) \order \dots \]
is a dependency of $\D_{ev}$, while the second order 
\[ Ev(1)\order Ev(0) \order Ev(2) \order Ev(3) \order \dots \]
is not. For instance, $\emptyset $ and $\{Ev(0)\}$ are identical on $\{B \mid B\order Ev(1)\}=\emptyset$, but $\emptyset \der Ev(1)$ while $\{Ev(0)\}\notder Ev(1)$.
\end{example}

\begin{definition} \label{def:strictly:orders} We say that
  $\order$ {\em strictly orders} $\D$ in $\OO$ if $\order$ is a strict
  well-founded order and a dependency relation of $\D$ in $\OO$.
\end{definition}

\begin{example} It is an easy exercise to verify that the induction order $\order_{\models}$  is a dependency of $\D_{\models}$ in the suitable context $\OO$. Since it is a well-founded strict order, it strictly orders $\D_{\models}$. Likewise, $\order_{ev}$ strictly orders $\D_{ev}$ in the natural numbers. 
\end{example}

Natural inductions that respect a relation $\order$ that strictly orders $\D$ in $\OO$ satisfy the good properties (1-3). (1) and (2) are  shown by the following two propositions; (3) is proven later in Proposition~\ref{prop:respect:safe}. 

\begin{proposition}\label{prop:respect:strongly} If $\order$ strictly orders  $\D$ in $\OO$ then any natural induction $\NI$ that respects $\order$ also follows $\order$.
\end{proposition}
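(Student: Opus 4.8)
The plan is to prove the statement by $\order$-induction, exploiting both ingredients of ``strictly orders'': well-foundedness of $\order$ (to run the induction) and the dependency condition (to relocate a derivation in time). Unfolding the definitions, ``$\NI$ follows $\order$'' amounts to: whenever $A\order B$ and both are derived, then $A\in\I_{\rankN{B}}$, since $A\in\I_{\rankN{B}}$ is equivalent to $\rankN{A}<\rankN{B}$. So I would suppose for contradiction that this fails and, using well-foundedness of $\order$, pick an atom $A$ that is \emph{$\order$-minimal} among those admitting a derived witness $B$ with $A\order B$ but $A\notin\I_{\rankN{B}}$. Writing $j=\rankN{B}$ and $k=\rankN{A}$, the fact that $A\notin\I_j$ while $A$ is derived at stage $k$ forces $j\leq k$.

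The crux is to show that $A$ is already \emph{derivable} from $\I_j$, i.e.\ $\I_j\der A$. We know $\I_k\der A$, and by the dependency condition it suffices to prove $\res{\I_j}{\order A}=\res{\I_k}{\order A}$. Since $\I_j\subseteq\I_k$, only the inclusion $\res{\I_k}{\order A}\subseteq\res{\I_j}{\order A}$ needs work: take any $C\order A$ with $C\in\I_k$. Then $C$ is derived and, by transitivity, $C\order B$. By $\order$-minimality of $A$, the atom $C$ is \emph{not} a counterexample, so it satisfies the rank inequality against every derived witness; applying this to the derived witness $B$ gives $\rankN{C}<\rankN{B}=j$, whence $C\in\I_j$. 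This establishes the reverse inclusion, so the two restrictions coincide and the dependency property of $\order$ yields $\I_j\der A$.

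To finish, I would invoke that $\NI$ respects $\order$. Since $B$ is derived at stage $j$ and $A\order B$ with $A$ derivable from $\I_j$, the respect condition forces $A$ to be true in $\I_j$, i.e.\ $A\in\I_j$, contradicting $A\notin\I_j$. Hence no minimal counterexample exists and $\NI$ follows $\order$. I expect the main obstacle to be precisely the middle step: transporting the derivation of $A$ from its actual stage $k$ back to the earlier stage $j$. This transfer is false for an arbitrary well-founded order, as the misbehaving Example~\ref{ex:even:cont} illustrates, and it is exactly the hypothesis that $\order$ is a \emph{dependency relation}, combined with the inductive control over all atoms $C\order A$, that makes it go through.
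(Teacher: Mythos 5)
Your proof is correct: the minimal-counterexample argument goes through, every step checks out against the paper's definitions (in particular, every atom in $\I_k$ is indeed derived at an earlier stage, and $A\in\I_{\rankN{B}}$ is equivalent to $\rankN{A}<\rankN{B}$ for derived atoms), and you correctly identify that the crux is transporting derivability of $A$ from stage $k$ back to stage $j$ via the dependency property. However, your route is genuinely different from the paper's. The paper never proves Proposition~\ref{prop:respect:strongly} directly: it derives it as the special case $\dep=\order$ of Proposition~\ref{prop:respect:follows}, whose proof is forward-looking and rests on the invariance lemma (Proposition~\ref{prop:inv:nat:ind}): once $\I_i$ is saturated on the $\dep$-closed set $\{B\mid B\stdep A\}$, that fragment of the structure never changes at any later stage of $\NI$ (proved by transfinite induction along the stages), so any $B\stdep A$ that is derived must have been derived before $i$. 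Your proof instead runs a well-founded induction on $\order$ itself, and never needs the invariance lemma, so it is self-contained. The trade-offs: the paper's argument never uses well-foundedness of the order for this implication (only transitivity, saturation, and the dependency property), and it works verbatim for monotone dependencies, i.e., for iterated definitions; your argument essentially consumes well-foundedness (for the minimal counterexample) and the two-sided ``iff'' form of the dependency condition (Definition~\ref{def:dep}) to transfer derivability \emph{backwards} from the larger structure $\I_k$ to the smaller $\I_j$. That backward transfer is exactly what fails in the iterated setting: a monotone dependency only licenses transfer of derivability toward larger structures within a layer ($\res{\I}{\dep A}\subseteq\res{\J}{\dep A}$), so your proof does not lift to Proposition~\ref{prop:respect:follows}, whereas the paper's does. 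For the proposition as stated, though, both hypotheses of ``strictly orders'' are available to you, and your use of them is sound.
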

This proposition is generalized by Proposition~\ref{prop:respect:follows} and will be proven there.

\begin{proposition} If $\D$ is a definition and $\order$ an order that strictly orders $\D$ in context structure $\OO$, then terminal natural inductions that respect $\order$ exist and all of them converge. Moreover the limit is independent of $\order$. 
\label{prop:ord:convergence}
\end{proposition}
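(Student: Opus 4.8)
The plan is to reduce everything to two facts about the \emph{order-independent} operator $\Gamma_\D^\OO$: that the limit of every terminal natural induction respecting $\order$ is a fixpoint of $\Gamma_\D^\OO$, and that under the stated hypotheses $\Gamma_\D^\OO$ has at most one fixpoint. Since $\Gamma_\D^\OO$ depends only on $\D$ and $\OO$ and not on $\order$, these two facts together yield convergence and order-independence in one stroke. The preliminary I would record first is that, because $\order$ is a dependency relation, derivability only inspects the strictly smaller atoms: for every $\defp{\D}$-structure $\I$ and every defined atom $A$, $\I\der A$ iff $\res{\I}{\order A}\der A$. This follows by applying Definition~\ref{def:dep} to $\I$ and $\J=\res{\I}{\order A}$, which coincide on $\{B\mid B\order A\}$.

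For \textbf{existence} I would build a witness greedily. Starting from $\I_0=\emptyset$, at each successor stage set $\I_{i+1}=\I_i\cup M_i$, where $M_i$ is the set of $\order$-minimal elements of $\{A\mid \I_i\der A,\ A\notin\I_i\}$, and take unions at limit stages. Every atom of $M_i$ is derivable from $\I_i$, so this is a natural induction; and it respects $\order$, for if some $B\order A$ were derivable from $\I_i$ but not yet true, then $B$ would be a derivable-but-undived atom strictly below $A$, contradicting the $\order$-minimality of $A\in M_i$. The sequence is $\subseteq$-increasing inside the fixed set $\domat{\defp{\D}}{D}$, hence stabilises at some $\I_\beta$ by a cardinality argument; stabilisation forces $M_\beta=\emptyset$, i.e.\ no defined atom is derivable-but-false, so $\I_\beta$ is closed. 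Thus this is a terminal natural induction respecting $\order$.

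Next I would show that the limit $\limit{\NI}=\I_\beta$ of \emph{any} terminal natural induction $\NI$ respecting $\order$ is a fixpoint of $\Gamma_\D^\OO$, i.e.\ $A\in\I_\beta$ iff $\I_\beta\der A$ for all $A$. The direction $\I_\beta\der A\Rightarrow A\in\I_\beta$ is just closedness (terminality). For the converse, suppose $A\in\I_\beta$ is derived at stage $i=\rankN{A}$, so $\I_i\der A$. By Proposition~\ref{prop:respect:strongly}, $\NI$ follows $\order$, hence every $B\order A$ lying in $\I_\beta$ has $\rankN{B}<i$ and is already in $\I_i$; together with $\I_i\subseteq\I_\beta$ this gives $\res{\I_i}{\order A}=\res{\I_\beta}{\order A}$. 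By the preliminary, $\I_i\der A$ then forces $\I_\beta\der A$.

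Finally, the crux is a \textbf{uniqueness lemma}: if $\order$ is a well-founded dependency relation of $\D$ in $\OO$, then $\Gamma_\D^\OO$ has at most one fixpoint. Given two fixpoints $\I,\J$, let $A$ be an $\order$-minimal atom on which they disagree (well-foundedness). Every $B\order A$ then agrees in $\I$ and $\J$, so $\res{\I}{\order A}=\res{\J}{\order A}$, whence $\I\der A$ iff $\J\der A$ by the preliminary; since both are fixpoints, $A\in\I$ iff $A\in\J$, contradicting the choice of $A$. Combining the pieces: the existence step provides a fixpoint, the lemma makes it \emph{the} fixpoint $\I^*$, and the fixpoint computation of the previous paragraph forces every terminal natural induction respecting $\order$ to converge to $\I^*$. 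For any other order $\order'$ that strictly orders $\D$, the operator $\Gamma_\D^\OO$ involved is the same and is still covered by the lemma, so its terminal respecting inductions converge to the very same $\I^*$; this is the independence from $\order$. The step I would be most careful about is the converse direction of the fixpoint claim, since that is exactly where the hypothesis is used in full: it needs both that $\NI$ follows $\order$ (Proposition~\ref{prop:respect:strongly}) and that derivability is insensitive to atoms not strictly below $A$. The pleasant surprise is that once this is in place, order-independence costs almost nothing, because $\Gamma_\D^\OO$ never mentioned the order at all.
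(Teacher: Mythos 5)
Your proof is correct, but it follows a genuinely different route from the paper's. The paper derives Proposition~\ref{prop:ord:convergence} as a special case of Theorem~\ref{theo:iterated:convergence}, itself proven as Theorem~\ref{theo:convergence1} via the machinery of safe natural inductions: inductions that respect $\order$ are safe (Propositions~\ref{prop:inv:nat:ind} and~\ref{prop:respect:safe}), terminal safe inductions are safe-terminal, all safe-terminal inductions converge (the confluence theorem of Section~\ref{sec:safe}), and their common limit, the safely defined structure, never mentions $\order$. You instead argue through the operator $\Gamma_\D^\OO$: a greedy construction gives existence (essentially the paper's Proposition~\ref{prop:extendableNI}, adding all $\order$-minimal derivable atoms at once rather than one at a time); the limit of any terminal respecting induction is shown to be a fixpoint of $\Gamma_\D^\OO$ (via Proposition~\ref{prop:respect:strongly} and the dependency property); and your uniqueness lemma --- a well-founded dependency order forces $\Gamma_\D^\OO$ to have at most one fixpoint --- clinches both convergence and order-independence in one stroke, since the operator is defined without reference to $\order$. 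That lemma is in fact the same argument the paper gives later for its proposition that an ordered definition's defined structure is the \emph{unique} fixpoint of $\Gamma_\D^\OO$. Your route is shorter and more elementary for this particular statement, and it makes order-independence transparent; what it gives up is generality: for iterated (and already for monotone) definitions the operator can have several fixpoints --- the paper notes that $\D_{TC}$ has a fixpoint interpreting $R$ as the total relation, and that iterated definitions may even have many minimal fixpoints --- so the uniqueness lemma fails and your argument cannot be upgraded to Theorem~\ref{theo:iterated:convergence}, which the paper's safety apparatus delivers uniformly while also identifying the common limit as the safely defined structure on which the rest of the paper relies. Finally, your appeal to Proposition~\ref{prop:respect:strongly} is legitimate and non-circular (its proof goes through Propositions~\ref{prop:respect:follows} and~\ref{prop:inv:nat:ind}, neither of which depends on this statement), but it means the invariance machinery is not truly avoided --- it is imported through that citation.
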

This proposition follows from the stronger Theorem~\ref{theo:iterated:convergence}. 

The proposition shows that an ordered definition in which $\order$ is a dependency of $\D$ unambiguously defines a set. This proposition inspires the following definitions of a definition by well-founded induction .

\begin{definition}\label{def:ordered:def}
Let $\OO$ be a context structure with domain $D$.
A \emph{definition by well-founded induction} over $\order$ in $\OO$ (or briefly, an {\em ordered definition}) is a pair  $(\D,\order)$ with $\D$ a definition and $\order$ an order that strictly orders $\D$. 
\end{definition}

Interestingly,  the convergence property states that  the limit is independent of the selected order. Sometimes  this phenomenon can be seen in mathematical text. 
\begin{example} \label{ex:sat:order2} The Definition ~\ref{def:sat}  defines the satisfaction
  relation $\models$ over the subformula order (formalized by
  $\order_{\models}$) but it is not uncommon to define it over
  alternative induction orders. For example, we could define $\models$
  by induction on the {\em size} of formulas. Formally, we define
  $Sat(I,\psi) \order_s Sat(J,\phi))$ if $I=J$ and the size of $\psi$
  (the number of nodes in its parse tree) is strictly less than the
  size of $\phi$. Alternatively, we may define $\models$ by induction
  on the {\em depth} of formulas, i.e., the length of the longest
  branch in the parse tree of $\phi$. This order can be formalized
  similarly; let us denote its formalization as $\order_d$. The three
  orders lead to three variants of Definition~\ref{def:sat}.
  Intuition suggests that they are equivalent.

  It is indeed easy to verify that the formal orders $\order_s$ and
  $\order_d$ on the size and depth of formulas are supersets of
  $\order_{\models}$. Hence, they are dependencies of $\D_{\models}$
  in $\OO$ as well.  It follows from
  Proposition~\ref{prop:ord:convergence} that the natural inductions
  that respect them converge to the same defined set. This confirms
  that the three informal definitions are indeed equivalent.

  This does not mean that they have the same natural inductions. For instance,
  reconsider the natural induction of Example~\ref{ex:sat:order:cont}:
  \[ \ra Sat(\{P\},P) \ra Sat(\{P\},\neg\neg P) \ra Sat(\{P\},P\lor
  P)\] This one respects and follows the subformula order and the size
  order.  However, it does not respect the depth order, since $\I_1$
  is not saturated on $\{B\mid B\order Sat(\{I\},\neg\neg P)\}$. For
  instance, $Sat(\{P\},P\lor P)$ is derivable but not derived and
  $P\lor P$ has strictly smaller depth than $\neg\neg P$. 

 \ignore{ In Example~\ref{ex:sat}, the first formal induction order that was
  mentioned was the following one: $Sat(I,\psi)\order Sat(J,\phi)$ if
  $\psi$ is a strict subformula of $\phi$. Also this is a superset of
  the order $\order_{ev}$, and hence, it is a dependency of
  $\D_{\models}$ in $\OO$.  This order imposes very strong
  constraints on the natural inductions: to derive an atom
  $Sat(I,\varphi)$, a natural induction should first have derived each
  derivable atom $Sat(J,\psi)$ for each subformula $\psi$ of $\varphi$
  and each and every structure $J$.}
\end{example}

 
\paragraph{Falsifiability}

In this section, we introduced and formalized two hypotheses about ordered definitions in mathematics: that their induction order is always a dependency of the definition (Definition~\ref{def:dep}) and how an induction process respects the induction order (Definition~\ref{def:respects}). The former hypothesis stems from the view that such a definition defines elements of the defined sets in terms of strictly smaller elements. Certainly, these hypotheses cannot be proven but they are falsifiable in concrete ``experiments''. They are satisfied in the satisfaction Definition~\ref{def:sat} and the evenness Definition~\ref{def:even}, in their formal representations and in all other informal definitions over an induction order that the authors are aware of. Also, as formally proven in Proposition~\ref{prop:ord:convergence}, definitions and inductions satisfying these hypotheses possess the indispensible confluence property: all induction processes converge to the same limit. Thus, such definitions {\em define} a set.

\ignore{
\begin{example} We do the experiment for the informal definition of even numbers in Example~\ref{} by induction on the standard order of natural numbers: 0 is even; n+1 is even if n is not even. It is formalized as the order $\order_{Ev}$ on the defined domain atoms, defined as $Even(n) \dep_{Ev} Even(m)$ if $n<m$. On the informal level, to prove that this definition is a dependency amounts to proving that for any given natural number $n$, if two sets $Ev, Ev'$ of numbers are identical on $[0,n-1]$, the set of objects below $n$ in the induction order,   then the condition of the inductive rule is satisfied with respect to $Ev$ if and only if it is satisfied with respect to $Ev'$. In particular,  $n-1\in Ev$ iff $n-1\in Ev'$, which is the case. 

Given the correspondence $b$ between the potential sets of informally defined objects and the structures of $\defp{\D}$. 

\end{example}
}


\subsection{Generalizing monotone and ordered definitions.}

There is an obvious similarity between Propositions
\ref{prop:mono:convergence} and \ref{prop:ord:convergence} of the confluence of natural inductions of monotone and ordered definitions.  However, neither is  a generalization of the other. Not all
monotone definitions are ordered. For instance, for the definition
$\D_{TC}$ of transitive closure in $\OO$, there is no $\order$ that
strictly orders $\D_{TC}$ in $\OO$. Indeed, due to the transitivity
rule, all defined domain atoms depend on each other; the only
dependency relation is the total one and this is not a strict order.

We now define the more general class of {\em iterated
  inductive definitions}, which encompasses all ordered definitions as
well as all monotone definitions. We will then prove a theorem for
this more general class that generalizes both of the earlier results.

The general idea of iterated inductive definitions is that they admit
a dependency $\dep$ that is not a strict order; however, if atoms $A, B$ depend on each other (that is, $A\dep B \dep A$), then they depend {\em monotonically} on each other:  deriving $B$  may  switch $A$ from underivable to derivable but not from derivable to underivable; $A$'s effect on $B$ is similar. 

For a  given dependency $\dep$, we define  $A \stdep B$ if $A \dep B$ and $B \not\dep A$. If $\dep$ is transitive, then $\stdep$ is a strict order. In that case, $\stdep$  divides the set of domain atoms into a set of  strictly ordered ``layers'' such that, for all $A,B$, if $A\stdep B$, then $A$ is in a strictly lower layer than $B$, and if $A \dep B \dep A$, they are in the same layer. If moreover, the layers form a well-founded order then  we have an iterated inductive definition.

 Natural inductions of an iterated inductive definition proceed  along the order $\stdep$. Such a natural induction closes layer by layer using  monotone ``sub-inductions'' that take place inside a single layer,  and starts a new monotone induction in the next layer as soon as one  layer is saturated. To ensure this behaviour, the same condition is imposed on natural inductions  as for an ordered definition: an atom $A$ may be derived at step $i$ only if $\I_i$ is saturated on $\{B\mid B\stdep A\}$. 

We now formalize these ideas. 

\begin{definition}\label{def:mon:dep}
  A relation $\dep$ is a \emph{monotone dependency relation} of $\D$ in $\OO$
  if for all defined $A$, for all $\I, \J$ such that $\res{\I}{\stdep
    A} = \res{\J}{\stdep A}$ and $\res{\I}{\dep A} \subseteq
  \res{\J}{\dep A}$, if $\I\der A$ then $\J\der A$.
\end{definition}

\begin{proposition}\label{prop:mono:is:dep:spec} 
  If $\dep$ is a monotone dependency relation of  $\D$ in $\OO$ then $\dep$ is a
  dependency relation of $\D$ in $\OO$.
\end{proposition}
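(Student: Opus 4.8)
The plan is to unfold the definition of a dependency relation (Definition~\ref{def:dep}) and reduce it to two applications of the monotone dependency property (Definition~\ref{def:mon:dep}). Fix a defined atom $A$ and two $\defp{\D}$-structures $\I, \J$ with $\res{\I}{\dep A} = \res{\J}{\dep A}$; the goal is to establish $\I \der A$ iff $\J \der A$. Since the hypothesis $\res{\I}{\dep A} = \res{\J}{\dep A}$ is symmetric in $\I$ and $\J$, it suffices to prove one implication, say $\I \der A \Rightarrow \J \der A$, and then recover the converse by swapping the roles of $\I$ and $\J$.

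The key observation is that $\stdep$ is contained in $\dep$: by definition $B \stdep A$ means $B \dep A$ and $A \not\dep B$, so $\{B \mid B \stdep A\} \subseteq \{B \mid B \dep A\}$. Restricting the given equality $\res{\I}{\dep A} = \res{\J}{\dep A}$ further to this smaller index set therefore yields $\res{\I}{\stdep A} = \res{\J}{\stdep A}$. Moreover, equality trivially entails the inclusion $\res{\I}{\dep A} \subseteq \res{\J}{\dep A}$. Both hypotheses of Definition~\ref{def:mon:dep} are thus met, so the monotone dependency property of $\dep$ delivers $\I \der A \Rightarrow \J \der A$.

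Swapping $\I$ and $\J$ is legitimate because the derived equality $\res{\J}{\stdep A} = \res{\I}{\stdep A}$ still holds and the remaining inclusion $\res{\J}{\dep A} \subseteq \res{\I}{\dep A}$ again follows from equality; a second application of Definition~\ref{def:mon:dep} then gives $\J \der A \Rightarrow \I \der A$. Combining the two implications yields $\I \der A$ iff $\J \der A$, which is precisely the condition that $\dep$ be a dependency relation of $\D$ in $\OO$. I do not anticipate any genuine obstacle here: the statement is essentially a matter of aligning the two definitions, and the only point requiring a moment's care is the set-theoretic fact that equality of restrictions along $\dep$ forces equality of restrictions along the smaller relation $\stdep$, which is what lets the weaker (one-directional, monotone) hypothesis be discharged in both directions.
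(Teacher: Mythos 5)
Your proposal is correct and follows essentially the same route as the paper's (one-line) proof: the equality $\res{\I}{\dep A}=\res{\J}{\dep A}$ discharges both hypotheses of Definition~\ref{def:mon:dep} in each direction, yielding the biconditional. You merely spell out the one implicit step — that equality of restrictions along $\dep$ forces equality along the smaller relation $\stdep$ — which the paper leaves tacit.
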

\begin{proof}
If $\res{\I}{\dep A} =
  \res{\J}{\dep A}$, then $\I\der A$ implies $\J\der A$ and vice versa.
\end{proof}

Just as for dependencies, it is easy to see that any superset of a monotone dependency is a monotone dependency as well. In particular, the transitive closure of a monotone dependency is one. Thus, any definition that admits a monotone dependency admits a transitive monotone dependency. 

\begin{definition}\label{def:monotonically:orders}
  A relation $\dep$ \emph{monotonically orders} $\D$ in $\OO$ if $\dep$ is
  transitive, $\stdep$ is a strict well-founded order and $\dep$ is a
  monotone dependency relation of $\D$ in $\OO$.
\end{definition}


\begin{definition}
  We say that a natural induction $\NI$ \emph{respects} (\emph{follows})
  a transitive  relation $\dep$ if it respects (follows) $\stdep$
  according to Definition~\ref{def:respects}.
\end{definition}
Thus, if $\NI$ respects $\dep$ and $A\in \I_{i+1}\setminus\I_i$ then
$\res{\I_i}{\stdep A}$ is saturated. If $\NI$ follows $\dep$ then for
every $A$ and $B$ derived by $\NI$, $A\stdep B$ implies $\rankN{A} <
\rankN{B}$.

We have already defined the concept of a monotone and ordered
definition in context structure $\OO$. Now, we also define the concept of an
iterated inductive definition (in $\OO$).

\begin{definition}\label{def:iterated:def}
  A definition $\D$ is a \emph{definition by iterated induction} over $\dep$
  in $\OO$ if $\dep$ monotonically orders $\D$ in $\OO$.
\end{definition}

We first show that iterated inductive definitions generalize monotone
and ordered definition. For a monotone definition, the entire set of
all domain atoms can serve as a single layer. Let $\dep_t$ denote the
total binary relation  on $\domat{\defp{\D}}{\D}$. Note that $\stdept =\emptyset$.

\begin{proposition}\label{prop:mono-orders} A definition $\D$ is monotone in $\OO$ iff $\D$ is a definition by iterated induction over $\dep_t$  in $\OO$.  A natural induction of $\D$ in $\OO$ (trivially) respects $\dep_t$.
\end{proposition}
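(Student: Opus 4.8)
The plan is to unwind the definitions and observe that, for the total relation $\dep_t$, the clause defining a monotone dependency relation collapses exactly onto the clause defining monotonicity. Since $\stdept = \emptyset$ (as noted in the statement's setup), I would first record two restriction computations that drive everything. For any $\defp{\D}$-structure $\I$ and any defined atom $A$, we have $\res{\I}{\stdept A} = \I \cap \{B \mid B \stdept A\} = \emptyset$, while $\res{\I}{\dep_t A} = \I \cap \{B \mid B \dep_t A\} = \I$, because $\dep_t$ relates every defined atom to $A$, so $\{B \mid B \dep_t A\} = \domat{\defp{\D}}{D}$ already contains every atom of $\I$.

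For the equivalence I would verify the three clauses of Definition~\ref{def:monotonically:orders} for $\dep_t$. Transitivity of $\dep_t$ is immediate, since the total relation relates every pair; and $\stdept = \emptyset$ is trivially a strict well-founded order (it is irreflexive, asymmetric and transitive, and admits no descending chain at all). Thus the only clause with content is whether $\dep_t$ is a monotone dependency relation of $\D$ in $\OO$. Feeding the two restriction identities into Definition~\ref{def:mon:dep}, the premise $\res{\I}{\stdept A} = \res{\J}{\stdept A}$ becomes vacuous (both sides equal $\emptyset$), and the premise $\res{\I}{\dep_t A} \subseteq \res{\J}{\dep_t A}$ becomes simply $\I \subseteq \J$. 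Hence ``$\dep_t$ is a monotone dependency relation of $\D$ in $\OO$'' reads: for all defined $A$ and all $\I \subseteq \J$, $\I \der A$ implies $\J \der A$, which is precisely Definition~\ref{def:monotone} of monotonicity. This settles both directions simultaneously: $\D$ is monotone in $\OO$ iff $\dep_t$ monotonically orders $\D$ in $\OO$, i.e.\ iff $\D$ is a definition by iterated induction over $\dep_t$.

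For the second sentence, recall that a natural induction $\NI$ respecting $\dep_t$ means $\NI$ respects $\stdept$ in the sense of Definition~\ref{def:respects}. Since $\stdept = \emptyset$, for every atom $A$ derived at some stage $i$ the set $\{B \mid B \stdept A\}$ is empty, so $\I_i$ is saturated on it vacuously. Therefore every natural induction of $\D$ in $\OO$ respects $\dep_t$, with no condition left to check.

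The main (and only mild) obstacle is the bookkeeping around the restrictions: making sure that $\res{\I}{\dep_t A}$ genuinely equals the whole structure $\I$ and that the $\stdept$-premise is truly vacuous rather than accidentally imposing a constraint. Once these two identities are in hand, the proposition is a direct rereading of Definitions~\ref{def:monotone} and~\ref{def:mon:dep}, with the well-foundedness and transitivity requirements holding for free.
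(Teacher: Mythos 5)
Your proof is correct and follows essentially the same route as the paper's: observing that $\stdept=\emptyset$ makes the monotone-dependency condition of Definition~\ref{def:mon:dep} collapse exactly to the monotonicity condition of Definition~\ref{def:monotone}. You are somewhat more explicit than the paper—spelling out the restriction computations, the trivial transitivity/well-foundedness clauses, and the second sentence about every natural induction respecting $\dep_t$—but the core argument is identical.
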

\begin{proof} Since $\stdept=\emptyset$, the condition that $\D$ is an
  iterated definition over $\dep_t$ in $\OO$ collapses to the
  condition that for all $A$, $\I, \J$ such that $\I\subseteq\J$, if
  $\I\der A$ then $\J\der B$. This is precisely  the monotonicity condition.
\end{proof}

\begin{example}\label{ex:TC:stratified}
Consider the formal definition $\Delta_{TC}$ of transitive closure and the context structure $\OO$ with domain $\{a,b,c\}$ of Example~\ref{ex:TC}. The total binary relation of $\domat{\defp{\D}}{D}$ is the one and only dependency relation of $\Delta_{TC}$ in $\OO$. 
\end{example}

\begin{proposition} For a binary relation $\dep$, a definition $\D$ is
  a definition by well-founded induction over $\dep$ in $\OO$ iff $\D$
  is by iterated induction over $\dep$ in $\OO$ and in addition, $\dep$ is
  irreflexive and asymmetric (and hence, a strict order).
\end{proposition}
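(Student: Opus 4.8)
The plan is to unfold the two named notions and reduce everything to a single structural observation. By Definition~\ref{def:ordered:def} and Definition~\ref{def:iterated:def}, the claim is equivalent to: $\dep$ strictly orders $\D$ in $\OO$ (Definition~\ref{def:strictly:orders}) if and only if $\dep$ monotonically orders $\D$ in $\OO$ (Definition~\ref{def:monotonically:orders}) and $\dep$ is irreflexive and asymmetric. The one fact that drives the whole equivalence is that \emph{under asymmetry, $\stdep$ and $\dep$ are the same relation}. Indeed, $A\stdep B$ holds iff $A\dep B$ and $B\not\dep A$; but asymmetry guarantees $B\not\dep A$ whenever $A\dep B$, so the second conjunct is automatic and $\stdep=\dep$. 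I would establish this identity first, since it collapses the two-part hypothesis of Definition~\ref{def:mon:dep} to the one-part hypothesis of Definition~\ref{def:dep} and makes the well-foundedness requirement on $\stdep$ coincide with the one on $\dep$.

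For the forward direction, assume $\dep$ strictly orders $\D$, i.e.\ $\dep$ is a strict well-founded order and a dependency relation. Irreflexivity and asymmetry are immediate from $\dep$ being a strict order, and transitivity gives the first clause of Definition~\ref{def:monotonically:orders}. By the observation above $\stdep=\dep$, so $\stdep$ inherits being a strict well-founded order. It remains to upgrade ``dependency'' to ``monotone dependency'': the hypotheses of Definition~\ref{def:mon:dep} are $\res{\I}{\stdep A}=\res{\J}{\stdep A}$ and $\res{\I}{\dep A}\subseteq\res{\J}{\dep A}$; using $\stdep=\dep$ the first becomes $\res{\I}{\dep A}=\res{\J}{\dep A}$, which by Definition~\ref{def:dep} already forces $\I\der A\Leftrightarrow\J\der A$, and in particular $\I\der A\Rightarrow\J\der A$. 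Hence $\dep$ monotonically orders $\D$.

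For the converse, assume $\dep$ monotonically orders $\D$ and is in addition irreflexive and asymmetric. Transitivity from Definition~\ref{def:monotonically:orders}, together with irreflexivity and asymmetry, makes $\dep$ a strict order. Again $\stdep=\dep$ by asymmetry, so the well-foundedness of $\stdep$ transfers to $\dep$, yielding a strict well-founded order. Finally, Proposition~\ref{prop:mono:is:dep:spec} shows that a monotone dependency relation is a dependency relation, so $\dep$ is a dependency relation and therefore strictly orders $\D$ by Definition~\ref{def:strictly:orders}.

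I expect no genuine obstacle here: the entire content is the identity $\stdep=\dep$ under asymmetry, after which every clause matches up definitionally and the only external input is Proposition~\ref{prop:mono:is:dep:spec}. The single point worth spelling out is why the extra inclusion hypothesis $\res{\I}{\dep A}\subseteq\res{\J}{\dep A}$ of the monotone-dependency condition is harmless in the forward direction; it is, because once $\stdep=\dep$ the equality of the $\stdep$-restrictions coincides with equality of the $\dep$-restrictions and hence already subsumes the inclusion.
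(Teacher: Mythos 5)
Your proof is correct, and it takes the only route there is: the paper's own proof is literally ``Obvious from the definitions,'' and your argument is exactly that definitional unfolding made explicit, with the key identity $\stdep\,=\,\dep$ under asymmetry doing all the work. Your care in noting why the extra inclusion hypothesis of Definition~\ref{def:mon:dep} is subsumed, and your use of Proposition~\ref{prop:mono:is:dep:spec} in the converse, fill in precisely what the paper left implicit.
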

\begin{proof} Obvious from the definitions.
\end{proof}


Given that monotone and ordered definitions are special cases of
iterated inductive definitions, the following proposition presents a
generalization of both Proposition~\ref{prop:mono:convergence} and
Proposition~\ref{prop:ord:convergence}.

\begin{theorem}\label{theo:iterated:convergence} Assume that $\D$ is by iterated induction over  $\dep$ in $\OO$. Then terminal natural inductions that respect $\dep$ exist and all converge.  Moreover, the limit is independent of $\dep$.
\end{theorem}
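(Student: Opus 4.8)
The plan is to reduce the iterated definition to a transfinite cascade of \emph{monotone} sub-inductions, one per layer of $\stdep$, and to show that every terminal natural induction respecting $\dep$ must carry out exactly this cascade. Throughout I use that respecting $\dep$ entails following $\dep$ (Proposition~\ref{prop:respect:follows}), so atoms are derived in $\stdep$-increasing order. Since $\dep$ is transitive and $\stdep$ is a well-founded strict order, the relation ``$A\dep B$ and $B\dep A$'' is an equivalence whose classes --- the \emph{layers} --- are ordered by a well-founded partial order, with $[B]$ strictly below $[A]$ iff $B\stdep A$. For a layer $L$ and a fixed interpretation $M_{<L}$ of the atoms in strictly lower layers, define $\Gamma_L(S)=\{A\in L\mid (M_{<L}\cup S)\der A\}$ for $S\subseteq L$. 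Every $B\dep A$ lies in a lower or equal layer, so by Proposition~\ref{prop:mono:is:dep:spec} the value $\Gamma_L(S)$ depends only on $M_{<L}$ and $S$; and by the \emph{monotone} dependency condition (Definition~\ref{def:mon:dep}), increasing $S$ with $M_{<L}$ frozen can only enlarge $\Gamma_L(S)$. Hence $\Gamma_L$ is monotone and has a least fixpoint $\mathrm{lfp}(\Gamma_L)$. Define the canonical structure $M$ by transfinite recursion over the layers: $M\cap L:=\mathrm{lfp}(\Gamma_L)$ computed with $M_{<L}$ the already-defined restriction of $M$ to lower layers.

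For convergence, let $\NI=(\I_\alpha)$ be any terminal natural induction respecting $\dep$; I claim $\limit{\NI}=M$ by well-founded induction on layers. Fix $L$ and assume $\limit{\NI}$ agrees with $M$ on all lower layers. Because $\NI$ follows $\dep$, when an atom $A\in L$ is derived at stage $i$ every lower-layer atom of $\limit{\NI}$ (being $\stdep A$) is already present and no strictly higher-layer atom is, so $\I_i=(M\cap\text{lower layers})\cup T_{<i}$ where $T_{<i}$ is the set of $L$-atoms derived before $i$; thus $\I_i\der A$ reads exactly $A\in\Gamma_L(T_{<i})$. Therefore the restriction of $\NI$ to $L$ is a terminal natural induction of the monotone operator $\Gamma_L$, viewed as the operator of a monotone definition with the lower layers frozen as parameters, and by Proposition~\ref{prop:mono:convergence} it converges to $\mathrm{lfp}(\Gamma_L)=M\cap L$. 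This closes the induction, yielding $\limit{\NI}=M$ and so both convergence and $\dep$-uniqueness of the limit.

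Existence is then a matter of exhibiting one such induction. Fix a well-ordering of the layers extending $\stdep$, and concatenate, layer by layer in this order, the least-fixpoint constructions of the operators $\Gamma_L$ (each is a natural induction that trivially respects $\dep$, since $\stdep$ restricted to a single layer is empty). The concatenation is $\leqt$-increasing, each derived atom is derivable from the current structure, and it respects $\dep$: when an $L$-atom $A$ is derived, every strictly lower layer has already been driven to its least fixpoint and is therefore saturated, so $\I_i$ is saturated on $\{B\mid B\stdep A\}$. Taking least fixpoints leaves nothing derivable-but-false, so the induction is terminal, and its limit is $M$.

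Independence is the crux. Given $\dep_1,\dep_2$ that both monotonically order $\D$, write $M_1,M_2$ for the two canonical limits; each $M_k$ is closed and, by the argument of the convergence step, satisfies $A\in M_k$ iff $M_k\der A$, i.e.\ $M_k=\Gamma_\D^\OO(M_k)$. By the layer characterisation applied to $\dep_2$, it suffices to show $M_1$ is itself the $\dep_2$-layer-lfp structure, i.e.\ $M_1\cap L=\mathrm{lfp}(\Gamma^{\dep_2}_L)$ for every $\dep_2$-layer $L$, by well-founded induction on the $\dep_2$-layers. Granting the induction hypothesis on lower $\dep_2$-layers, the identity $M_1=\Gamma_\D^\OO(M_1)$ together with the fact that $\dep_2$ is a dependency shows $M_1\cap L$ is a fixpoint of $\Gamma^{\dep_2}_L$, whence $\mathrm{lfp}(\Gamma^{\dep_2}_L)\subseteq M_1\cap L$. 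The reverse inclusion --- that the $\dep_1$-founded structure $M_1$ contains no atom of layer $L$ beyond the least fixpoint --- is the hard part, and I expect to prove it by contradiction: take a $\stdep_1$-minimal atom $A\in (M_1\cap L)\setminus\mathrm{lfp}(\Gamma^{\dep_2}_L)$, and use that in the $\dep_1$-layer construction of $M_1$ the atom $A$ is supported only by $\dep_1$-smaller atoms, which by $\stdep_1$-minimality and the inclusions already proven lie in $\mathrm{lfp}(\Gamma^{\dep_2}_L)$ together with lower $\dep_2$-layers, forcing $A\in\Gamma^{\dep_2}_L(\mathrm{lfp}(\Gamma^{\dep_2}_L))=\mathrm{lfp}(\Gamma^{\dep_2}_L)$, a contradiction. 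The delicate point --- and where I expect the genuine difficulty to lie --- is reconciling the two unrelated stratifications $\stdep_1$ and $\stdep_2$, in particular verifying that descent along $\stdep_1$ stays within the relevant $\dep_2$-layer and its predecessors so that the monotone fixpoint of $\Gamma^{\dep_2}_L$ can actually absorb $A$.
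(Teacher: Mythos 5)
Your route for a \emph{fixed} $\dep$ is genuinely different from the paper's and is essentially workable: you decompose $\domat{\defp{\D}}{D}$ into layers of mutual $\dep$-dependence, observe that the monotone dependency condition makes each layer operator $\Gamma_L$ monotone once the lower layers are frozen, and prove by well-founded induction on layers that every terminal induction respecting $\dep$ has limit $M^{\dep}$, the layer-by-layer least-fixpoint structure. Two rigor points need attention but are fixable: (i) since layers are only partially ordered, a stage $\I_i$ may contain atoms from layers incomparable to $L$; your identity ``$\I_i=(M\cap\text{lower layers})\cup T_{<i}$'' is literally false, and you must instead argue via the dependency property that such atoms are irrelevant to derivability of $L$-atoms; (ii) you invoke Proposition~\ref{prop:mono:convergence}, but the paper presents that proposition as a \emph{consequence} of the very theorem you are proving, so you should prove the layer-level statement directly (it is a routine Tarski argument). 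The paper takes a completely different path: it never stratifies, but introduces the order-free notions of safely derivable atom and safe natural induction, proves that any two safe inductions from a common start can be interleaved (Proposition~\ref{prop:safe}), hence that all safe-terminal inductions share one limit, and then shows that respecting a monotone dependency forces safety (Proposition~\ref{prop:respect:safe}, via the persistence Proposition~\ref{prop:inv:nat:ind}); existence comes from Proposition~\ref{prop:extendableNI}.

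The genuine gap is exactly where you locate it: independence of the limit from $\dep$, which is part of the theorem's statement. In the paper this is automatic, because the common limit is characterized as the safely defined structure, a notion in which no $\dep$ occurs; in your setup the canonical structure $M^{\dep}$ is built from $\dep$'s own layers, so you must compare $M^{\dep_1}$ with $M^{\dep_2}$ head-on, and your sketch does not go through. Concretely: (a) picking a $\stdep_1$-minimal counterexample $A$ does not control counterexamples lying in the \emph{same} $\dep_1$-layer as $A$ (they are not $\stdep_1$-below $A$), and these are precisely the atoms that may support $A$'s derivation inside the layer-$[A]_{\dep_1}$ fixpoint iteration; (b) to ``force'' $A\in\Gamma^{\dep_2}_L(\mathrm{lfp}(\Gamma^{\dep_2}_L))$ you must apply the $\dep_2$-monotone-dependency condition, which requires \emph{exact equality} of the deriving structure and the target on $\{B\mid B\stdep_2 A\}$; but the $\dep_1$-stratified construction gives no control over when $\dep_2$-smaller atoms get derived (an atom $B\stdep_2 A$ may perfectly well satisfy $A\stdep_1 B$ and be derived after $A$), so this equality is simply not available at the stage where $A$ is derived. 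What you \emph{can} extract from ``$M_1$ is a fixpoint of $\Gamma_\D^\OO$'' plus the induction hypothesis on lower $\dep_2$-layers is only that $(M_1\cup M_2)\cap L$ is a post-fixpoint of the monotone operator $\Gamma^{\dep_2}_L$, and post-fixpoints need not be contained in the least fixpoint (a $P\rul P$ cycle inside a layer already defeats this). Closing the gap essentially requires an order-free characterization of the limit or an interleaving lemma for inductions respecting two \emph{different} orders --- that is, requires reconstructing the paper's safety machinery, which is the key idea your proposal is missing.
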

This theorem follows from Theorem~\ref{theo:convergence1} and will be proven below.

\begin{definition}
The \emph{structure defined by} a definition $\D$ by iterated induction over $\dep$ in $\OO$ is the limit of any terminal natural induction that respects $\dep$.
\end{definition}

\paragraph{Informal iterated inductive definitions}

Above, formal iterated inductive definitions were introduced as a mathematical generalization of monotone and ordered definition. In this section we discuss their application in mathematical text. 

Quite a few definitions in mathematical text contain iterated applications of nested monotone induction. However, they are only rarely formulated as sets of informal rules. To phrase them, formal scientists typically use other tools from their toolbox, such as fixpoints of operators. A well-known iterated inductive definition is the alternating fixpoint definition
of the well-founded model \cite{VanGelder93}. In this definition, the well-founded model of logic program $\Pi$ is characterised as the limit of an
alternating fixpoint construction of an anti-monotone operator $\mathcal{A}$. This operator, called the stable operator $\mathcal A$ of $\Pi$ is defined on structures $\I$ by defining $\mathcal{A}(\I)$ as the least fixpoint of some monotone operator $\lambda x T(x,\I)$ associated to $\Pi$ (essentially, the four-valued immediate consequence operator of $\Pi$). This is an
iterated induction in the sense that each of the steps in alternating sequence involves itself a monotone inductive construction.

A rare case where iterated induction is explicitly available in rule form is in the definition of a {\em stable theory} \cite{Marek89b} which is a set of propositional modal logic formulas closed under  the standard inference rules and two additional ones: 
\[ \frac{\vdash \psi}{\vdash K\psi} \ \ \ \ \    \frac{\not\vdash \psi}{\vdash \neg K\psi} \] 
The second is a non-monotone rule. The set is computed by iterated induction for increasing modal nesting depth of modal formulas. 

In the following example, we rephrase the definition of the satisfaction relation of multi-agent modal logic as an informal iterated inductive definition. 


\marc{Nakijken: reviewer 1 lijkt ergens te verwijzen naar een specifieke paper waarin de definitie hieronder of iets gelijkaardigs voorkomt. }
\bart{Ik zou de definitie hieronder weglaten. Ze is nogal lang zonder al te veel bij te dragen naar mijn mening. Evt gewoon vermelden in 1 of 2 zinnen dat er hier en daar nog gelijkaardige iterated inductive definitions gevonden kunnen worden, gelijkaardig aan hoe we het doen voor de ramifications?}
\begin{example} \label{ex:multi}
\newcommand{\Kr}{{\mathcal K}}
\newcommand{\Ag}{{\mathcal A}}
\newcommand{\Wo}{{\mathcal W}}
Consider the multi-agent modal logic with a finite set of agents $\Ag$, the standard propositional connectives, for each agent $a\in\Ag$ the epistemic operator $K_a$, and for each group of agents $g\subseteq \Ag$ the common knowledge operator $C_g$ and its dual operator $DC_g$. These operators satisfy the standard condition  $C_g\varphi \equi
  \neg DC_g\neg\varphi$. 

The satisfaction relation is defined in terms of (multi-agent) Kripke structures and worlds. A multi-agent Kripke structure is a tuple $\Kr =
  \struct{\Wo,\satvoc,L,\Ag,R}$ with $\Wo$ a set of worlds, $\satvoc$
  a propositional vocabulary, $L:\Wo\ra 2^\satvoc$ a function from
  worlds to $\satvoc$ -structures, $\Ag$ the set of agents, and
  $R\subseteq \Wo\times \Ag\times \Wo$ the accessibility relation: if
  $(w_1,a,w_2)\in R$ then according to agent $a$, world $w_2$ is
  accessible from world $w_1$.

The formula $C_g\varphi$ holds in a world $w$  if every finite path from $w$ through the union of the accessibility relations of agents in $g$ ends in a world $w'$ that satisfies $\varphi$. Correspondingly, $DC_g\varphi$ holds if at least one such a path exists; that is, if a world satisfying $\varphi$ is {\em reachable} in the  combined accessibility relation of the agents of $g$. This reachability condition can be expressed through a monotone inductive rule. $C_g\varphi$ can be defined in terms of
  $DC_g\neg\varphi$ using a non-monotone rule.

Below, we specify the informal definition together with its monotone dependency relation $\dep$. The right column specifies for each inductive rule the dependencies that it generates, and whether these dependencies are cyclic ($A\dep B$) or not ($A\stdep B$).  In this definition, the domain atoms are of the form $\Kr,w\models\varphi$. For brevity, we drop the argument $\Kr$. 

\begin{tabular}{|l|l|}
\hline 
-- $\Kr,w\models p$ if $p\in\satvoc$ and $p\in L(w)$ & \\
\hline
-- $\Kr,w\models \neg\varphi$ if $\Kr,w\not\models \varphi$  & $(w\models \varphi)\stdep (w\models \neg\varphi)$ \\
\hline
-- $\Kr,w\models \psi\land\phi$ if $\Kr,w\models \psi$ and $\Kr,w\models \phi$ & 
\begin{minipage}{4cm}$(w\models \psi)\stdep (w\models \psi\land\phi)$,\\ $(w\models \phi)\stdep (w\models \psi\land\phi)$
\end{minipage}\\
\hline
-- $\Kr,w\models \psi\lor\phi$ if $\Kr,w\models \psi$ or $\Kr,w\models \phi$ (or both) & \begin{minipage}{4cm} $(w\models \psi)\stdep (w\models \psi\lor\phi)$,\\$(w\models \phi)\stdep (w\models \psi\lor\phi)$
\end{minipage}
\\
\hline
-- \begin{minipage}{8cm}
$\Kr,w\models K_a\varphi$ if there exists $(w,a,w')\in R$ such that\\
$\Kr,w'\models \varphi$
 \end{minipage}
 & $(w'\models \varphi) \stdep (w\models K_a\varphi)$\\
\hline
-- \begin{minipage}{8cm}$\Kr,w\models DC_g\varphi$ if there exists  $(w,a,w')\in R$ such that $a\in g$ and  $\Kr,w'\models \varphi$
\end{minipage}
& $(w'\models \varphi) \stdep (w\models DC_g\varphi)$\\
\hline
-- \begin{minipage}{8cm}$\Kr,w\models DC_g\varphi$ if there exists  $(w,a,w')\in R$ such that $a\in g$ and  $\Kr,w'\models DC_g\varphi$
\end{minipage}
& $(w'\models DC_g\varphi) \dep (w\models DC_g\varphi)$\\
\hline
-- $\Kr,w\models C_g\varphi$ if   $\Kr,w\not\models \neg DC_g\varphi$. & $(w\models DC_g\neg\varphi) \stdep (w\models C_g\varphi)$.\\
\hline
\end{tabular}
\vspace{2mm}

The relation $\dep$ is the transitive closure of the collection of all tuples
specified in the right column, for all $w, w'$ and all formulas
$\varphi, \psi, \phi, K_a\varphi, DC_g\varphi, C_g\varphi$. $\dep$ is not a strict order since it contains cycles.  The cycles
are the dependencies $(w'\models DC_g\varphi)\dep (w\models
DC_g\varphi)$ induced by the second, monotone rule for
$DC_g\varphi$. It can be easily verified that the strict order $\stdep$ is well-founded.

The definition contains non-monotone rules for $\neg\varphi$ and for
$C_g\varphi$. The definition  is not an ordered definition,  since $\dep$ is not a strict order. However, $\dep$
is a monotone dependency relation of this iterated inductive
definition.  Consequently, all natural inductions that respect
$\dep$ converge to the intended relation. Hence, this is a well-defined informal iterated inductive definition.

\ignore{\[\begin{array}{c}
    (w\models \psi)\dep (w\models \psi\lor\phi) \\
    (w\models \phi)\dep (w\models \psi\lor\phi) \\
    (w\models \phi)\dep (w\models \neg\phi) \\
    (w'\models \varphi) \dep (w\models K_a\varphi)\\
    (w'\models \varphi) \dep (w\models NC_g\varphi) \\
    (w'\models NC_g\varphi) \dep (w\models NC_g\varphi) \\
    (w\models NC_g\varphi) \dep (w\models C_g\varphi)
\end{array}
\]}


\end{example}

Also in  knowledge representation, one sometimes finds natural applications of iterated inductive definitions that can be faithfully expressed as rule sets. For instance, \citet{DeneckerT07} argued that  dynamic systems with cyclic ramifications can be naturally described using iterated inductive definitions. 

\paragraph{Summary: implications for informal definitions}

The formalization of definitions in this section exposes and proves several fundamental properties of informal definitions.

First, that the ``non-constructive'' characterization of the defined set as the least set satisfying the rules, is incorrect in case of non-monotone (ordered or iterated) definitions. 

Second, that the induction process, seen as the iterated application of rules, is highly non-deterministic, and therefore that convergence is all-important. In mathematical practice, we typically take  this property for granted. In fact, it is not trivial at all. It is a fundamentally important property of inductive definitions, of great pragmatical importance.  

Third,  we formalized how the induction order is to be used in the induction process in the concept of a natural induction respecting an induction order.

Fourth, in mathematical texts, we have a certain degree of freedom when
it comes to choosing the induction order for an inductive
definition. Nevertheless, the order is far from arbitrary and needs to
match the structure of the rules. This match was formalized in the
concept of {\em dependency}.  Our exposition clarifies the
role and nature of the induction order, the match with the
definitional rules and how the induction order constrains the order of rule
application. We were then able to state Theorem~\ref{theo:iterated:convergence} that  all
natural inductions that respect such a relation converge (the proof is given in the next sections).

Last but not least, it also appears from Theorem~\ref{theo:iterated:convergence} that the choice of the induction order is irrelevant as long as it matches the rules. The order does not affect the semantics of the definition. In view of this, one may wonder why an induction order is specified at all in mathematical text. This will be explored in the next sections.


\paragraph{Related work on iterated induction}

Iterated inductive definitions were studied in \cite{Kreisel63,Feferman70,MartinLoef71,BuchholzFPS81}. 
In the formalisms of \cite{Kreisel63,Feferman70,MartinLoef71}, a strict syntactical stratification condition on rule sets ensure that the rule set $\D$ admits a monotone dependency in {\em every context} $\OO$ and hence, is an iterated definition in every $\OO$. This condition is similar to the notion of stratification in logic programming \cite{minker88/AptBW88}.  
A disadvantage of this approach is that many (nonmonotone) informal and formal definitions are sensible definitions in one context but not in another. E.g., the evenness Definition~\ref{def:even} and its faithful representation $\D_{ev}$ are sensible definitions in the context of the natural numbers, but not in the context of the integer numbers. Indeed, in the integer numbers, the only dependency of this definition is still the standard order but this order is not well-founded in the integers.  Also  the satisfaction Definition~\ref{def:sat} is not an ordered definition in every context $\OO$. 

A more general approach is  the logic of iterated induction (IID) presented by \citet{BuchholzFPS81}. There, an iterated inductive definition is expressed
via a second order logic formula that expresses a definition $\D$ and,
independently, an induction order $\order$. They use this logic to
study proof-theoretic strength and expressivity of iterated
definitions. In the IID formalism, the order can be chosen independently of
the definition; there is no requirement similar to our notion of
dependency.  We showed  that the risks of choosing an order that does
not match with the definition are that (1) there is no convergence of
different induction processes, and (2) that an unintended set is
constructed.  The first problem is avoided in the logic of
\citet{BuchholzFPS81}. Essentially, the second order formula constrains the induction process to a single process. As for the second problem, it is possible in this formalism to encode an induction order that does not match the rules. For example, one can encode the definition
$\D_{ev}$ with the non-matching order $Ev(1)\order Ev(0)\order
Ev(2)\order \dots$, in which case the unintended set $\{Ev(1), Ev(0),
Ev(3), Ev(5),\dots\}$ is constructed.


In some sense, the IID logic is more general than the formalism here, since by selecting different induction orders for the same rule set,  different induction processes and different defined sets can be obtained. If our hypothesis about the link between rules and induction order is correct, this extra expressivity does not cover useful ground, moreover it poses two disadvantages. First, formally expressing an induction order in the logic might be as complex as expressing the definition itself, if not more. Second, it also makes the knowledge representation process more error-prone, if there is no way to prevent that an order is encoded that does not match with the definition. To have to express the induction order seems like a needless complication of the knowledge representation process. 

To us it seems preferable to
design a logic of definitions in which only the rules need to be
represented and the order is left implicit. 
Indeed, Theorem~\ref{theo:iterated:convergence} gives us license to do this, because
it shows that all induction orders that fit the structure of the rules
of $\D$ produce the same unique limit of their terminal natural
inductions. Nevertheless, it could be useful to express an induction
order as a ``parity check' for the correctness of the definition.

\section{Safe natural inductions}
\label{sec:safe}

In the previous section, we argued that the role of the induction order is to delay the application of a rule until it is {\em safe} to do so, i.e., until later rule applications cannot violate the premise of the rule  anymore. In this section, we formalize this intuition and prove its correctness. 

\ignore{
Theorem \ref{theo:iterated:convergence} 
shows that for an iterated inductive definition $\D$, the structure
defined by $\D$ is actually independent of the induction order $\dep$,
in the sense that all sensible induction orders (i.e., those according
to which $\D$ is indeed by iterated induction over $\dep$) lead to the
same structure. Nevertheless, in order to actually construct this
structure by means of a natural induction, we do need {\em an}
induction order $\dep$, even if it does not matter which one. In this
section, we will present an alternative formalisation of the induction
process, which does not require an explicit induction order to be present.
}


To define safe natural inductions, we need a slightly extended notion of natural induction that starts from an  arbitrary $\defp{\D}$-structure $\I$ rather than from $\emptyset$.
\begin{definition}
We define a {\em natural induction $\NI$ of $\D$ {\bf from a $\defp{\D}$-structure $\I$} in $\OO$} in the same way as  Definition~\ref{def:nat:ind}  except that $\I_0=\I$.
\end{definition}

We introduce the following notations. 
Given a natural induction $\NI$ with limit $\limit{\NI}=\I$ and a natural induction $\NI'$ from $\I$,  their composition $\NI+\NI'$ is obtained by appending $\NI'$ after $\NI$. Clearly, the result is a natural induction.
Also, for a natural induction $\NI=\struct{\I_\alpha}_{0\leq\alpha\leq\beta}$ and $0\leq i\leq j\leq\beta$, we write $\NI_{i\ra j}$ to denote the segment $\struct{\I_i, \I_{i+1}, \dots ,\I_j}$ of $\NI$. This is a natural induction from $\I_i$.

\newcommand{\Safe}[1]{\mathit{Safe}_\D(#1)}
\newcommand{\Underiv}[1]{\mathit{Underivable}^*_\D(#1)}

\begin{definition} 
A defined atom $A$ is {\em safely derivable} by $\D$  in structure $\I$ if  $\I\der A$ and for each natural induction $\NI$ of $\D$ from $\I$, it holds that $\limit{\NI}\der A$. The set of safely derivable atoms from  $\I$ is denoted $\Safe{\I}$. 
\end{definition}

\begin{definition} 
We call $A$ {\em strictly underivable} by $\D$ in structure $\I$ if for each natural induction $\NI$ of $\D$ from $\I$, it holds that $\limit{\NI}\notder A$. The set of strictly underivable atoms from  $\I$ is denoted $\Underiv{\I}$.
\end{definition}
We will see that in a terminal natural induction, every atom that is safely derivable at some stage, is eventually derived. An atom that is strictly underivable at some stage is never derived.  
\begin{definition}
The structure $\J$ is safely derivable from $\I$ if $\I\subseteq \J\subseteq \I\cup \Safe{\I}$. Equivalently, if  $\I\subseteq \J$ and every $A\in\J\setminus\I$ is safely derivable in $\I$.
\end{definition}
\begin{definition} 
A natural induction $\NI = \struct{\I_\alpha}_{\alpha\leq\beta}$ from $\I$ is safe if for each $\alpha<\beta$, $\I_{\alpha+1}$ is safely derivable from $\I_{\alpha}$.
\end{definition}
An obvious property of safe natural inductions is that  any atom $A$ that is  derived at some stage $i$ remains derivable at later stages. The following proposition states that  in a natural induction the sets of safely derivable and of strictly underivable defined atoms grow monotonically. 

\begin{proposition}\label{prop:monotone:safety}
If $\NI=\struct{\I_\alpha}_{\alpha\leq\beta}$ is a natural induction from $\I$, then  for all $0\leq i < j \leq \beta$, $\Safe{\I_i}\subseteq \Safe{\I_j}$ and $\Underiv{\I_i}\subseteq \Underiv{\I_j}$. 
\end{proposition}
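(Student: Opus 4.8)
The plan is to exploit the composition of natural inductions and the segment notation introduced immediately before the statement. Fix $i<j$ with $0\leq i<j\leq\beta$; by symmetry of the two claims in the indices it suffices to treat this one pair. The key observation I would record first is that the segment $\NI_{i\ra j}$ is itself a natural induction from $\I_i$ whose limit is $\I_j$. Consequently, for any natural induction $\NI'$ from $\I_j$, the composition $\NI_{i\ra j}+\NI'$ is a natural induction from $\I_i$, and $\limit{\NI_{i\ra j}+\NI'}=\limit{\NI'}$. This single fact is the engine of the whole proof: it lets me turn any statement quantified over natural inductions from $\I_j$ into a statement about (a subclass of) natural inductions from $\I_i$, where the hypothesis $A\in\Safe{\I_i}$ or $A\in\Underiv{\I_i}$ can be applied.

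For the inclusion $\Underiv{\I_i}\subseteq\Underiv{\I_j}$, I would take $A\in\Underiv{\I_i}$ and an arbitrary natural induction $\NI'$ from $\I_j$. Since $\NI_{i\ra j}+\NI'$ is a natural induction from $\I_i$ and $A$ is strictly underivable at $\I_i$, its limit satisfies $\limit{\NI'}=\limit{\NI_{i\ra j}+\NI'}\notder A$. As $\NI'$ was arbitrary, every natural induction from $\I_j$ has a limit not deriving $A$, i.e.\ $A\in\Underiv{\I_j}$.

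For $\Safe{\I_i}\subseteq\Safe{\I_j}$, take $A\in\Safe{\I_i}$; I must verify both clauses of safe derivability at $\I_j$. The first clause, $\I_j\der A$, follows by applying safe derivability of $A$ at $\I_i$ directly to $\NI_{i\ra j}$, whose limit is exactly $\I_j$. For the second clause, given any natural induction $\NI'$ from $\I_j$, the same composition argument gives $\limit{\NI'}=\limit{\NI_{i\ra j}+\NI'}\der A$, because $\NI_{i\ra j}+\NI'$ is a natural induction from $\I_i$ on which the hypothesis applies. Since $\NI'$ was arbitrary, $A\in\Safe{\I_j}$.

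The points needing care are bookkeeping rather than conceptual, so I do not expect a genuine obstacle. The one thing to pin down is that the composition is well formed and again a natural induction: the final structure of $\NI_{i\ra j}$ is $\I_j$, which coincides with the initial structure of $\NI'$, so appending preserves both the $\leqt$-increasing property and the successor/limit conditions of Definition~\ref{def:nat:ind} across the junction --- precisely what was asserted when composition was introduced. The transfinite case is absorbed automatically, since the remark that $\NI_{i\ra j}$ is a natural induction from $\I_i$ already covers limit ordinals $i$ and $j$. Once these two granted facts (segment and composition) are in hand, both inclusions are immediate, and no separate transitivity argument in $i<j$ is required, as the reasoning applies verbatim to any such pair.
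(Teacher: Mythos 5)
Your proof is correct and follows essentially the same route as the paper's: both rest on the facts that the segment $\NI_{i\ra j}$ is a natural induction from $\I_i$ with limit $\I_j$ and that composing it with any natural induction $\NI'$ from $\I_j$ yields a natural induction from $\I_i$ with limit $\limit{\NI'}$, to which the hypothesis at $\I_i$ applies. The only (cosmetic) difference is that you argue directly and check both clauses of safe derivability explicitly, whereas the paper argues by contradiction and leaves the derivability clause and the underivability case implicit.
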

\begin{proof}
Assume that $A\in \Safe{\I_i}$ is not safely derivable in $\I_j$. Let $\NI'$ be a natural induction from $\I_j$ such that $\limit{\NI'}\notder A$. Then $\NI_{i\ra j}+\NI'$ is a natural induction from $\I_i$ to $\limit{\NI'}$. Hence, $A$ is not safely derivable in $\I_i$. Contradiction. The case for underivability is similar.
\end{proof}

\begin{proposition}\label{prop:safe} Let $\NI = \struct{\I_\alpha}_{\alpha \leq\beta}$, $\NI' = \struct{\J_\alpha}_{\alpha\leq\gamma}$ be two safe natural inductions from the same structure $\I$. For every $i\leq \beta, j\leq\gamma$ it holds that if $i+1\leq \beta$ then  $\I_{i+1}\cup \J_j$ is safely derivable from $\I_i\cup\J_j$ and if $j+1\leq\gamma$ then $\I_{i}\cup \J_{j+1}$ is safely derivable from $\I_i\cup\J_j$. 
\end{proposition}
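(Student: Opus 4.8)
The plan is to prove both clauses simultaneously; since the two clauses are interchanged by swapping the roles of $\NI$ and $\NI'$, it suffices to establish the first. So fix $i,j$ with $i+1\leq\beta$ and aim to show that $\I_{i+1}\cup\J_j$ is safely derivable from $\I_i\cup\J_j$. Unwinding the definition of ``safely derivable from'', the inclusion $\I_i\cup\J_j\subseteq\I_{i+1}\cup\J_j$ is immediate, so the real content is that every atom $A\in(\I_{i+1}\cup\J_j)\setminus(\I_i\cup\J_j)$ lies in $\Safe{\I_i\cup\J_j}$. Any such $A$ belongs to $\I_{i+1}\setminus\I_i$ (it is not in $\J_j$), and since $\NI$ is safe we already know $A\in\Safe{\I_i}$. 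Hence the whole statement reduces to the monotonicity inclusion $\Safe{\I_i}\subseteq\Safe{\I_i\cup\J_j}$.

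By Proposition~\ref{prop:monotone:safety}, this inclusion holds as soon as $\I_i\cup\J_j$ occurs (as the limit, say) of a natural induction starting from $\I_i$. So the heart of the proof is a reachability claim, which I would prove for all $i\leq\beta$, $j\leq\gamma$ by a simultaneous transfinite induction:
\emph{$T(i,j)$: the structure $\I_i\cup\J_j$ is the limit of some natural induction from $\I_i$, and also the limit of some natural induction from $\J_j$.}
Note that one cannot hope to prove the naive strengthening $\J_j\setminus\I_i\subseteq\Safe{\I_i}$, which can fail (a fresh atom of $\NI'$ may be underivable from $\I_i$ directly yet become derivable after intermediate atoms are added); this is exactly why I ask for reachability by a possibly multi-step induction rather than a single step.

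The edge cases $T(i,0)$ and $T(0,j)$ hold because $\J_0=\I_0=\I$, so the relevant segments of $\NI$ and $\NI'$ themselves witness reachability. For the successor step $T(i,j+1)$, first component, I would take the natural induction from $\I_i$ to $\I_i\cup\J_j$ supplied by $T(i,j)$ and append a single step deriving the fresh atoms $\J_{j+1}\setminus\J_j$; these atoms are safely derivable in $\J_j$ because $\NI'$ is safe, and the natural induction \emph{from} $\J_j$ \emph{to} $\I_i\cup\J_j$ provided by the second component of $T(i,j)$ is then a natural induction from $\J_j$ whose limit $\I_i\cup\J_j$ must derive each of them. This is precisely what makes the appended step legal. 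The second component of $T(i,j+1)$ is obtained symmetrically, replaying the steps of $\NI$ on top of $\J_{j+1}$ and invoking the first components $T(m,j+1)$ for $m<i$ to certify that each fresh atom of $\NI$ stays derivable over the enlarged structure. Limit stages are handled by taking unions, and organising the recursion lexicographically (outer on $j$, inner on $i$) keeps every appeal to the hypothesis at strictly smaller indices.

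The main obstacle is the legality of this ``replay'' step: because $\der$ is not monotone, $\J_j\der D$ alone does not give $(\I_i\cup\J_j)\der D$ for a newly derived atom $D$. The device that resolves this is the interplay of safety with the symmetric half of the induction hypothesis: safety upgrades $\J_j\der D$ to $D\in\Safe{\J_j}$, and $T(i,j)$ exhibits $\I_i\cup\J_j$ as the limit of a natural induction from $\J_j$, whence $D\in\Safe{\J_j}$ forces $(\I_i\cup\J_j)\der D$. Making this mutual dependence close cleanly, and checking the limit-ordinal stages of the double induction, is where essentially all the work lies; the reductions of the first two paragraphs are routine.
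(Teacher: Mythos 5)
Your proof is correct, and it rests on the same two pillars as the paper's own proof: a well-founded induction over pairs of ordinals, and Proposition~\ref{prop:monotone:safety} applied to a ``cross'' sequence of the form $\struct{\I_i\cup\J_\alpha}_{\alpha\leq j}$. The difference is in the decomposition. The paper argues by minimal counterexample in the product order, taking Proposition~\ref{prop:safe} itself as the induction hypothesis: minimality makes every step of the horizontal sequence $\struct{\I_i\cup\J_\alpha}_{\alpha\leq j}$ a safe derivation, so that sequence is outright a natural induction from $\I_i$ with limit $\I_i\cup\J_j$, and Proposition~\ref{prop:monotone:safety} then moves the atom $A\in\I_{i+1}\setminus\I_i$ from $\Safe{\I_i}$ into $\Safe{\I_i\cup\J_j}$, giving the contradiction; the other clause is symmetric. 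You instead isolate an auxiliary two-sided reachability invariant $T(i,j)$ and prove it by lexicographic double induction, re-certifying the legality of each appended or replayed step from safety of the original inductions together with the opposite-direction component of the hypothesis (since $\der$ is not monotone, this certification is indeed necessary, and your mutual recursion is well-founded as organized). What the paper's formulation buys is brevity: because the proposition's two clauses at strictly smaller pairs already assert that horizontal and vertical steps are \emph{safe} derivations, no separate derivability argument is needed. What your formulation buys is explicitness: it brings out the matrix-path picture and the horizontal/vertical interdependence that the paper only states informally after the proof, and your observation that the naive strengthening $\J_j\setminus\I_i\subseteq\Safe{\I_i}$ can fail is a genuine point the paper does not spell out.
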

\begin{proof}
The product order $\leq$ for ordinal pairs (given by $(i,j)\leq (k,l)$ if $i\leq k, j\leq l$) is a well-founded order, hence every set of such pairs contains minimal elements in this order.

Assume towards contradiction  that pairs $(i,j)\leq (\beta,\gamma)$ exist that contradict the proposition, and let $(i,j)$ be a minimal such pair in the product order. Hence, either $\I_{i+1}\cup \J_j$ exists and is not safely derivable from $\I_i\cup \J_j$, or $\I_{i}\cup \J_{j+1}$ exists and is not safely derivable. 

Assume that it is the first case. Thus, $\I_{i+1}\cup \J_j$ exists (i.e., $i+1\leq\beta$) and  $\I_{i+1}\cup \J_j$ is not safely derivable from $\I_i\cup \J_j$: at least one domain atom $A\in \I_{i+1}\setminus \I_i$ is safely derivable from $\I_i$  but not from $\I_i\cup\J_j$. 
By the minimality of $(i,j)$,  the sequence $\NI"=\struct{\I_i\cup\J_\alpha}_{0\leq\alpha\leq j}$ contains only safe derivations and hence, it is a natural induction from $\I_i$ to $\I_i\cup\J_j$. Since the set of safely derivable domain atoms grows in this sequence (Proposition~\ref{prop:monotone:safety}), $A$ is safely derivable from $\I_i\cup\J_j$. We obtain the contradiction.  The second case is obtained by symmetry. 
\end{proof}

It follows that every path in the ``matrix'' of structures $\I_i\cup\J_j$ obtained by incrementing at each step either i or j by 1, is a safe natural induction. 

\begin{definition} $\NI$ is  \emph{safe-terminal} if $\NI$ is safe and $\Safe{\limit{\NI}}\subseteq \limit{\NI}$. 
\end{definition} 

In other words, a safe natural induction $\NI$ is safe-terminal if it cannot be extended to a larger safe natural induction.  

\begin{theorem} All safe-terminal natural inductions converge to the same structure.
\end{theorem}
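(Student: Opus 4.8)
The plan is to exploit the confluence property of safe inductions established in Proposition~\ref{prop:safe} together with the defining ``fixpoint'' condition of safe-terminality, namely $\Safe{\limit{\NI}}\subseteq\limit{\NI}$. Let $\NI=\struct{\I_\alpha}_{\alpha\leq\beta}$ and $\NI'=\struct{\J_\alpha}_{\alpha\leq\gamma}$ be two safe-terminal natural inductions. As natural inductions in the sense of Definition~\ref{def:nat:ind} they both start from the empty structure, so $\I_0=\J_0=\emptyset$; in particular they start from the same structure, which is exactly the hypothesis needed to invoke Proposition~\ref{prop:safe}. Writing $\I_\beta=\limit{\NI}$ and $\J_\gamma=\limit{\NI'}$, it suffices to prove $\J_\gamma\subseteq\I_\beta$, since the reverse inclusion then follows by the symmetric argument and together they give $\I_\beta=\J_\gamma$.

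First I would form the ``top row'' of combined structures $\struct{\I_\beta\cup\J_j}_{0\leq j\leq\gamma}$ of the matrix $\I_i\cup\J_j$. By Proposition~\ref{prop:safe}, applied with the first index fixed at $\beta$ and the second index incremented, each step $\I_\beta\cup\J_j\to\I_\beta\cup\J_{j+1}$ is a safe derivation; and at a limit ordinal $\lambda$ one has $\I_\beta\cup\J_\lambda=\bigcup_{j<\lambda}(\I_\beta\cup\J_j)$, so the union condition of a natural induction is met. Hence this row is itself a safe natural induction from $\I_\beta$.

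The core of the argument is then a transfinite induction on $j$ showing $\J_j\subseteq\I_\beta$ for all $j\leq\gamma$. The base case $\J_0=\emptyset\subseteq\I_\beta$ is immediate. For the successor case, assume $\J_j\subseteq\I_\beta$, so that $\I_\beta\cup\J_j=\I_\beta$. Since $\I_\beta\cup\J_{j+1}$ is safely derivable from $\I_\beta\cup\J_j=\I_\beta$, every atom of $\J_{j+1}\setminus\I_\beta$ lies in $\Safe{\I_\beta}$. But $\NI$ is safe-terminal, so $\Safe{\I_\beta}\subseteq\I_\beta$, forcing $\J_{j+1}\setminus\I_\beta=\emptyset$, that is $\J_{j+1}\subseteq\I_\beta$. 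The limit case follows from $\J_\lambda=\bigcup_{j<\lambda}\J_j\subseteq\I_\beta$. Taking $j=\gamma$ gives $\J_\gamma\subseteq\I_\beta$, and the symmetric argument (using that $\NI'$ is safe-terminal) gives $\I_\beta\subseteq\J_\gamma$, completing the proof.

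I expect the main obstacle to be organizational rather than technical: the decisive idea is that one should not try to synchronize the two inductions step by step, but instead slide the whole of $\NI'$ along the completed limit $\I_\beta$ of $\NI$, so that the safe-terminality of $\NI$---the fact that no new atom is safely derivable from $\I_\beta$---collapses each increment to nothing. The real work of guaranteeing that this sliding sequence is genuinely safe has already been done in Proposition~\ref{prop:safe} (which in turn relies on the monotonicity of safe derivability from Proposition~\ref{prop:monotone:safety}); once that is in hand, only the routine bookkeeping of the successor and limit stages of the transfinite induction remains.
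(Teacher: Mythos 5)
Your proof is correct and follows essentially the same route as the paper: both slide $\NI'$ along the completed limit $\I_\beta$ via the matrix of Proposition~\ref{prop:safe} and use safe-terminality ($\Safe{\I_\beta}\subseteq\I_\beta$) to collapse every increment, then conclude by symmetry. The only difference is presentational—you spell out as an explicit transfinite induction what the paper compresses into the remark that the concatenated sequence is constant from stage $\beta$ onward.
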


\begin{proof} Take two safe-terminal natural inductions $\NI = \struct{\I_\alpha}_{\alpha\leq\beta}$, $\NI' = \struct{\J_\alpha}_{\alpha\leq\gamma}$. Consider the sequence $\struct{\K_\alpha}_{\alpha\leq\beta+\gamma}$ where $\K_\alpha=\I_\alpha$ if $\alpha\leq\beta$ and  $\K_{\beta+\alpha}=\I_\beta\cup\J_\alpha$ if $\alpha\leq\gamma$. This sequence corresponds to the path through the matrix  going  first from $\emptyset$ to $\I_\beta$ following $\NI$ and then from $\I_\beta$ to $\I_\beta\cup\J_\gamma$. By Proposition~\ref{prop:safe}, this is a  safe natural induction. Since  $\Safe{\I_\beta}=\emptyset$, the sequence   is constant starting from $\K_\beta$; i.e., $\K_{\beta+\alpha}=\I_\beta$ for all $\alpha\leq\gamma$. Hence $\I_\beta = \I_\beta\cup\J_\gamma$ and $\J_\gamma\subseteq \I_\beta$. By a symmetrical argument also the converse inclusion holds.
\end{proof}

\begin{definition} The structure {\em safely defined} by $\D$ in $\OO$ is the limit of any  safe-terminal natural induction of $\D$ in $\OO$.
\end{definition}

The results of this section show that imposing safety on natural inductions ensures confluence. We still need to show that natural inductions that respect a suitable induction order are safe. This is done in the next section.

\section{Existence and confluence of natural inductions in ordered and iterated definitions}
\label{sec:analysis}

We now explore basic  properties of informal inductive definitions.  Often they are ``evident'' to us; some are critical for practical reasoning on informal inductive definitions. Nevertheless, they  are non-trivial and here we  prove them  in the context of the formal framework. 


\paragraph{Existence of terminal natural inductions} 

We show that the condition on $\dep$ in the definition of ordered and iterated inductive definitions that $\stdep$ is a strict well-founded order, serves to ensure that a sound non-terminal natural induction can always be extended to a terminal one.  Thus, a sound induction process cannot ``stall'' in the middle. 

\begin{proposition}\label{prop:extendableNI}
  Let $\D$ be a definition, $\OO$ a context structure, $\dep$ a transitive
  binary relation on $\domat{\defp{\D}}{D}$. If $\stdep$ is a
  well-founded strict order, then any natural induction $\NI$ that
  respects $\dep$ can be extended to a terminal natural induction that
  respects $\dep$.
\end{proposition}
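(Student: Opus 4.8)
The plan is to extend $\NI$ by transfinite recursion, at each successor step deriving a $\stdep$-minimal atom among those that are currently derivable but not yet true, and then to argue that this process (i) only ever performs legal natural-induction steps, (ii) respects $\dep$, and (iii) necessarily terminates at a closed structure. Write $\limit{\NI} = \I_\beta$ for the limit of $\NI$. I would continue the sequence past $\I_\beta$ as follows. At a successor stage with current structure $\I_\alpha$, consider the set $U_\alpha = \{A \in \domat{\defp{\D}}{D} \mid \I_\alpha \der A \text{ and } A \notin \I_\alpha\}$ of derivable-but-not-yet-true defined atoms. If $U_\alpha = \emptyset$, then $\I_\alpha$ is closed and I halt; otherwise, since $\stdep$ is a well-founded strict order, $U_\alpha$ has a $\stdep$-minimal element $A$, and I set $\I_{\alpha+1} = \I_\alpha \cup \{A\}$. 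At limit stages I take unions, as required by Definition~\ref{def:nat:ind}. Since each successor step adds only an atom $A$ with $\I_\alpha \der A$, the resulting sequence, which begins with $\NI$, is again a natural induction (from $\emptyset$) extending $\NI$.

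The heart of the argument is that each new step respects $\dep$, i.e.\ respects $\stdep$ in the sense of Definition~\ref{def:respects}. When $A$ is derived from $\I_\alpha$, I must verify that $\I_\alpha$ is saturated on $\{B \mid B \stdep A\}$. So suppose $B \stdep A$ and $\I_\alpha \der B$; if $B$ were not true in $\I_\alpha$, then $B \in U_\alpha$ together with $B \stdep A$ would contradict the $\stdep$-minimality of $A$ in $U_\alpha$. Hence every such $B$ is already true in $\I_\alpha$, which is exactly saturation. The initial segment inherited from $\NI$ respects $\dep$ by hypothesis, and respect is a condition only on successor (derivation) transitions — at a limit $\lambda$ the structure $\I_\lambda$ contains no atom not already present in some earlier $\I_\alpha$ — so the whole extended sequence respects $\dep$.

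For termination, note that each non-halting successor step strictly enlarges the structure by a fresh atom that is thereafter never removed, so distinct successor transitions add distinct elements of the fixed set $\domat{\defp{\D}}{D}$. Were the construction to run through all ordinals, the map sending each transition to its newly derived atom would inject the proper class of ordinals into a set, which is impossible; hence the construction halts at some ordinal $\beta'$. By construction, halting means $U_{\beta'} = \emptyset$, i.e.\ $\I_{\beta'}$ is closed. The extended induction is therefore terminal and respects $\dep$, as claimed.

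The step I expect to be the main obstacle is not the termination bookkeeping but pinning down the respect condition precisely: confirming that choosing $A$ \emph{minimal among the derivable-but-not-yet-true atoms} is exactly what delivers saturation on $\{B \mid B \stdep A\}$, and that it is the well-foundedness of $\stdep$ (rather than any property of $\dep$ beyond the stated hypotheses) that guarantees such a minimal element exists at every stage where the structure is not yet closed.
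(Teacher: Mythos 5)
Your proof is correct, and its heart coincides with the paper's: both hinge on the observation that a $\stdep$-minimal element $A$ of the set of derivable-but-not-yet-true atoms can legally be derived, because minimality forces the current structure to be saturated on $\{B \mid B \stdep A\}$, which is exactly the respect condition of Definition~\ref{def:respects}, and it is the well-foundedness of $\stdep$ that supplies such a minimal element. The packaging, however, differs. The paper argues by contradiction: it assumes, \emph{without loss of generality}, that $\NI$ is a maximal sequence respecting $\dep$ that cannot be further extended, and then refutes maximality in one step by deriving a minimal derivable-but-false atom. You instead run an explicit transfinite recursion, at each successor stage deriving a $\stdep$-minimal element of $U_\alpha$, and you prove termination by the cardinality argument that each successor transition injects a fresh atom of the fixed set $\domat{\defp{\D}}{D}$, so the recursion cannot proceed through all ordinals. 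Your route buys something genuine: the paper's ``WLOG maximal'' step itself requires justification (the existence of a maximal, non-extendable sequence is essentially a Zorn-type or ordinal-bounding argument), and your termination bookkeeping is exactly that missing justification made explicit. The paper's route buys brevity. Your observations that respect only constrains successor (derivation) transitions and that limit stages add nothing new are also correct and worth making explicit, since the paper leaves them implicit.
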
 
Note that for this proposition to hold, it is not necessary that $\dep$ is a dependency relation of $\D$ in $\OO$ but only  that $\stdep$ is a strict well-founded order. 

\begin{proof} Assume towards contraction that $\NI$ respects $\dep$
  but cannot be extended to a terminal natural induction that respects
  $\dep$. Without loss of generality, we may assume that $\NI$ is a
  maximal such a sequence, that is, it respects $\dep$ but cannot be
  extended to a natural induction that respects $\dep$. Let $\NI$'s
  last element be $\I_\beta$. Since $\NI$ is non-terminal, there
  exists at least one $A$ such that $\I_\beta\der A$ and
  $A^{\I_\beta}=\Fa$. Consider the set of all such atoms. Since $\stdep$
  is a well-founded order, this set has at least one $\stdep$-minimal element
  $A$. Due to its minimality, $\I_\beta$ is saturated on $\{B
  \in\domat{\defp{\D}}{D} \mid B \stdep A\}$. Hence, the extension of
  $\NI$ with $\I_{\beta}\cup \{A\}$ is a natural induction that
  respects $\dep$. Contradiction.
\end{proof} 

The condition of well-foundedness of $\stdep$ is necessary. E.g., when interpreting the definition $\D_{ev}$ in the context of the integer numbers instead of the natural numbers,  the strict order $\{Ev(n)\order Ev(m)\mid n<m \in \integers\}$ is a dependency of the definition $\D_{ev}$. Nevertheless, the definition does not have non-trivial natural inductions that respect this order, and this is due to the fact that the order is not well-founded.

\paragraph{Confluence of terminal natural inductions.}

\begin{proposition} A terminal safe natural induction is safe-terminal.
\end{proposition}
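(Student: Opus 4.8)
The plan is to observe that the hypothesis already delivers one of the two clauses of safe-terminality, namely safety of $\NI$, so the entire task reduces to establishing the remaining inclusion $\Safe{\limit{\NI}}\subseteq\limit{\NI}$. Accordingly I would fix an arbitrary atom $A\in\Safe{\limit{\NI}}$ and argue that $A\in\limit{\NI}$.

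First I would unwind the definition of safe derivability. By definition, membership $A\in\Safe{\limit{\NI}}$ requires in particular that $\limit{\NI}\der A$: the plain derivability condition is one of the two conjuncts in the definition of being safely derivable in a structure (one may also read it off by considering the trivial zero-length natural induction from $\limit{\NI}$, whose limit is $\limit{\NI}$ itself). Hence $A$ is derivable from the limit structure.

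Next I would invoke terminality. Writing $\I_\beta=\limit{\NI}$, the assumption that $\NI$ is terminal says exactly that $\I_\beta$ is closed (saturated) under $\D$ in $\OO$, i.e.\ for every defined domain atom $B$, $\I_\beta\der B$ implies $B^{\I_\beta}=\Tr$. Applying this to $A$ together with $\I_\beta\der A$ yields $A^{\I_\beta}=\Tr$, that is, $A\in\I_\beta=\limit{\NI}$. Since $A$ was arbitrary, this gives $\Safe{\limit{\NI}}\subseteq\limit{\NI}$, and combined with the assumed safety of $\NI$ it shows that $\NI$ is safe-terminal.

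I do not expect a genuine obstacle: the statement follows purely by unwinding the two definitions, the crux being that \emph{safely derivable} already entails ordinary derivability, while \emph{terminal} (i.e.\ closed) forces every derivable atom to be true in the limit. The only point warranting a moment's care is to make explicit that the derivability clause is built into $\Safe{\cdot}$, so that the closure property guaranteed by terminality can legitimately be applied to each safely derivable atom.
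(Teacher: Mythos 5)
Your proposal is correct and is exactly the paper's argument, which the paper compresses to one line ("Trivial since safely derivable atoms are derivable"): safe derivability entails derivability, and terminality (closedness) then forces every such atom into the limit, giving $\Safe{\limit{\NI}}\subseteq\limit{\NI}$, while safety of $\NI$ is already assumed. Your version merely spells out the unwinding of the two definitions explicitly.
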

\begin{proof} Trivial since safely derivable atoms are derivable.
\end{proof}

All safe-terminal natural inductions converge. Safe natural inductions that are terminal are safe-terminal. Thus, to prove the confluence of terminal natural inductions of $\D$ respecting a suitable $\dep$, it suffices to prove that  natural inductions respecting $\dep$ are safe. 

Let $\dep$ be an arbitrary binary relation on the defined domain atoms.

\begin{definition}   A  set $S$ of domain atoms is \emph{$\dep$-closed} if for all $A\in S$, for all $B\dep A$, it holds that $B\in S$. 
\end{definition}
We observe that if   $\dep$ is transitive then for every $A$, $\{B \mid B \dep A\}$ is $\dep$-closed. 

The next proposition states that once some intermediate structure
$\I_i$ in a natural induction $\NI$ is saturated on a $\dep$-closed set $S$, then the value and derivability of atoms of $S$ does not change anymore later in $\NI$.

\begin{proposition}\label{prop:inv:nat:ind} Assume $\dep$ monotically orders $\D$ in $\OO$.  Let $\I$ be a structure and $\NI$ an arbitrary natural induction from $\I$. 
\begin{enumerate}
\item  If $\I$ is saturated on a $\dep$-closed set $S$, then $\limit{\NI}\cap S = \I\cap S$.
\item If  $\I$ is saturated on $\{ B \mid B\dep A\}$ then if $\I\der A$ then $\limit{\NI} \der A$. 
\end{enumerate}

\end{proposition}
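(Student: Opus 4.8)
The plan is to establish part~1 first and then derive part~2 from it. For part~1, since any natural induction is $\leqt$-increasing, hence $\subseteq$-increasing, we have $\I \subseteq \limit{\NI}$ and therefore $\I \cap S \subseteq \limit{\NI} \cap S$; the content of the claim is the reverse inclusion, i.e.\ that no atom of $S$ is ever added along $\NI$. I would argue by contradiction: suppose some atom of $S$ lies in $\limit{\NI} \setminus \I$, and let $\alpha$ be the least ordinal with $\I_\alpha \cap S \supsetneq \I \cap S$. Because limit stages take unions, $\alpha$ cannot be a limit ordinal, so $\alpha = i+1$, $\I_i \cap S = \I \cap S$ by minimality, and there is some $A \in (\I_{i+1}\setminus \I_i)\cap S$. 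By the definition of natural induction, $\I_i \der A$.

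Now I would exploit the $\dep$-closedness of $S$. Since $A \in S$ and $S$ is $\dep$-closed, $\{B \mid B \dep A\} \subseteq S$, and on $S$ the structures $\I$ and $\I_i$ coincide (because $\I_i \cap S = \I \cap S$); hence $\res{\I}{\dep A} = \res{\I_i}{\dep A}$. By Proposition~\ref{prop:mono:is:dep:spec}, $\dep$ is an ordinary dependency relation of $\D$ in $\OO$, so $\I \der A$ iff $\I_i \der A$; as $\I_i \der A$, we conclude $\I \der A$. Since $\I$ is saturated on $S$ and $A \in S$, this forces $A \in \I$, whence $A \in \I \cap S = \I_i \cap S$, contradicting $A \notin \I_i$. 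This establishes $\limit{\NI}\cap S = \I \cap S$.

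For part~2, I would apply part~1 to the set $S := \{B \mid B \stdep A\}$. This set is $\dep$-closed: if $C \stdep A$ and $D \dep C$, then $D \dep A$ by transitivity, and $A \not\dep D$ (otherwise $A \dep D \dep C$ would give $A \dep C$, against $C \stdep A$), so $D \stdep A$, i.e.\ $D \in S$. Moreover $S \subseteq \{B \mid B \dep A\}$, so saturation of $\I$ on $\{B \mid B\dep A\}$ immediately yields saturation of $\I$ on $S$. Part~1 then gives $\res{\limit{\NI}}{\stdep A} = \res{\I}{\stdep A}$. Finally, since $\I \subseteq \limit{\NI}$ we also have $\res{\I}{\dep A} \subseteq \res{\limit{\NI}}{\dep A}$, so the defining condition of a monotone dependency relation applies with $\J = \limit{\NI}$: from $\I \der A$ we obtain $\limit{\NI} \der A$, as required.

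The main obstacle is the bookkeeping in part~1, namely isolating the least stage at which $S$ first grows and recognising that $\dep$-closedness of $S$ is exactly what allows one to replace $\I_i$ by $\I$ in the derivability test for $A$ (via the dependency-relation property of $\dep$). Once this reduction is in place, the monotone-dependency hypothesis supplies the upward transfer needed for part~2 essentially for free, and the two parts fit together through the single well-chosen $\dep$-closed set $\{B\mid B\stdep A\}$.
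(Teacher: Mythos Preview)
Your proof is correct and follows essentially the same approach as the paper: for part~1 you use a least-counterexample argument where the paper writes out an explicit transfinite induction on $\alpha$, but the underlying mechanism---use $\dep$-closedness of $S$ to transfer derivability between $\I_i$ and $\I$ via the dependency property, then invoke saturation---is identical. For part~2 you are in fact slightly more explicit than the paper, which simply asserts that $\{B\mid B\stdep A\}$ is $\dep$-closed and implicitly uses that saturation on $\{B\mid B\dep A\}$ restricts to saturation on the smaller set; your verification of both points is correct.
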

\begin{proof} 
Let $\NI$ be of the form $\struct{\I_\alpha}_{0\leq\alpha\leq\beta}$ with $\I_0=\I$. 

(1) We prove $\I_\alpha\cap S=\I\cap S$ for all $\alpha\leq\beta$  by induction on $\alpha$. 

It  trivially holds for $\alpha=0$ since $\I_0=\I$.

 Assume it holds for $\alpha$. For any $A\in S$, it holds that $\{B \mid B\dep A\}\subseteq S$, hence $\I_\alpha\cap \{B\mid B\dep A\} = \I\cap \{B\mid B\dep A\}$. Since $\dep$ is  dependency of $\D$, it follows that $\I_\alpha\der A$ iff $\I\der A$. Since $\I$ is saturated on $S$, it is saturated on $\{B\mid B\dep A\}$. Hence, if $\I_{\alpha}\der A$ then $A\in \I$. Consequently, $\I_{\alpha+1}\cap S=\I\cap S$. 

Finally, assume that for limit ordinal $\lambda$, for every $\alpha<\lambda$, it holds that $\I_\alpha\cap S=\I\cap S$. Then obviously, $\I_\lambda\cap S= (\cup_{\alpha<\lambda} \I_\alpha)\cap S = \cup_{\alpha<\lambda}( \I_\alpha\cap S) = \I\cap S$. 

(2) Since the set $\{B \mid B\stdep A\}$ is $\dep$-closed, it follows from (1) that $$\I\cap\{B\mid B\stdep A\} = \limit{\NI}\cap\{B\mid B\stdep A\}$$ Also, it holds that $\I\subseteq \limit{\NI}$, hence $$\I\cap \{B\mid B\dep A\}\subseteq \limit{\NI}\cap \{B\mid B\dep A\}$$ Since $\dep$ is a monotone dependency, it holds that $\I\der A$ entails that $\limit{\NI}\der A$. 
\end{proof}

\begin{proposition} \label{prop:respect:safe}
Let $\dep$ be a monotone dependency of $\D$ in $\OO$. 

(a) If a natural induction $\NI = \struct{\I_\alpha}_{0\leq \alpha \leq\beta}$ respects $\dep$ then $\NI$ is safe. 

(b) If for some $i\geq 0$,  $\I_i$ is saturated on $\{B \mid B\dep A\}$ and $A\not\in\I_i$, $A$ is strictly underivable. 

(c) If $\I_i\der A$ and $\I$ is saturated on $\{B\mid B\stdep A\}$, then  $A$ is safely derivable in $\I_i$.
\end{proposition}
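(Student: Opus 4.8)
The workhorse for all three parts is Proposition~\ref{prop:inv:nat:ind}, which freezes the truth and derivability of a defined atom once all of its $\dep$-predecessors have been settled. The plan is to prove (c) first, obtain (a) as an immediate corollary, and then handle (b) by a parallel freezing argument. Throughout I take $\dep$ transitive (as is implicit both in the notation $\stdep$ and in the phrase ``respects $\dep$''); transitivity together with the monotone-dependency property is all that the proof of Proposition~\ref{prop:inv:nat:ind} actually uses, so I am free to invoke it here.

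For (c) I would fix $A$ with $\I_i\der A$ and $\I_i$ saturated on $\{B\mid B\stdep A\}$, and show that $\limit{\NI'}\der A$ for every natural induction $\NI'$ of $\D$ from $\I_i$; this is exactly safe derivability of $A$ in $\I_i$. As $\dep$ is transitive, the set $S=\{B\mid B\stdep A\}$ is $\dep$-closed, so Proposition~\ref{prop:inv:nat:ind}(1) gives $\res{\I_i}{\stdep A}=\res{\limit{\NI'}}{\stdep A}$, and $\I_i\subseteq\limit{\NI'}$ gives $\res{\I_i}{\dep A}\subseteq\res{\limit{\NI'}}{\dep A}$. These two facts are precisely the antecedent of the monotone-dependency condition (Definition~\ref{def:mon:dep}) for the pair $\I_i,\limit{\NI'}$, so $\I_i\der A$ yields $\limit{\NI'}\der A$. (This is in essence Proposition~\ref{prop:inv:nat:ind}(2), whose proof in fact only uses saturation on $\{B\mid B\stdep A\}$, which is exactly what I have.)

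Part (a) then falls out of (c). If $\NI=\struct{\I_\alpha}_{0\leq\alpha\leq\beta}$ respects $\dep$, i.e.\ respects $\stdep$, then at each successor step and for each $A\in\I_{\alpha+1}\setminus\I_\alpha$ the definition of a natural induction gives $\I_\alpha\der A$, while respecting $\stdep$ (Definition~\ref{def:respects}) gives that $\I_\alpha$ is saturated on $\{B\mid B\stdep A\}$. These are exactly the hypotheses of (c), so every such $A$ is safely derivable in $\I_\alpha$; hence $\I_\alpha\subseteq\I_{\alpha+1}\subseteq\I_\alpha\cup\Safe{\I_\alpha}$, which says $\I_{\alpha+1}$ is safely derivable from $\I_\alpha$. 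As this holds at every successor step, $\NI$ is safe.

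For (b) I would freeze again, now on the $\dep$-closed set $\{B\mid B\dep A\}$: for an arbitrary $\NI'$ from $\I_i$, Proposition~\ref{prop:inv:nat:ind}(1) gives $\res{\I_i}{\dep A}=\res{\limit{\NI'}}{\dep A}$, and since $\dep$ is a dependency relation (Proposition~\ref{prop:mono:is:dep:spec}) the derivability of $A$ is determined by $\res{\cdot}{\dep A}$, so $\I_i\der A$ iff $\limit{\NI'}\der A$. It then remains only to see that $\I_i\notder A$, after which $\limit{\NI'}\notder A$ for every $\NI'$, i.e.\ $A$ is strictly underivable in $\I_i$. This last step is the one delicate point, and is where the hypothesis $A\notin\I_i$ enters: when $A\dep A$ we have $A\in\{B\mid B\dep A\}$, so saturation of $\I_i$ on this set forces $\I_i\der A\Rightarrow A\in\I_i$, contradicting $A\notin\I_i$. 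For the degenerate atoms that do not depend on themselves (such as $P$ in $\{P\rul P\lor\neg P\}$ with $\dep=\emptyset$) the freezing still holds, but there $A$ may be \emph{safely derivable} rather than strictly underivable, so the statement is to be read with $\I_i\notder A$ in force; establishing that $\I_i$'s verdict on $A$ is the frozen one is the crux of both (b) and (c).
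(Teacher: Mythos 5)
Your proposal is correct and takes the same route as the paper, whose entire proof reads: ``(a) follows from (c); (b) and (c) are straightforward consequences of Proposition~\ref{prop:inv:nat:ind}.'' Your write-up is exactly that reduction, spelled out, and your side remarks repair real imprecisions that the paper's one-liner glosses over. First, Proposition~\ref{prop:inv:nat:ind} is stated for a $\dep$ that \emph{monotonically orders} $\D$, while here $\dep$ is only assumed to be a monotone dependency; your check that its proof uses nothing beyond transitivity and the monotone-dependency condition (in particular, never the well-foundedness of $\stdep$) is what licenses the citation, and some such remark is genuinely needed. Second, part (c) assumes saturation only on $\{B\mid B\stdep A\}$, whereas Proposition~\ref{prop:inv:nat:ind}(2) as stated demands saturation on the larger set $\{B\mid B\dep A\}$; as you observe, its proof only uses the weaker hypothesis, and you correctly re-run that argument (freeze $\res{\cdot}{\stdep A}$ via part (1), then apply Definition~\ref{def:mon:dep} using $\I_i\subseteq\limit{\NI'}$). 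Third, and most substantively, your caveat on (b) points to a defect of the statement rather than of your proof: as literally written, (b) is false for atoms with $A\not\dep A$. For instance, for $\defin{P\rul\Tr}$ with the empty relation as (transitive) monotone dependency, $\I_0=\emptyset$ is vacuously saturated on $\{B\mid B\dep P\}=\emptyset$ and $P\notin\I_0$, yet $P$ is safely derivable from $\I_0$, not strictly underivable. The inference from ``$A\notin\I_i$ and $\I_i$ saturated on $\{B\mid B\dep A\}$'' to ``$\I_i\notder A$'' is precisely the leap the paper's proof makes silently, and it is valid exactly when $A\dep A$, which is where your case analysis places it.
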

\begin{proof}
(a) follows from (c).  (b) and (c) are 
straightforward consequences of Proposition~\ref{prop:inv:nat:ind}. 
\end{proof}

The above proposition is a formalization of properties of informal
inductive definitions that, just like the confluence property of
inductive constructions, we may easily take for granted but that are
indispensable for practical reasoning. Indeed, they offer a way
of deciding membership of certain facts in the defined relation while
constructing only a fraction of it. To  decide whether
$A$ belongs to the defined set, we ``tweak'' a partial induction
process towards deriving $A$ or towards saturation on $\{ B \mid B
\dep A\}$. If $A$ is derived, then it belongs to the defined set. If the natural induction gets saturated on $\{ B \mid B \dep A\}$ and has not derived $A$, $A$ does not belong to the defined set.

\begin{example} It hold that $\{Q\}\models  \neg P\land Q$ and $\{Q\}\not\models\neg(\neg P\land Q)$. We can prove both using the following very short non-terminal natural induction of $\D_{\models}$:
  \[ (\{Q\},\neg P) \ra (\{Q\},Q) \ra (\{Q\},\neg P\land Q) \] 
Indeed, this natural induction  respects  the induction order $\order_{\models}$ of Example~\ref{ex:sat:order} and hence, it can be extended to a terminal one that converges to the defined set; hence, the defined set includes $(\{Q\},\neg P\land Q)$. Also, the limit of this short natural induction  is saturated on $\{A \mid A\order_{\models} (\{Q\},\neg(\neg P\land Q))\}$ and $(\{Q\},\neg(\neg P\land Q))$ is not derivable from it.
\end{example}

\begin{theorem}\label{theo:convergence1}
Assume $\dep$ monotonically orders $\D$ in $\OO$. Then terminal natural inductions that respect $\dep$ exist and all converge to the same limit.  Moreover, the limit is independent of $\dep$. It is the safely defined structure of $\D$ in $\OO$.  
\end{theorem}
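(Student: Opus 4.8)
The plan is to treat this theorem as the capstone that assembles the safe-induction machinery of Section~\ref{sec:safe} together with Proposition~\ref{prop:respect:safe}, so the real work is to verify that the single hypothesis ``$\dep$ monotonically orders $\D$ in $\OO$'' licenses each earlier result in turn. First I would unpack Definition~\ref{def:monotonically:orders} into its three components: $\dep$ is transitive, $\stdep$ is a strict well-founded order, and $\dep$ is a monotone dependency of $\D$ in $\OO$. Each component will feed a different part of the claim, and keeping track of which is needed where is the main organizational task.

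For \emph{existence}, I would start from the trivial one-element sequence $\struct{\emptyset}$, which is a natural induction with $\I_0=\emptyset$ and which respects $\dep$ vacuously, since no atom is derived and the condition of Definition~\ref{def:respects} then holds with no content. Because $\stdep$ is a well-founded strict order, Proposition~\ref{prop:extendableNI} applies directly and extends this to a terminal natural induction that respects $\dep$. I would emphasize that this step uses only well-foundedness of $\stdep$, not the dependency property.

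For \emph{convergence} and the identification of the limit, I would take an arbitrary terminal natural induction $\NI$ respecting $\dep$. Since $\dep$ is a monotone dependency, Proposition~\ref{prop:respect:safe}(a) yields that $\NI$ is safe; being also terminal, it is safe-terminal (a terminal safe natural induction is safe-terminal, because safely derivable atoms are derivable). The confluence theorem for safe-terminal natural inductions then forces $\limit{\NI}$ to equal the safely defined structure of $\D$ in $\OO$. This single chain simultaneously gives that all such $\NI$ share one limit and that this limit is the safely defined structure, thereby discharging the final sentence of the statement.

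\emph{Independence} from $\dep$ is then immediate and is the cleanest part: the safely defined structure is specified purely through safe and safe-terminal inductions, with no mention of any $\dep$, so a second relation $\dep'$ monotonically ordering $\D$ yields respecting terminal inductions converging to the very same structure. I do not expect a genuine obstacle here; the only delicate points are bookkeeping — checking that the empty induction qualifies as a $\dep$-respecting natural induction, and that ``terminal plus safe implies safe-terminal'' is valid — both guaranteed by the definitions and preceding propositions. In short, the substantive mathematics (confluence of safe-terminal inductions, and respecting-implies-safe) was already settled earlier, so the hardest thing in this proof is simply to confirm that every hypothesis lines up so that those results can be invoked in the correct sequence.
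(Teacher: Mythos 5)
Your proposal is correct and follows essentially the same route as the paper's own (much terser) proof: existence via Proposition~\ref{prop:extendableNI}, then the chain ``respects $\dep$ $\Rightarrow$ safe (Proposition~\ref{prop:respect:safe}(a)) $\Rightarrow$ terminal safe is safe-terminal $\Rightarrow$ confluence of safe-terminal inductions to the safely defined structure,'' with independence of $\dep$ falling out because that structure is defined without reference to any order. Your version merely makes explicit the bookkeeping the paper leaves implicit (starting from the empty induction, and which hypothesis of Definition~\ref{def:monotonically:orders} feeds which step), which is a faithful expansion rather than a different argument.
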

\begin{proof} 
Existence follows from Proposition \ref{prop:extendableNI}.
Terminal natural inductions that respect $\dep$ are safe-terminal natural inductions and all of them converge to the safely defined structure. This structure does not depend on $\dep$.
\end{proof}

We have shown here that our intuition is right: that natural inductions that delay the derivation of defined atoms until it is safe, converge. Moreover, since safe natural inductions do not depend on the induction order, the set defined by a definition over some induction order does not depend on that order. 

So far, ordered and iterated definitions were defined as pairs $(\D,\dep)$ of rule sets and suitable induction order. The confluence theorem entitles us to drop $\dep$ from the definition. 

\begin{definition}\label{def:ordered:iterated} A rule set $\D$ is an ordered definition in $\OO$ if some $\order$ strictly orders $\D$ in $\OO$. A rule set $\D$ is an iterated definition in $\OO$ if some  $\dep$ monotonically orders $\D$ in $\OO$. For any monotone, ordered or iterated definition $\D$ in $\OO$, the structure defined by $\D$ in $\OO$ is the safely defined structure.
\end{definition} 

\ignore{BART: Alleszins ook interessant om op te merken (niet per se voor in de paper): het kan zijn dat de safely defined structure niet grounded is. 

Beschouw als voorbeeld 

$a <- a | b$
$b <- ~a$
$b <- a \& b$

${} -> {b} -> {a,b}$ 

is de enige natural induction. Ze is meteen ook safe. {a,b} is niet grounded want {a} is een fixpoint dat kleiner is. 
}

\section{Other properties of definitions}
\label{sec:more:analysis}

\paragraph{Safe natural inductions go faster}

Natural inductions that respect a suitable $\dep$ are safe. 
The converse does not hold.  The following example shows that safe natural inductions do not necessarily respect the induction order, and that they may derive a fact in far fewer steps than an induction  that respects the induction order. 

\begin{example}\label{ex:sat:safe}
 A (two-step) natural induction of the satisfaction definition $\D_{\models}$ 
that does not respect the induction order:
\[ \ra  Sat(\{P\},P) \ra \{ Sat(\{P\},P\lor\varphi) \mid \varphi \mbox{ a formula over the propositional vocabulary }\satvoc\} \]
Indeed, after the first step  $Sat(\{P\},P)$ is derived in structure $\I_1$. It is easy to see that $\I_1\der Sat(\{P\},P\lor\varphi)$ for every $\varphi$. In fact, since the rule defining $ Sat(\{P\},P\lor\varphi)$ is monotone, the fact is derivable in each superset of $\I_1$ and hence, it remains derivable in every natural induction from $\I_1$. Therefore,  each domain atom $Sat(\{P\},P\lor\varphi)$ is  safely derivable from $\I_1$. Of course, the induction does not respect the induction order.    
 \end{example}

On the level of informal definitions, the safe natural induction in Example~\ref{ex:sat:safe} probably matches how many of us derive the satisfaction of a disjunction $\I\models \varphi\lor\psi$: if a
disjunct $\varphi$ is derived to be satisfied, we jump to the
conclusion that $\varphi\lor\psi$ is satisfied, even if the value of
$\psi$ is still unknown.  Strictly speaking, here we are violating the
induction order! It is nevertheless safe. This derivation step is a safe one, and  any fact derived during a safe natural induction is correct.

A difficulty of computing safe natural inductions is to determine the safety of a derivation. The following proposition gives a simple but useful criterion.


\begin{proposition}
If for structure $\I$ and defined domain atom $A$ it holds that  $\I'\der A$ for every $\I'\geqt \I$, then $A$ is safely derivable from $\I$.
\end{proposition}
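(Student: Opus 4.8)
The plan is to verify directly the two clauses in the definition of safe derivability. Recall that $A$ is safely derivable from $\I$ precisely when (i) $\I\der A$ and (ii) every natural induction $\NI$ of $\D$ from $\I$ satisfies $\limit{\NI}\der A$. Both clauses will follow immediately from the hypothesis once we note that, in the truth-order reading of $\defp{\D}$-structures as sets of defined domain atoms, $\geqt$ coincides with $\supseteq$; so the assumption simply says that $A$ is derivable in every structure that contains $\I$.

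For clause (i), I would instantiate the hypothesis with $\I'=\I$ itself: since $\geqt$ is reflexive ($\I\geqt\I$), the assumption gives $\I\der A$ at once. For clause (ii), the one observation to make is that any natural induction $\NI=\struct{\I_\alpha}_{0\leq\alpha\leq\beta}$ from $\I$ is, by Definition~\ref{def:nat:ind}, a $\leqt$-increasing sequence with $\I_0=\I$; hence its limit $\limit{\NI}=\I_\beta$ satisfies $\I\leqt\limit{\NI}$, i.e.\ $\limit{\NI}\geqt\I$. Instantiating the hypothesis with $\I'=\limit{\NI}$ then yields $\limit{\NI}\der A$, which is exactly clause (ii). Since $\NI$ was arbitrary, this establishes safe derivability.

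There is essentially no obstacle here: the statement is an immediate consequence of the definitions, the only point worth isolating being that the limit of any natural induction starting at $\I$ dominates $\I$ in the truth order, so the monotonicity hypothesis applies to it verbatim. The content of the proposition is thus not in its proof but in the convenience of its criterion: to certify a derivation of $A$ as safe, it suffices to check that $A$ is derivable in every $\I'\geqt\I$, which spares one the task of inspecting all natural inductions from $\I$ individually.
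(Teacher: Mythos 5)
Your proof is correct and follows essentially the same route as the paper's one-line argument: the limit of any natural induction from $\I$ dominates $\I$ in the truth order (equivalently, is a superset of $\I$), so the hypothesis applies to it directly. Your additional check of clause (i) via reflexivity of $\geqt$ is a harmless elaboration of what the paper leaves implicit.
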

\begin{proof} Obvious, since the limit of every natural induction from $\I$ is a superset of $\I$.
\end{proof}

The proposition reveals a simple but useful criterion to decide
whether a defined atom $A$ is safely derivable at some stage $\I_i$ of the
induction: it suffices that it is deriable by a monotone rule at that
stage.  This was exploited in  Example~\ref{ex:sat:safe}. As such the criterion for deciding safety of a derivation at the formal level proven in the proposition explains a common and useful practice with  informal definitions.

\paragraph{Natural inductions that respect $\dep$ follow $\dep$} 

An expected property is that  inductions {\em follow} the induction order, that is,  if $B\stdep A$ are both derived by $\NI$, then $B$ is derived before $A$. We already discussed this for ordered inductive definitions. The property holds more generally for iterated inductive definitions.

\begin{proposition}\label{prop:respect:follows}
Assume that $\dep$ monotonically orders $\D$ in $\OO$. If a natural induction $\NI$ respects  $\dep$ then $\NI$ follows $\dep$.
\end{proposition}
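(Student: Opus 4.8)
The plan is to fix defined atoms $A, B$ with $A \stdep B$ that are both derived by $\NI = \struct{\I_\alpha}_{0\leq\alpha\leq\beta}$, and to show directly that $A$ has already appeared by the stage at which $B$ is derived. Write $j = \rankN{B}$, so $B \in \I_{j+1}\setminus\I_j$. Because $\NI$ respects $\dep$ (that is, $\stdep$), Definition~\ref{def:respects} applied to $B$ tells us that $\I_j$ is saturated on the set $S := \{C \mid C\stdep B\}$. The target then reduces to proving $A \in \I_j$: since $A$ is first derived at stage $\rankN{A}$, we have $A\notin\I_{\rankN{A}}$ and hence $A \notin \I_\alpha$ for all $\alpha\leq\rankN{A}$, so membership $A \in \I_j$ forces $\rankN{A} < j = \rankN{B}$, which is exactly what \emph{following} $\dep$ demands.

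The key observation is that $S$ is $\dep$-closed; this is the very fact already invoked inside the proof of Proposition~\ref{prop:inv:nat:ind}(2). Concretely, if $C \in S$ and $D \dep C$, then $D \dep B$ by transitivity of $\dep$, while $B \dep D$ would give $B \dep C$ (again by transitivity), contradicting $C \stdep B$; hence $B \not\dep D$, so $D \stdep B$, i.e. $D \in S$. Thus $\I_j$ is a structure saturated on a $\dep$-closed set. I would then apply the invariance Proposition~\ref{prop:inv:nat:ind}(1) to the tail $\NI_{j\ra\beta}$, which is a natural induction from $\I_j$ with $\limit{\NI_{j\ra\beta}} = \I_\beta$. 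Since $\I_j$ is saturated on the $\dep$-closed set $S$, part~(1) yields $\I_\beta \cap S = \I_j \cap S$. As $A$ is derived by $\NI$ we have $A \in \I_\beta$, and $A \in S$ because $A \stdep B$; therefore $A \in \I_\beta\cap S = \I_j\cap S \subseteq \I_j$, giving $\rankN{A} < j = \rankN{B}$ as required.

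The main thing to get right, and the place where a naive attempt goes wrong, is to resist proving this through the monotone-dependency inequality directly. That route tempts one to compare derivability of $A$ at $\I_j$ with its derivability at $\I_{\rankN{A}}$, but the monotonicity built into Definition~\ref{def:mon:dep} propagates derivability from smaller to larger structures, i.e. in the wrong direction to conclude anything at $\I_j \subseteq \I_{\rankN{A}}$. The clean path is instead to route everything through the already-proven invariance lemma, whose statement packages precisely the stabilisation of the $\dep$-closed layer $S$ that we need. In effect this is the ordered argument of Proposition~\ref{prop:respect:strongly} with the well-founded order $\order$ replaced by $\stdep$ and the downward-closed initial segment replaced by the $\dep$-closed set $S$, so once the closedness of $S$ and the correct application of Proposition~\ref{prop:inv:nat:ind}(1) are in place, no further work is needed.
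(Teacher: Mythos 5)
Your proof is correct and follows essentially the same route as the paper's: the paper likewise takes an atom derived at stage $i$, uses the respect condition to get saturation of $\I_i$ on $\{B \mid B \stdep A\}$, and invokes Proposition~\ref{prop:inv:nat:ind}(1) to conclude that this set stabilises from stage $i$ onward, so any strictly smaller derived atom must have rank below $i$. The only difference is cosmetic: you spell out the $\dep$-closedness of the strict down-set explicitly, whereas the paper leaves it implicit (it is established inside the proof of Proposition~\ref{prop:inv:nat:ind}(2)).
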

\begin{proof} If  $A\in \I_{i+1}\setminus\I_i$, then $\I_i$ is saturated in $\{B \mid B \stdep A\}$. 
Hence, by Proposition \ref{prop:inv:nat:ind}, for every $j\geq i$, $\res{\I_j}{\stdep A}=\res{\I_i}{\stdep A}$. Hence,  if $B\stdep A$ and $B\in\I_{j+1}\setminus\I_j$, then it holds that $j<i$.
\end{proof}

Safe natural inductions do not necessarily follow the induction order. E.g., a safe induction that derives first $\{P\}\models P$, then $\{P\}\models P\lor\varphi$ may be extended to derive subformulas of $\varphi$ and in that case, it does not follow the induction order.

\paragraph{The defined structure is a fixpoint of $\D$}

Another intuitively obvious property of the considered  sorts of informal definitions is that the defined set is a fixpoint of its induced operator. Formally,  a defined domain atom  $P(\bar{a})$ holds if and only if it is derivable by one of the rules.  In other words, the defined set is a fixpoint of the induced operator $\Gamma_\D^\OO$ (Definition~\ref{def:operator}).

\begin{proposition} Let $\D$ be a definition by iterated induction over $\dep$ in $\OO$. If $\I$ is the  structure defined by $\D$ in $\OO$, then $\I$ is a fixpoint of $\Gamma_\D^\OO$.
\end{proposition}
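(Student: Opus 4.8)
The plan is to establish the two inclusions $\Gamma_\D^\OO(\I) \subseteq \I$ and $\I \subseteq \Gamma_\D^\OO(\I)$ separately, where I fix $\I$ as the limit of a concrete terminal natural induction $\NI = \struct{\I_\alpha}_{0\leq\alpha\leq\beta}$ of $\D$ in $\OO$ that respects $\dep$. Such an induction exists by Theorem~\ref{theo:convergence1} (ultimately Proposition~\ref{prop:extendableNI}), and that theorem tells us its limit is exactly the structure defined by $\D$ in $\OO$; note that $\I$ is a $\defp{\D}$-structure, so $\Gamma_\D^\OO(\I)$ is well defined and a fixpoint just means $\Gamma_\D^\OO(\I) = \I$.

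The inclusion $\Gamma_\D^\OO(\I) \subseteq \I$ is immediate from terminality: a terminal natural induction has a closed (saturated) limit, so $\I\der A$ implies $A\in\I$ for every defined domain atom $A$, which by Definition~\ref{def:operator} is precisely $\Gamma_\D^\OO(\I)\subseteq\I$. For the converse, I would take any $A\in\I=\I_\beta$. Since $\I_0=\emptyset$ and every limit stage is the union of earlier stages, $A$ is first introduced at a successor stage, i.e.\ there is an $i$ with $A\in\I_{i+1}\setminus\I_i$, and Definition~\ref{def:nat:ind} gives $\I_i\der A$. Because $\NI$ respects $\dep$, the structure $\I_i$ is saturated on $\{B\mid B\stdep A\}$. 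Applying Proposition~\ref{prop:respect:safe}(c) (a consequence of Proposition~\ref{prop:inv:nat:ind}), $A$ is safely derivable in $\I_i$; in particular the tail $\NI_{i\ra\beta}$ is a natural induction from $\I_i$ whose limit is $\I$, so $\limit{\NI_{i\ra\beta}}=\I\der A$, whence $A\in\Gamma_\D^\OO(\I)$. This yields $\I\subseteq\Gamma_\D^\OO(\I)$, and combining the two inclusions gives $\Gamma_\D^\OO(\I)=\I$.

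The only non-routine step is preservation of derivability from stage $i$ to the limit, i.e.\ that $\I_i\der A$ survives the later addition of all atoms derived after stage $i$. This is exactly where the \emph{monotone dependency} hypothesis does the work: atoms added after stage $i$ are either $\stdep$-above $A$, in which case the already-saturated, $\dep$-closed set $\{B\mid B\stdep A\}$ stays frozen by Proposition~\ref{prop:inv:nat:ind}(1), or they lie in $A$'s own layer, where the dependence on $A$ is monotone and so can only preserve (not destroy) derivability. Proposition~\ref{prop:inv:nat:ind} packages precisely this monotonicity argument, so I would simply invoke it rather than redo it; the surrounding pieces — terminality giving closedness, and every element of the limit being derived at a successor stage — are bookkeeping.
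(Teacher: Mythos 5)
Your proposal is correct and follows essentially the same route as the paper's own (much terser) proof: both fix a terminal natural induction respecting $\dep$, get $\Gamma_\D^\OO(\I)\subseteq\I$ from saturation of the limit, and get $\I\subseteq\Gamma_\D^\OO(\I)$ by observing that each $A\in\I$ was \emph{safely} derived at its stage (via Proposition~\ref{prop:respect:safe}), so derivability of $A$ persists along the tail of the induction to the limit. Your write-up merely makes explicit the bookkeeping (successor stage, the tail $\NI_{i\ra\beta}$, the role of Proposition~\ref{prop:inv:nat:ind}) that the paper leaves implicit.
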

\begin{proof} 
Select an arbitrary terminal natural induction $\NI$ with limit $\I$. 
Then any $A\in \I$ was safely derived at stage  $\rankN{A}$, and by safety $\I\der A$. Vice versa, if $\I\der A$ then since $\I$ is saturated, $A\in \I$. 
\end{proof}

It is well known that the converse property does not hold, i.e., not
every fixpoint of the operator is the defined structure of $\D$ in
$\OO$.  For instance, for any context structure $\OO$, the definition
$\D_{TC}$ of transitive closure has always a fixpoint $\I$ in which
$R^\I$ is the complete binary relation on the domain. Also the iterated inductive definition of satisfaction of multi-agent logic in Example~\ref{ex:multi} has multiple fixpoints.\footnote{Fixpoints exist in which all formulas $DC_g\varphi$ are satisfied; the monotone inductive rule maintains this set when applying the operator.}

However, if $\D$ is an ordered definition by induction over $\order$, then the fixpoint is unique. 

\begin{proposition} Let $\D$ be a definition by ordered induction over the well-founded order $\order$ in $\OO$.  $\I$ is the defined structure by $\D$ in $\OO$ if and only if $\I$ is a fixpoint of $\Gamma_\D^\OO$.
\end{proposition}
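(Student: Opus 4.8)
The plan is to split the biconditional. The forward direction, that the defined structure is a fixpoint of $\Gamma_\D^\OO$, is already established by the preceding proposition: an ordered definition over $\order$ is a special case of an iterated inductive definition (over the strict well-founded order $\order$), so the structure defined by $\D$ in $\OO$ is a fixpoint of $\Gamma_\D^\OO$. Hence the real content of the statement is the converse, namely the \emph{uniqueness} of the fixpoint. I would prove uniqueness directly: any fixpoint of $\Gamma_\D^\OO$ coincides with the defined structure. Since terminal natural inductions that respect $\order$ exist and converge (Proposition~\ref{prop:ord:convergence}), the defined structure $\J$ exists and is itself a fixpoint; so it suffices to show that any two fixpoints are equal.

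So let $\I$ and $\J$ be two fixpoints of $\Gamma_\D^\OO$. By definition of $\Gamma_\D^\OO$, being a fixpoint means exactly that for every defined domain atom $A$ we have $A\in\I$ iff $\I\der A$, and likewise for $\J$. I would prove $A\in\I \Leftrightarrow A\in\J$ for all $A\in\domat{\defp{\D}}{D}$ by \emph{well-founded induction on $\order$}, which is licensed precisely because $\order$ is a strict well-founded order. Fix $A$ and assume the claim for every $B\order A$; then $\I$ and $\J$ agree on $\{B\mid B\order A\}$, i.e. $\res{\I}{\order A}=\res{\J}{\order A}$. Because $\order$ strictly orders $\D$, it is in particular a dependency relation of $\D$ in $\OO$ (Definition~\ref{def:dep}), so agreement on $\{B\mid B\order A\}$ forces $\I\der A \Leftrightarrow \J\der A$. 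Combining this with the fixpoint property of both structures yields $A\in\I \Leftrightarrow \I\der A \Leftrightarrow \J\der A \Leftrightarrow A\in\J$, closing the induction. Hence $\I=\J$, and in particular every fixpoint equals the defined structure $\J$.

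The argument is short, and its only content is the interplay of three facts: well-foundedness of $\order$ (which validates the induction), the dependency property (which reduces derivability of $A$ to the truth values strictly below $A$), and the fixpoint/supportedness property (which ties each atom's membership to its derivability). I expect no serious obstacle here; the one subtlety worth flagging is \emph{why the analogue fails for merely iterated inductive definitions}, and I would add a short remark to sharpen the role of $\order$ being a strict order rather than just a monotone dependency. When $\dep$ has cycles, agreement below $A$ in the strict part $\stdep$ no longer determines $\I\der A$, and indeed $\D_{TC}$ already exhibits several fixpoints (e.g.\ the complete relation on the domain), so the irreflexivity and asymmetry of $\order$ are essential to the uniqueness and not merely decorative.
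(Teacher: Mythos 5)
Your proof is correct and matches the paper's argument: the paper likewise reduces the statement to uniqueness of the fixpoint and then derives a contradiction from a $\order$-minimal atom on which two fixpoints disagree, using exactly the same interplay of well-foundedness, the dependency property, and the fixpoint property. Your phrasing as a well-founded induction rather than a minimal-counterexample argument is only a cosmetic difference, and your closing remark about why uniqueness fails for iterated definitions (e.g.\ $\D_{TC}$) echoes observations the paper makes in the surrounding text.
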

\begin{proof}
If suffices to prove that the operator has only one fixpoint. Assume it has two different fixpoints $\I, \J$ and assume that $A$ is a domain atom on which $\I, \J$ disagree that is minimal in $\order$. Such a minimal atom certainly exists since $\order$ is a strict well-founded order. Then it holds that $\I\cap\{B \mid B\order\} = \J\cap\{B\mid B\order A\}$. Since $\order$ is a dependency of $\D$, it holds that $\I\der A$ iff $\J\der A$. Contraction.
\end{proof}

\begin{proposition} Let $\D$ be a definition by iterated induction over $\dep$ in $\OO$.  The structure $\I$ defined by $\D$ in $\OO$  is a minimal fixpoint of $\Gamma_\D^\OO$.
\end{proposition}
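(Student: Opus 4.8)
The plan is to reuse the preceding proposition, which already gives that the defined structure $\I$ is a fixpoint of $\Gamma_\D^\OO$; what remains is minimality. So I would assume that $\J$ is an arbitrary fixpoint of $\Gamma_\D^\OO$ with $\J\subseteq\I$ and show $\J=\I$. The whole argument is organised around the layer decomposition induced by $\dep$: since $\dep$ monotonically orders $\D$, it is transitive and $\stdep$ is a well-founded strict order, so the defined domain atoms split into \emph{layers} (the classes of mutually dependent atoms, with $A\dep B\dep A$), and $\stdep$ induces a well-founded strict order on these layers. I would prove $\I\cap L=\J\cap L$ for every layer $L$ by well-founded induction on this order of layers.

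For the inductive step, fix a layer $L$ and assume $\I$ and $\J$ already coincide on every layer strictly below $L$; write $C$ for this common restriction, so that $\res{\I}{\stdep A}=\res{\J}{\stdep A}$ for each $A\in L$. Because for $A\in L$ the set $\{B\mid B\dep A\}$ is contained in $L$ together with the strictly lower layers, the derivability of an atom of $L$ depends only on $C$ and on the $L$-part of the structure. This lets me define a layer operator $\Gamma_L$ on subsets $X\subseteq L$ by $\Gamma_L(X)=\{A\in L\mid \der A\}$, evaluated in the structure whose lower part is $C$ and whose $L$-part is $X$. Monotonicity of $\Gamma_L$ is exactly Definition~\ref{def:mon:dep}: enlarging $X$ with $C$ fixed keeps $\res{(C\cup X)}{\stdep A}$ constant and enlarges $\res{(C\cup X)}{\dep A}$, hence preserves derivability. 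A short check then shows that both $\I\cap L$ and $\J\cap L$ are fixpoints of $\Gamma_L$ (using that $\I,\J$ are fixpoints of $\Gamma_\D^\OO$ and agree with $C$ below $L$), and of course $\J\cap L\subseteq\I\cap L$.

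The crux, and the step I expect to be the main obstacle, is to show that $\I\cap L$ is the \emph{least} fixpoint of $\Gamma_L$; once this is available, $\I\cap L\subseteq\J\cap L$ follows and equality is forced, completing the induction. To establish it I would take a terminal natural induction $\NI$ that respects $\dep$ with $\limit{\NI}=\I$ (it exists by Proposition~\ref{prop:extendableNI} and converges to $\I$ by Theorem~\ref{theo:convergence1}). Since $\NI$ respects $\dep$, Proposition~\ref{prop:inv:nat:ind} tells me its restriction below $L$ is frozen at $C$ from the first stage at which the lower layers are saturated, and no atom of $L$ is derived before that stage. Projecting the tail of $\NI$ onto $L$ then yields a terminal natural induction of the \emph{monotone} definition $\Gamma_L$ starting from $\emptyset$, whose limit is $\I\cap L$; Proposition~\ref{prop:mono:convergence} identifies this limit with the least fixpoint of $\Gamma_L$. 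The delicate points here are bookkeeping: the global induction may interleave derivations of higher-layer atoms, so I must argue these are irrelevant to $L$-derivability and can be deleted from the projected sequence, and I must treat limit stages so that the projection remains a genuine natural induction. This grounded, layer-by-layer behaviour is exactly what fails for rule sets that are not iterated inductive definitions (where some layer operator is non-monotone and strictly smaller fixpoints can exist), which is why the monotone-dependency hypothesis is indispensable.
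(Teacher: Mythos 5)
Your proposal is correct in substance, but it takes a genuinely different and considerably longer route than the paper. The paper's own proof is a single minimal-counterexample argument: assuming a fixpoint $\I'\subsetneq\I$, it fixes a terminal natural induction $\NI$ respecting $\dep$ with limit $\I$ and takes the least stage $i$ at which some $A\in\I\setminus\I'$ is derived; minimality of $i$ gives $\I_i\subseteq\I'$, respecting $\dep$ together with Proposition~\ref{prop:inv:nat:ind} gives $\res{\I_i}{\stdep A}=\res{\I}{\stdep A}=\res{\I'}{\stdep A}$, and then the monotone-dependency condition (Definition~\ref{def:mon:dep}) applied to the pair $\I_i,\I'$ yields $\I'\der A$, hence $A\in\Gamma_\D^\OO(\I')=\I'$, a contradiction. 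So the paper never materializes the layers: it uses exactly the same three ingredients as you (a terminal natural induction respecting $\dep$, the freezing property of Proposition~\ref{prop:inv:nat:ind}, and monotone dependency), but applies them once, locally, at a single atom, instead of through a well-founded induction over layers. What your route buys is structural insight: it makes the iterated-induction picture explicit and identifies $\I\cap L$ as the least fixpoint of a per-layer monotone operator, which is more information than bare minimality; what it costs is all the bookkeeping you flag, plus one step that needs patching. Namely, Proposition~\ref{prop:mono:convergence} cannot be cited as stated, because $\Gamma_L$ is a lattice operator on subsets of a single layer, not a definition (rule set) in the paper's syntax, and your projected sequence is not a natural induction of any definition in the paper's sense. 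You need the operator-level analogue: for a monotone operator, any increasing sequence in which each newly added element is derivable at its stage and whose limit is closed under the operator converges to the least fixpoint. This is a routine Tarski-style transfinite induction (every stage stays below the least fixpoint; the closed limit contains it). With that lemma substituted, your argument goes through, and the interleaving and limit-stage issues you worry about are harmless: stages of $\NI$ that derive no $L$-atoms simply project to repeated stages, which the operator-level lemma tolerates.
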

Note that $\Gamma_\D^\OO$ may have many minimal fixpoints. 
\begin{proof}
Assume towards contradiction that the defined structure $\I$ is not a minimal fixpoint of $\Gamma_\D^\OO$ and that $\I'$ is a strictly lesser one. Consider a natural induction $\NI$ that respects $\dep$ and constructs $\I$. Let $i$ be the minimal ordinal such that some atom $A\in\I\setminus\I'$ is derived at $\I_i$. Since $\NI$ respects $\dep$,  $\I_i$ is saturated on the downward closed set $\{B\mid B\order_\dep A\}$.  By minimality of $i$, $\I_i\subseteq \I'$ and $\I_i\cap\{B\mid B\order_\dep A\}=\I\cap\{B\mid B\order_\dep A\}=\I'\cap\{B\mid B\order_\dep A\}$.
Since $\I_i \der A$  and $\dep$ is a monotone dependency, it follows that $\I'\der A$ and hence, that $A\in\Gamma_\D^\OO(\I')=\I'$. Contradiction. 
\end{proof}

Another good question is whether the safely defined set of a formal definition $\D$ in $\OO$ is a fixpoint. It obviously is if the definition is an iterated definition but what if it is not? It will be considered later in this text, when we consider less sensible definitions and definitional paradoxes.

\paragraph{The two experiments} We finalize our discussion of the two ``experiments'' introduced in Section~\ref{sec:intro}. 

\begin{example} In previous examples, we verified that $\D_{TC}$ faithfully expresses Definition~\ref{def:TC} and that an induction process of the informal definition in the context of a graph $\graph$ corresponds to a natural induction in the corresponding context structure $\OO$. It follows that the formally and informally defined sets correspond. 
\end{example}

\begin{example} Likewise, we verified that $\D_{\models}$ faithfully expresses Definition~\ref{def:sat}, that a context structure $\OO$ corresponds to the pair of sets of structures and formulas for propositional vocabulary $\satvoc$,  that the induction order $\order_{\models}$ in context $\OO$ corresponds to the subformula order, that natural inductions respecting  $\order_{\models}$ in context $\OO$ correspond to induction processes ``along'' the subformula order. It follows that the formally and informally defined sets correspond. 
\end{example} 

\paragraph{Complexity of computing the safely defined structure in finite context}

In this section, we investigate the data complexity of deciding a defined domain atom in the safely defined structure. Hence, here we will focus on finite definitions and contexts.

Let  $\D$ be a fixed definition over $\voc = \pars{\D}\cup\defp{\D}$. We assume (without loss of generality due to the finiteness of $\D$) that each defined predicate $P\in \defp{\D}$ is defined by exactly one rule
\[\forall \xxx ( P(\xxx) \rul \varphi_P[\xxx]).\]

A key  computation step in a safe natural induction is the verification  that  a domain atom $\Pa$ is safely derivable from $\I$ in $\OO$. We will show that problem is in co-NP. 

\begin{theorem}\label{thm:onestep-co-NP}
For a given finite definition $\D$, the problem of deciding whether a defined domain atom $\Pa$ is safely derivable from a $\defp{\D}$-structure $\I$ in finite context $\OO$ is in co-NP. 
\end{theorem}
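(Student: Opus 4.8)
The plan is to show that the \emph{complement} problem---deciding that $\Pa$ is \emph{not} safely derivable from $\I$ in $\OO$---is in NP, which immediately yields membership in co-NP. First I would unfold the definition of safe derivability: $\Pa$ is safely derivable from $\I$ iff $\I\der\Pa$ and $\limit{\NI}\der\Pa$ for \emph{every} natural induction $\NI$ of $\D$ from $\I$. Negating this, $\Pa$ is not safely derivable iff either $\I\notder\Pa$, or there exists a natural induction $\NI$ from $\I$ with $\limit{\NI}\notder\Pa$. The crux is to turn the (a priori transfinite) existential quantifier ``there exists a natural induction'' into a single polynomial-size guess.

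The key finiteness observation is that, over the finite domain $D=D^\OO$, the set $\domat{\defp{\D}}{D}$ of defined domain atoms has size $N=O(|D|^k)$, with $k$ the maximal arity among the fixed, finitely many defined predicates, hence polynomial in the size of $\OO$. Because a natural induction is $\leqt$-increasing and we may discard stages that add nothing, every natural induction from $\I$ can be taken to have length at most $N$, with no limit ordinals; its limit is then simply its last structure. Thus a natural induction from $\I$ is faithfully described by a sequence $\I=\J_0\subsetneq\J_1\subsetneq\cdots\subsetneq\J_m=\J$ of $\defp{\D}$-structures (each a subset of the $N$ atoms, so of polynomial size, with $m\leq N$), subject to the validity condition that for every successor step and every $A\in\J_{i+1}\setminus\J_i$ one has $\J_i\der A$.

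With this in hand the NP algorithm for the complement is: evaluate $\I\der\Pa$ by checking whether $\varphi_P^{\I[\xxx:\aaa]}=\Tr$, and if this fails, accept; otherwise nondeterministically guess a sequence $\J_0,\dots,\J_m$ of the above form and accept iff (i) it is a valid natural induction from $\I$ and (ii) $\J_m\notder\Pa$. Each derivability test $\J_i\der A$ and the final test $\J_m\notder\Pa$ amount to evaluating the fixed body formula $\varphi_P$ (or the relevant rule body) on the finite structure $\OO\circ\J_i$; since $\D$ is fixed this is ordinary FO model checking with a fixed formula, computable in time polynomial in $|D|$, as FO data complexity lies in polynomial time. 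There are at most $N\cdot N$ such tests, so the whole verification runs in polynomial time and the guessed certificate has polynomial size. Hence the complement is in NP and the original problem is in co-NP.

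The main obstacle is precisely the reduction of the universal quantification over \emph{all} natural inductions---objects that are in general transfinite sequences---to a single polynomial-size witness. This is where finiteness of $\OO$ is used twice: to bound the length (so that no limit stages occur and $\limit{\NI}$ equals the final structure) and to bound the number of atoms (so that each structure is a polynomial-size object), while FO data complexity bounds the per-step verification cost. A minor point to handle with care is that within a single stage several atoms may be added simultaneously, each required to be derivable from the structure \emph{before} the stage; for non-monotone $\D$ one cannot in general serialise such a stage into single-atom steps, so the certificate is kept as a sequence of structures (equivalently, a stage function on $\J\setminus\I$) rather than a plain enumeration of atoms.
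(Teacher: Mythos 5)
Your proposal is correct and takes essentially the same route as the paper: the paper also proves co-NP membership by showing the complement is in NP, via a nondeterministic polynomial-time procedure (its Algorithm 1) that repeatedly picks a nonempty subset of newly derivable atoms---i.e., it traverses exactly the kind of polynomially-bounded natural induction you guess as a certificate---accepting when it reaches a structure from which $\Pa$ is not derivable. The only difference is presentational: you phrase it as certificate-plus-verifier and make explicit the compression of (possibly transfinite) natural inductions to length at most $N$, a fact the paper's proof uses implicitly when it asserts that every state reachable by a natural induction is reachable by a run of its algorithm.
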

\begin{proof}
Algorithm \ref{alg:safe} contains a nondeterministic program to decide that $\Pa$ is {\em not} safely derivable from $\I$ in $\OO$. It takes as input $\I, \OO, \Pa$ and 
$\D$. Let $D$ be the domain of $\I$ and $\OO$.
\begin{algorithm}
\caption{Nondeterministic algorithm to decide that $\Pa$ is {\em not} safely derivable from $\I$ in $\OO$. }\label{alg:safe}
 \begin{algorithmic}
    \While{\texttt{true}} 
      \State $S\gets \left\{ \Qb \in \domat{\defp{\D}}{D} \mid \I \models \varphi_Q[\bar{b}]\right\}$
      \If{$\Pa \not \in S$} 
	\State \Return \texttt{true}
      \ElsIf{$S\subseteq \I$}
	\State \Return \texttt{false}
      \Else
	\State choose a non-empty subset $S' \subseteq S\setminus \I$ 
	\State $\I\gets \I\cup S'$
      \EndIf
   \EndWhile
\end{algorithmic}
\end{algorithm}
%
This program nondeterministically traverses a natural induction from $\D$ in $\OO$. Every state $\I'$ that can be reached by a natural induction from $\I$ can be reached by a run of this program.  The algorithm stops with \texttt{true} when it reaches a structure in which $\Pa$ is not derivable; this shows that $\Pa$ is not safely derivable. It stops with \texttt{false} if the reached structure is saturated and still derives $\Pa$. In this case, the traversed natural induction was a terminal one that was not a witness that $\Pa$ was not safely derivable.

One run of the algorithm builds a strictly growing sequences of $\defp{\D}$-structures; hence, the number of iterations is bound by  the cardinality of the set $\domat{\defp{\D}}{D}$ of defined domain atoms. This number is polynomial in the size of $\OO$. At each step, the main computation is the computation of $S$, the set of derivable domain atoms from $\I$. This is a polynomial operation. Thus, this is a nondeterministic polynomial program that has a run that terminates with \texttt{true} if and only $\Pa$ is not safely derivable from $\I$ in $\OO$. It follows that deciding that $\Pa$ is not safely derivable from $\I$ in $\OO$ is in NP and its dual is in co-NP.
\end{proof}

\ignore{

We observe that the length of a strictly increasing natural induction from $\I$ in $\OO$ is limited by $N_\OO$, the cardinality of the set $\domat{\defp{\D}}{D}$ of defined domain atoms. Thus, we will interpret time as the set $\{0, \dots, N_\OO\}$. We observe that $N_\OO$ is polynomial in the domain of $\OO$.  

\marc{Ik heb Barts timed constructie in een 2 typed context gebracht, om problemen te vermijden  met geinterpreteerde functie symbolen die na uitbreiding van het domain niet meer door functies geinterpreteerd zijn. een alternatief is Fagins constructie door 
time voor te stellen als n-tuples van domain elementen, met voldoende grote n.}

For the given vocabulary $\voc$, the \emph{timed vocabulary} $\voc^t$ is the typed vocabulary with two types $U, T$; $U$  is the type of the original universe, and $T$ is the  time type. $\voc^t$ consists of, for each symbol of $\voc$ the same symbol typed over $U$, and for each defined predicate symbol $P/n\in\defp{\D}$ a new predicate $P^t$ of type $(U,..,U,T)$ with an additional argument of type $T$ that serves to encode the stages in a natural induction. In addition, $\voc^t$ contains ``temporal'' constants $0, N$ of type $T$ and binary predicate $Next$ of type $(T,T)$. 

Given a $\voc$-structure $\J = \OO+\I$ with domain $D$, we define $\J^t$ as the typed extension of $\J$ obtained by adding time. Specifically, $\J^t$  interprets the type $U$ as $D$ and all symbols of $\voc$ in the same way as $\J$. $\J^t$ interprets the time type $T$ by $\{0,..,N_\OO\}$, the constant $0$ by the number $0$, the constant $N$ by $N_\OO$ and $Next$ by $\{(i,i+1) \mid 0\leq i< N_\OO\}$.  \footnote{Alternatively, it is common to represent time by a strictly ordered set of $n$-tuples of domain elements, as in Fagins proof that ESO captures NP.}

\begin{lemma}
The size of $\J^t$ is polynomial in the size of $\J$.
\end{lemma}
\begin{proof} Trivial.\end{proof}

\begin{definition}
Let $\varphi$ be a $\voc$-formula and $y$ a variable of type $T$. With $\varphi^y$ we denote the $\voc^t$-formula that equals $\varphi$ except that  each defined atom $P(\ttt)$  is replaced by $P^t(\ttt,y)$.
\end{definition}

\begin{definition}
Let $\D^t$ be the  $\voc^t$-theory consisting of for each predicate $P\in \defp{\D}$ the following sentences:
\[ \begin{array}{l}
\forall \xxx (P^t(\xxx,0) \Leftrightarrow P(\xxx))\\
\forall y \forall z (Next(y,z) \mim  \forall \xxx (P(\xxx, z) \mim P(\xxx,y) \lor \varphi_P[\xxx]^y))\\
\forall y \forall z (Next(y,z) \mim \forall \xxx (P(\xxx, y) \mim  P(\xxx,z))) 
\end{array}
\]
\end{definition}

\newcommand{\ddd}{\overline{d}}

\begin{lemma}\label{lem:NISimulated}
The  natural inductions $\NI$ from $\I$ in $\OO$ of lenght $N_\OO$ are simulated by the models of $\D^t$ that expand $(\OO+\I)^t$. In particular, there is a bijection between natural inductions  $\NI$ and models $\J$ such that  for each $i\leq N_\OO$, $\I_i = \{ \Pa \mid P\in \defp{\D}, \J \models P^t(\aaa,i)\}$. 
\end{lemma}

\begin{proof}
Simple. The first rule initialises $P^t$. The second rule contains the induction step (at time t+1 only derivable atoms are derived); the last rule ensures that true atoms never disappear again. 
\end{proof}

\begin{theorem}\label{thm:onestep-co-NP}
For a given finite definition $\D$, the problem of deciding whether a defined domain atom $\Pa$ is safely derivable from a $\defp{\D}$-structure $\I$ in finite context $\OO$ is co-NP. 
\end{theorem}
\begin{proof}
\end{proof}
}

\begin{theorem}\label{thm:inDeltap2}
For every $\D$, the problem of deciding whether a defined domain atom $\Pa$ holds in the safely defined structure in a finite $\pars{\D}$-structure $\OO$  is in $\Delta^P_2$. For some $\D$, the problem is co-NP hard.
\end{theorem}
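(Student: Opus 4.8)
\emph{Upper bound.} Recall that $\Delta^P_2$ is the class of problems solvable in polynomial time by a deterministic machine equipped with an NP oracle. The plan is to compute the safely defined structure explicitly by a \emph{greedy} safe natural induction and then read off membership of $\Pa$. Starting from $\I := \emptyset$, at each step I compute the set $\Safe{\I}$ of atoms safely derivable from $\I$ and replace $\I$ by $\I \cup \Safe{\I}$; the loop halts once $\Safe{\I} \subseteq \I$. Each test ``$A \in \Safe{\I}$?'' is, by Theorem~\ref{thm:onestep-co-NP}, a co-NP question, so its complement is decided by a single NP-oracle call. Since $\D$ is fixed its predicate arities are bounded, so $|\domat{\defp{\D}}{D}|$ is polynomial in $|\OO|$, and one step issues only polynomially many oracle calls. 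Finally I return whether $\Pa \in \I$.

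For correctness, note that every atom added lies in $\Safe{\I}$ and is hence derivable from $\I$, so each step is a legitimate step of a natural induction, and by construction the whole sequence is \emph{safe}. By Proposition~\ref{prop:monotone:safety} the safely derivable atoms only grow along a natural induction, so the greedy strategy never has to revisit an atom, each iteration strictly enlarges $\I$ (or halts), and there are at most $|\domat{\defp{\D}}{D}|$ iterations; as $\OO$ is finite no limit ordinals arise. Thus the procedure runs in polynomial time with an NP oracle, i.e. in $\Delta^P_2$. When the loop halts, $\Safe{\I} \subseteq \I$, so $\I$ is \emph{safe-terminal}, and by the confluence theorem of Section~\ref{sec:safe} (all safe-terminal natural inductions converge to the same structure) its limit is the safely defined structure. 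Hence $\Pa$ holds in the safely defined structure iff $\Pa \in \I$.

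\emph{Lower bound.} For co-NP-hardness I exhibit one fixed definition $\D$, a polynomial map from CNF formulas $\phi$ to context structures $\OO_\phi$, and a fixed query atom $A$, such that $A$ holds in the safely defined structure iff $\phi$ is \emph{unsatisfiable}; as UNSAT is co-NP-complete this suffices. The context $\OO_\phi$ encodes $\phi$ by a domain holding its variables and clauses, with parameter relations $PosIn$ and $NegIn$ recording positive and negative occurrences of variables in clauses. The definition has a propositional atom $A$ with rule $A \rul \neg B$, a propositional atom $B$, and, for each variable $v$, a \emph{choice gadget} consisting of the rules $T(v) \rul \neg F(v)$ and $F(v) \rul \neg T(v)$. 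The rule for $B$ is $B \rul \psi$, where $\psi$ is the fixed first-order body stating that $T,F$ encode a consistent total assignment satisfying $\phi$, i.e. the conjunction of $\forall v\,(T(v)\lor F(v))$, $\forall v\,\neg(T(v)\land F(v))$, and $\forall c\,\exists v\,((PosIn(v,c)\land T(v))\lor(NegIn(v,c)\land F(v)))$.

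The crux is that the guessing must be confined to a \emph{single} safe-derivability test, because the safely defined structure is itself confluent and so cannot branch on a guess. The choice gadget makes each $T(v)$ and $F(v)$ derivable from $\emptyset$ yet \emph{not} safely derivable: whichever one derives first falsifies the body of its dual. Hence no $T(v)$ or $F(v)$ is ever safely derivable, a safe induction from $\emptyset$ never commits to any of them, and $B$ is never safely derivable either (for $\psi$ cannot hold without some $T/F$ present, which a safe induction never supplies); so the only atom a safe induction can add is $A$, and $A$ holds in the safely defined structure iff $A \in \Safe{\emptyset}$. Now $\emptyset \der A$ since $B \notin \emptyset$, and $A$ is safely derivable from $\emptyset$ iff $B$ is outside the limit of \emph{every} natural induction from $\emptyset$, i.e. iff $B$ is derivable in \emph{no} reachable structure. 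A reachable structure $\I'$ derives $B$ exactly when $\psi^{\I'}=\Tr$, which forces the $T/F$-part of $\I'$ to be a consistent total assignment satisfying $\phi$; conversely each such assignment is realised by a reachable structure (derive $T(v)$ or $F(v)$ for each $v$ in turn, always applicable before its dual is present). Thus $B$ is reachable-derivable iff $\phi$ is satisfiable, so $A \in \Safe{\emptyset}$ iff $\phi$ is unsatisfiable, which completes the reduction. The main obstacle throughout is this packaging: one must load all the co-NP content into a single safe-derivability query while ensuring the surrounding safe induction is inert and leaves the target atom undisturbed.
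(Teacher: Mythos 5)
Your upper bound is essentially the paper's own argument: greedily accumulate $\Safe{\I}$ via NP-oracle calls (each safety test being co-NP by Theorem~\ref{thm:onestep-co-NP}), halt at a safe-terminal induction after polynomially many rounds, and invoke confluence of safe-terminal inductions for correctness; the paper states exactly this, only more tersely. Your lower bound, however, is a genuinely different reduction and it is correct. The paper reduces from \emph{validity of DNF formulas}: its definition derives $Val$ first, and the rule $\forall p\,(T(p)\rul Val\land Prop(p))$ makes arbitrary assignments reachable only \emph{after} that derivation, so that safety of the very first step is itself the validity question. You reduce from \emph{CNF unsatisfiability}: symmetric choice gadgets on $T(v),F(v)$ make all assignments reachable directly from $\emptyset$ while keeping every individual guess unsafe, and you insert one extra negation layer $A\rul\neg B$ to flip the NP question ``is $B$ derivable in some reachable structure'' (satisfiability) into the safety of $A$. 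Both constructions rest on the same packaging insight---concentrate the co-NP content into a single safe-derivability test while the surrounding safe induction stays inert---but the paper's gadget needs only two rules and no auxiliary negation layer, because validity is already co-NP, whereas yours makes it more explicit where the nondeterministic guessing lives and why confluence forbids exploiting it elsewhere. One small repair is needed for the theorem's data-complexity reading: $\D$ must be literally fixed, so your ``for each variable $v$'' gadget rules have to be written as quantified rules $\forall v\,(T(v)\rul Var(v)\land\neg F(v))$ and $\forall v\,(F(v)\rul Var(v)\land\neg T(v))$ with $Var$ a parameter predicate, and the quantifiers in $\psi$ relativized to $Var$ and to the clause sort accordingly; with that phrasing your argument goes through exactly as you give it.
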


\begin{proof}
We first show containment in $\Delta^P_2$.   By solving a polynomial number of co-NP problems, we can compute  the  set of safely derivable atoms in any structure. 
By doing this a polynomial number of times, we find the safely defined structure and can determine if $\Pa$ is true in it.

We now show co-NP-hardness. To do this, we encode the co-NP-hard problem of deciding validity of a propositional formula $\varphi$ in Disjunctive Normal Form (DNF).  A formula $\varphi$ in DNF is a set of disjuncts, each of which is a conjunction of literals. We encode $\varphi$ over propositional vocabulary $\satvoc$ as a context structure $\OO_\varphi$ providing interpretation for the following parameter predicates:
\begin{itemize}
\item domain of $\OO_\varphi = \satvoc \cup \varphi $: it contains all propositional symbols of $\satvoc$ and all disjuncts of $\varphi$.
\item $Prop^{\OO_\varphi} = \satvoc$, 
\item $Dis^{\OO_\varphi} = \varphi$, the set of disjuncts of $\varphi$;
\item $Pos^{\OO_\varphi} = \{(d,p) \mid p\text{ is a positive literal in a disjunct }d\in \varphi\}$;
\item $Neg^{\OO_\varphi} = \{(d,p) \mid \neg p \text{ is a negative literal in a disjunct }d\in \varphi\}$.
\end{itemize}
The size of $\OO_\varphi$ is linear in the size of $\varphi$. Furthermore, we introduce the predicate $T/1$ to encode $\satvoc$-structures $I$. In particular, the value of $T$ will be the set of true propositional symbols in $I$. Consider the following definition: 
\[\D = \defin{
Val\rul \exists d (Dis(d) \land \forall p (Pos(d,p)\mim T(p))\land \forall p (Neg(d,p)\mim \neg T(p)))\\
\forall p (T(p)\rul Val \land  Prop(p))
} \]
The definition  defines $Val/0$ and $T/1$.  It is straightforward to see that the definiens of $Val$ expresses the satisfaction of $\varphi$ in the $\satvoc$-structure encoded by $T/1$. 

We prove that $Val$ is true in the safely defined structure in $\OO_\varphi$ if and only if the encoded formula $\varphi$ is valid. 

First, if $\varphi$ is false in $\emptyset$, then $\varphi$ is not valid. In this case, the first element $\I_0=\emptyset$ of every natural induction encodes the empty $\satvoc$-structure in $T$, and we see that $\struct{\I_0}$ is the uniqe natural induction. Hence, in this case, $Val$ is false in the safely derived structure.

Otherwise $\varphi$ is true in $\emptyset$.  Then $Val$ is derivable from $\I_0$ but not necessarily safely derivable. In fact, $Val$ is safely derivable exactly if $\varphi$ is valid.
Indeed, the natural inductions of this definition are of the form
\[ \ra Val \ra T1 \ra T2 \ra \dots \]
where each $T_i$ is a set of domain atoms of the form $T(p)$. If $\varphi$ is valid, then the definiens of $Val$ continues to hold at each stage. On the other hand, if $\varphi$ is not valid, then there exists  a $\satvoc$-structure $I$ in which $\varphi$ is false. Then $I$ is not empty since $\varphi$ is true in $\emptyset$. Consider the following  two step natural induction:
\[ \ra Val \ra \{ T(p) \mid p\in I\} \]
This is a natural induction obtained by applying the second rule for each $p\in I$ at the second stage. In its limit, the definiens of $Val$ is false and $Val$ is not derivable. Hence, the derivation of $Val$ is not safe. 

We conclude that  the safely defined structure contains $Val$ if and only if $\varphi$ is valid.  Since deciding validity of a sentence in DNF is co-NP hard, we obtain the desired result. 
\end{proof}

The complexity of computing the safely defined structure may be too high for practical computing. Here an important computational use of the induction order surfaces. Given an induction order $\order$, a defined domain atom $A$ can be decided with a natural induction in at most $\#\{B \mid B \order A\}+1$ steps. If rule application  is computationally feasible, then this may be an efficient method. Even better, the method may work also in infinite structures. E.g. for Definition~\ref{def:sat} of the satisfaction relation and its formalization $\D_{\models}$, the context structures $\OO$ are infinite since they contain infinitely many formulas. Yet, applying a rule is a constant time operation (given the satisfaction of component formulas), and the complexity of deciding the defined domain atom  ``$\varphi$ is true in $I$'',  is linear in  $\#\{ \psi \mid \psi \order_{\models} \varphi\}$, the size of the formula. 

Nevertheless, it would be nice to have an inductive construction method that does not require an explicit induction order and that is more efficient (preferably tractable) than computing safe natural inductions. If the polynomial hierarchy does not collapse, it follows from the above theorem  that such a method would not always compute the safely defined structure. But perhaps it would be strong enough to compute the defined set of definitions that occur in practice.  \new{In an earlier version of this work \cite{KR/DeneckerV14}, we proposed an alternative construction, namely ultimate well-founded induction. However, this suffers from the same complexity problems. We conjecture, and intend to prove in future work, that standard well-founded inductions \cite{lpnmr/DeneckerV07} provide such a construction: a tractable induction process that is strong enough to compute the defined set of definitions that occur in practice.  }  


\ignore{
\paragraph{Complexity of computing natural inductions}

{\footnotesize

HIER MOET IETS KOMEN DAT ZEGT DAT INDUCTION ORDERS HET LEVEN GEMAKKELIJKER MAKEN. IK ZOU IETS INFORMEELS ZEGGEN BV. EROP WIJZEN DAT DEFINITIE VAN SATISFACTION ZICH AFSPEELT IN EEN ONEINDIGE RUIMTE, WAAR A PRIORI ONEINDIG VEEL NATURAL INDUCTIONS ZIJN ZODAT BEPALEN VAN SAFE DERIVABILITY 'MOEILIJK' IS. TERWIJL DANK ZIJ DE INDUCTIE ORDE, HET BEREKENEN VAN SAFETY POLYNOMOMIAAL WORDT (TEST OF ELKE SUBFORMULE REEDS AFGELEID WERD) VERVOLGENS IS HET WELLICHT NUTTIG OM UIT TE ZOEKEN WAT COMPUTATIONEEL NUTTIGE CRITERIA ZIJN VOOR INDUCTIE ORDES, CRITERIA DIE TOELATEN OM NATURAL INDUCTIONS EFFECTIEF TE BEREKENEN.  BV. HET AANTAL KLEINERE ELEMENTEN IS EINDIG; ALSO OOK ALLE EVALUATIES VAN REGELS BEREKENBAAR ZIJN, DAN WORDT HET BEREKENBAAR. vERDER IN EINDIGE CONTEXT, MISSCHIEN EEN COMPLEXITEITSRESULAAT OVER HET BEPALEN VAN SAFETY VAN EEN DERIVATION VERSUS BEPALEN (ALS VOLDOENDE INFORMATIE WORDT BIJGEHOUDEN IS DIT EEN TEST OF ALLE NET KLEINERE ELEMENTEN IN DE INDUCTIE ORDE ZIJN BEPAALD, DIT KAN IN CONSTANTE TIJD GEBEUREN.   

- an upperbound for the complexity of safe derivability? A lowerbound? 
- laten zien dat in oneindige context, de complexity van berekenen van defined facts doenbaar is indien induction order goede computationele kwaleiten heeft. maw, 

In a finite context, deciding the safe derivability from an intermediate structure is complex; any structure may have an exponential number of derivable structures (any subset of  derivable facts leads  to a different natural induction), a polynomial number of choice between an exponential number of derivable structures 

In an infinite space as is the case of Definition~\ref{def:sat}, determining  safe derivability is in general undecidable. If an induction order is available with good computational properties, then it may make the derivation process decidable. Definition~\ref{def:sat} is an example, Its space is infinite (there are infinitely many formulas), yet for any pair $(I,\psi)$ there are only a finite number of smaller pairs, linear in the size of $\psi$, and fast algorithms exist to compute $\psi$. These exploit the induction order. and the above observation of working towards a saturated set. 
}
}

\paragraph{Summary: implications for informal definitions}

This section and the previous one reveal some more properties of the considered types of informal definitions. 

First, it confirmed what has been stated at the end of Section~\ref{sec:formal:def}: that the role of the induction order in an informal definition is to delay rule application till it is safe to do; that safety ensures confluence; that the  choice  of the induction order does not matter as long as it monotonically orders the definition. 

Second, that natural inductions that respect a suitable induction order $\dep$ follow this order: larger atoms in the induction order are derived later. 

Third, that in practical dealing with informal definitions, we often do not respect  or follow an induction order in the induction process. This is not needed as long as the derivations that we make are safe. In this context, we have seen that a monotone rule can be applied safely at all times. 

Fourth, the properties of this type of definitions make it possible to perform many computations in a cheap way, without computing the full induction process. By tweaking the induction process in the right direction, we may efficiently decide membership of a defined fact. From a pragmatical point of view, this is certainly of crucial importance to reason on informal definitions.

Fifth, the defined set is a fixpoint of the induced operator, and in case of an ordered definition, it is the unique fixpoint. In case of an iterated inductive definition, it is a minimal fixpoint. 

As said before, it seems that we often take these properties of informal definitions for granted. The contribution here is that we are able to mathematically prove them in a formal study, often for the first time.  

The use of safe natural inductions to construct the defined set has some great potential: they do not require knowledge of  an induction order, and they are faster than inductions that respect an induction order. From a knowledge representation perspective, the benefit is that in  representing an informal definition, it suffices to represent the rules; there is no need to express the induction order. From a computational perspective, it may be useful that safe inductions are much ``faster'' and derive defined atoms with far less derivations. 

On the other hand, it is clear as well that computing safe inductions could be awfully difficult. Verifying that a fact $A$ is safely derivable in structure $\I$ seems to require some form of unbounded ``lookahead'' to verify that it remains derivable in any natural induction from $\I$. We analyzed the finite case and showed  that this check can be co-NP hard. Computing natural inductions respecting a given induction order $\order$ might be much cheaper,  even in infinite structures. 

While the induction order does not affect the defined set, it gives insight in the definition, it shows how to set up the induction process towards a query, and provides us with a test to verify the mathematical sensibility of the definition (see the next section). Computationally, it may suggest an efficient method to compute defined facts. Thus indeed, specifying the induction order is useful.  

\ignore{
This leads us to wonder: would it be possible to develop a cheaper technique to discover safely derivable and strictly underivable defined atoms, one that does not depend on an explicit induction order? Such a technique could be used to construct safe natural inductions that are faster than induction over an induction order and that do not require the unbounded lookahead that safe inductions need? 
}

\ignore{

In view of this, one may wonder why an induction order is specified at
all in mathematical text?  Indeed, the defined structure can in principle be constructed through a safe natural induction, without the  induction order. However, constructing a safe natural induction seems a difficult problem, in particular to determine which 
atoms are safely derivable seem to require some costly form of ``look ahead''. 

One  explanation is therefore that theinduction order serves
to help the reader better understand the definition. Moreover, the
specified order may help him/her as a kind of {\em parity check} of
the soundness of the definition. Indeed, not all sets of informal
rules form sensible definitions (far from it). The induction order helps
the reader in verifying that the (informal) rules indeed form a sensible
definition. 

}

\section{Beyond iterated definitions: white, black and different shades of grey}\label{sec:sensible}

The previous sections present a formal model of  monotone, ordered and iterated definitions and define the notion of safely defined structure  of a  rule set $\D$ in context $\OO$. Most formal science theories are approximations of the studied reality and our theory is no exception. There is a core area of informal definitions that the theory captures perfectly, but there are border cases as well. It is important to develop an understanding of these border cases and how the theory behaves on them. 

\newcommand{\botp}{{\bot_p}}

\paragraph{Sensible definitions beyond iterated definitions}

The informal semantics of rule sets as definitions does not abruptly break beyond iterated definitions, as shown by the following example.
\begin{example}\label{ex:beyonditerated}
Let $\OO$ be the natural number context structure of Example~\ref{ex:even} and $\D_{ev1}$ the following variant definition of $\D_{ev}$ that defines both $Even$ and an auxiliary predicate $Next$: 
\[
\defin{ 
\forall x \forall y (Next(x,y) \rul  x=y+1)\\
\forall x (Even(x)\begin{array}[t]{l}\rul  x=0 \lor \exists y(Next(x,y) \land \neg Even(y)))
\end{array}
}
\]
This rule set faithfully expresses what to us seems an acceptable mathematical definition of the set of even numbers in the context of the natural numbers. The safely defined model $\I$ of this definition in $\OO$ is the intended one. Indeed, the following natural induction  is safe. 
\[ \ra \{ Next(n,n+1) \mid n\in\natnrs\} \ra Ev(0) \ra Ev(2) \ra \dots \] 

Nevertheless, this rule set is not an ordered or iterated inductive definition  in $\OO$ according to Definition~\ref{def:ordered:iterated}. Indeed, in any dependency relation of $\D_{ev1}$, it holds that $Ev(n)\dep
Ev(m)$ for all $n, m\in\natnrs$. This follows from the fact that
$Ev(m)$ is derivable in the structure $\{Next(m,n)\}$ but not in
$\{Next(m,n), Ev(n)\}$.  However, as the same two structures show, no
such $\dep$ monotonically orders $\D_{ev1}$ in $\OO$. 
\end{example} 

By all means, the definition $\D_{ev1}$ is an innocent syntactic
variation of $\D_{ev}$. It was obtained by applying a general, useful
technique: expliciting the definition of an intermediate concept in a
definition. The example shows that this operation may easily break an
iterated inductive definition. As such, the above example shows a
disturbing brittleness of the concept of a definition by iterated
induction as defined in Definition~\ref{def:iterated:def} that
fortunately is not shared by the rule formalism under the semantics of safely defined structures.

\paragraph{(Partially) paradoxical definitions}

Not every  (informal or formal) rule set is a sound mathematical definition. E.g., ``we define a natural number to be Foo if it is not Foo''. It is faithfully expressed as: 
\[\D_{Foo}= \defin{\forall x (Foo(x)\rul \neg Foo(x))}\] 
Intuitively, this is a paradoxical definition: if some number is not Foo, then it is per definition Foo, but if it is Foo, then it is per definition not Foo.  On the other hand, the safely defined set is well-defined in every context $\OO$: it is the empty set (the natural induction $\struct{\I_0}$ is safe-terminal). Every domain atom is derivable in the safely defined set but none is safely derivable. Hence, the safely defined set is not saturated, it is not a fixpoint nor does it satisfy the implications of the definition. 

Other rule sets do not have that paradoxical flavour of the Foo definition but nevertheless fail to define a set. E.g.,
\[\defin{\forall x (P(x)\rul \neg Q(x))\\\forall x (Q(x)\rul \neg P(x))}\]
Also for this definition, the one step natural induction $\struct{\I_0}$ is safe-terminal and the safely defined structure  in every context is the empty structure. Every atom of $P$ and $Q$ is derivable but none is safely derivable. 

Some definitions are mathematically sound definitions in some contexts and  not in others. But in these other contexts, they still {\em partially} define a set. 
\begin{example} In the context of the integer numbers $\integers$, the evenness definition  $\D_{ev}$ is not an ordered definition. It has a unique natural induction  that is safe-terminal but not terminal. 
\[ \ra Ev(0) \ra Ev(2) \ra \dots \]
The safely defined set is the set of positive even numbers. Each atom $Even(n)$ for negative number $n$ is derivable but none is safely derivable. The safely defined structure is not saturated, it is not a fixpoint and it does not satisfy the implications of the definition. 
\end{example}   

Some informal rule sets (in some context $\OO$) are not sound definitions according to mathematical standards. It seems to us that such definitions would be considered as mathematical errors and they should not be allowed to appear in reviewed mathematical text. They may still {\em partially} define a set: some objects are soundly derived to be in the set, others to be out the set, and some are undecided. 

On the other hand, the safely defined structure is a mathematically well-defined structure for every definition $\D$ in every context $\OO$. One clear indication of an error is when the safely defined structure is not saturated. The derivable but not safely derivable domain atoms are undecided elements of the defined set. Also atoms  that are not derivable but not strictly underivable are undecided elements of the defined set.

While such (partially) paradoxical definitions seem unacceptable in standard mathematical practice, certain {\em definitional paradoxes} and the partial sets they define have attracted considerable attention in the philosophical logic community. 

\paragraph{Theory of truth} A longstanding  problem in philosophical logic is the definition of a truth predicate \cite{Tarski44,Kripke75}. The  exposition below is based on \cite{Fitting97}. Let $\Sigma$ be the vocabulary of Peano arithmetic (possibly augmented with additional symbols), and $\OO$ its standard interpretation (extended for the additional symbols). Let $T/1$ be a new unary predicate and $\Sigma_T=\Sigma\cup\{T/1\}$. Assume there is a G\"odel numbering $\goedel{\cdot}$ of formulas over $\Sigma_T$ which allows for paradoxes and other self-referential statements such as {\em liars} ``this sentence is false'' and {\em truth tellers} ``this sentence is true''.  The challenge is to define $T$ as the set of all G\"odel numbers of true propositions.  Formally, its (infinite) definition $\D_{Truth}$ consists of, for each sentence $\varphi$ over $\Sigma_T$, the rule
\[ T(\goedel{\varphi}) \rul \varphi\]
This is a recursive definition since sentences $\varphi$ are allowed to include the truth predicate. A  liar sentence has  the form $\neg T(n_l)$ where $n_l=\goedel{\neg T(n_l)}$. For such a liar sentence, the  definition contains the  rule:
 \[ T(n_l) \rul \neg T(n_l)\] 
Thus,  $T(n_l)$ is undefined in the defined structure of $\D_{Truth}$. A truth teller has the form  $T(n_t)$ where $n_t=\goedel{T(n_t)}$. For them,  $\D_{Truth}$ contains rules of the form:
\[  T(n_t) \rul T(n_t)\]
Kripke proposed a three-valued construction that produces a partial set for
$T$ in which liar and truth sayer sentences are left undefined,  as well as many other self-referential sentences. 

We can see that in the safely defined structure of this definition, liars and truth sayers are false. Liars are not safely derivable, nor strictly underivable, while truth sayers are strictly underivable.

\paragraph{Not so sensible monotone, ordered, iterated definitions}

A key intuition that has guided the research here is the idea that  a rule can be applied only if it continues to apply in every natural induction. The concept of safe natural induction formalizes this by only deriving atoms $A$ in structure $\I$ that remain derivable in every natural induction from $\I$. However, this is not {\em exactly} the same. In the remainder of this section, we will see examples of monotone, ordered, iterated definitions and safe natural inductions where a rule is ``safely applied'', and yet, the rule condition becomes violated at a later stage. 

\begin{example} \label{ex:insensible}
The rule set $\defin{P\rul\Tr}$ is a monotone, an ordered and an iterated definition. The induction order is the empty order $\order_\emptyset$. If we replace its rule body by a tautology, these properties and the defined structure are preserved.  (see Proposition~\ref{prop:equivalence} below). 
Hence, also \[\defin{P\rul \neg P\lor P}\] and, after splitting this rule:
\[\left\{
\begin{aligned}
P&\rul\neg P\\P&\rul P 
\end{aligned}\right\}\]
are monotone, ordered and iterated definitions that define $\{P\}$. Notice that in the latter rule,  $P$ depends on itself in each rule separately, yet globally it does not depend on itself. The natural induction
$$ \ra P $$
derives $P$ using the first rule. The condition of this rule is violated after application of the rule. However, by then the second rule applies. Hence, this is a safe-terminal natural induction. 
\end{example}

The correctness of the transformation in the above example follows  from the proposition below. 

\begin{proposition} \label{prop:equivalence} Let $\varphi, \varphi'$  be logically equivalent. Substituting rule $\forall \xxx(P(\ttt) \rul \varphi')$ for a rule $\forall \xxx(P(\ttt)\rul\varphi)$ in $\D$ in some context $\OO$ preserves (monotone) dependencies, the property of being a monotone/ordered/iterated definition, (safe) natural inductions and the defined set. The same holds for splitting a rule $\forall \xxx(P(\ttt)\rul\varphi\lor\psi)$ in a pair $\forall \xxx(P(\ttt)\rul\varphi), \forall \xxx(P(\ttt)\rul\psi)$.
\end{proposition}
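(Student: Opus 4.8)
The plan is to reduce the entire statement to a single observation about the one-step derivability relation. Every downstream notion in the paper --- closure/saturation, natural induction (Definition~\ref{def:nat:ind}), dependency and monotone dependency (Definitions~\ref{def:dep} and~\ref{def:mon:dep}), the properties of being monotone/ordered/iterated (Definitions~\ref{def:monotone} and~\ref{def:ordered:iterated}), safe derivability and safe natural inductions, and finally the defined set --- is specified purely in terms of the relation $\I\der A$ of Definition~\ref{def:derivable} (together with order-theoretic conditions on $\dep$ that never mention $\D$). Hence it suffices to prove one lemma: if $\D'$ is obtained from $\D$ by either transformation, then for every context structure $\OO$ and every $\defp{\D}$-structure $\I$ the relations $\der$ and $\vdash_{\D'}$ coincide, i.e. $\I\der A$ iff $\I\vdash_{\D'}A$ for every defined domain atom $A$. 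First I would fix an ambient vocabulary large enough to contain the free symbols of $\varphi$, $\varphi'$ and $\psi$, so that $\defp{\D}=\defp{\D'}$ and $\pars{\D}=\pars{\D'}$ and the two definitions range over the same structures and the same set $\domat{\defp{\D}}{D}$ of defined domain atoms.

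For the lemma itself, both transformations touch only one rule, so it suffices to compare derivability by the old rule(s) with derivability by the new one(s), all remaining rules being shared. In the substitution case, a domain atom $\Pa$ is derivable by $\forall\xxx(P(\xxx)\rul\varphi)$ from $\I$ exactly when $\varphi^{\I[\xxx:\aaa]}=\Tr$; since $\varphi$ and $\varphi'$ are logically equivalent, $\varphi^{\J}=\varphi'^{\J}$ for every structure $\J$, in particular for $\J=\OO\circ\I[\xxx:\aaa]$, so the two rules derive exactly the same atoms. In the splitting case, $\Pa$ is derivable by $\forall\xxx(P(\xxx)\rul\varphi\lor\psi)$ iff $(\varphi\lor\psi)^{\I[\xxx:\aaa]}=\Tr$ iff $\varphi^{\I[\xxx:\aaa]}=\Tr$ or $\psi^{\I[\xxx:\aaa]}=\Tr$, which is precisely derivability by one of the two split rules; so ``derivable by some rule'' is unchanged. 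To accommodate the paper's shorthand $\forall\xxx(P(\ttt)\rul\phi)\equiv\forall\bar y(P(\bar y)\rul\exists\xxx(\bar y=\ttt\land\phi))$, I would note that logical equivalence of $\varphi,\varphi'$ lifts to $\exists\xxx(\bar y=\ttt\land\varphi)$ and $\exists\xxx(\bar y=\ttt\land\varphi')$, and that $\exists$ distributes over $\lor$, so the same two computations go through on the normalized rules.

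With the lemma in hand, the rest is propagation, and I would read off each clause directly. Since $\der$ and $\vdash_{\D'}$ are the same relation: a set is a dependency (resp. monotone dependency) relation of $\D$ iff of $\D'$, because Definitions~\ref{def:dep} and~\ref{def:mon:dep} quantify over $\I,\J$ and invoke only $\der$; hence $\D$ is monotone (Definition~\ref{def:monotone}), ordered, or iterated (Definition~\ref{def:ordered:iterated}) iff $\D'$ is, as these ask for the existence of an $\order$/$\dep$ with order properties plus the (now identical) (monotone-)dependency property. The sequences admissible as natural inductions from any structure coincide, since each successor step only requires $\I_i\der A$; consequently $\Safe{\cdot}$ and the strictly underivable atoms coincide, whence safe and safe-terminal natural inductions coincide, as do the relations of respecting and following a given $\dep$ (which use only saturation and the sequence's intrinsic stages). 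Finally the defined set --- whether read as the limit of a terminal natural induction respecting $\dep$ via Theorem~\ref{theo:iterated:convergence}, or as the safely defined structure --- is computed from objects all of which are preserved, hence unchanged. The only genuine care-point, which I would treat as the ``obstacle,'' is the bookkeeping of the first paragraph: ensuring that both transformations really leave $\defp{\D}$, $\pars{\D}$ and the domain-atom set fixed (handled by the ambient-vocabulary convention) and that the shorthand normalization does not disturb the equivalence; once these are pinned down, every remaining claim is an immediate consequence of the derivability lemma.
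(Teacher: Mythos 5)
Your proposal is correct and is essentially the paper's own argument, which simply observes in one sentence that dependency, monotone dependency, and (safe) natural induction are defined semantically and hence preserved under equivalence-preserving transformations of rule bodies; you have merely made explicit the key lemma (invariance of the one-step derivability relation $\I\der A$) and the bookkeeping about vocabulary and rule normalization that the paper leaves implicit. No gap; your version is just a fully spelled-out rendering of the same reduction.
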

\begin{proof}
The concepts of dependency, monotone dependency, and (safe) natural induction are defined semantically and hence, they are preserved under equivalence preserving transformations to rule bodies. 
\end{proof}

It is questionable whether the  behaviour displayed in the previous example is found in informal definitions in mathematical text. At least, we have never seen this. It would occur in the variant informal definition in the following example. 

\begin{example}
Consider the variant of the informal definition of satisfaction (Definition~\ref{def:sat}) obtained in a similar way,  by replacing
its first rule by the following ones:
\begin{itemize}
\item $I\models P$ if $I\not \models P$ and $P\in I$;
\item $I\models P$ if $I\models P$ and $P\in I$;
\end{itemize}
One could argue that these new rules ``obviously'' are equivalent to
the original one by appealing to the  fact that  ``$I\models \neg P$
or  $I\not \models \neg P$''  is tautologically true. But the modified rules intuitively mismatch  the induction order and we doubt whether such a definition would be accepted in mathematical text (we would not accept it, at least).  

On the formal level, $\D_{\models}$ is easily modified to express the above informal definition. It is an ordered definition in every suitable context $\OO$ induced by a propositional vocabulary $\satvoc$. For $P\in I$, any safe natural induction derives $I\models P$ using the first rule, after which its condition is violated but the second  rule starts to apply.
\end{example}


The same behaviour as in the previous case is found also in rule sets that are not iterated definitions. 

\ignore{

\begin{example} Consider the following formal definition. 
\[
\D = \defin{ 
S \rul \neg R \\
S \rul  Q\\
Q\rul S\\
R \rul Q
}
\]
$\D$ is not an ordered or iterated definition: its unique dependency is the total relation $\dep_t$ and this is not a monotone dependency of $\D$. Its unique natural induction is:  $$\ra S\ra Q\ra R$$ At stage $0$,  $S$ is derivable from $\neg R$. This condition is falsified later when $R$ is derived, but $R$ can be derived only after $Q$ and by then, $S$ can be derived using its second rule,  from $Q$. Therefore, $S$ is safely derivable at stage 0. To our surprise , this is a safe natural induction and the  safely defined set is $\{S,Q,R\}$. 

This is a saturated set: there are no derivable but underived atoms. The safely defined set is a fixpoint and satisfies the implication. Yet, the ``safe'' derivation, at least its first step, did not feel like safe.  To us, this is a counterintuitive case. 
\end{example}
}

\begin{example}
The rule set below has the property that its safely defined structure is a non-minimal fixpoint. 
\[ \defin{
Q \rul \neg P\\
Q \rul  P \land Q\\
P \rul Q\\
P \rul P\\
}\]
It is not an iterated definition : the unique dependency is the total relation on defined atoms, and this does not monotonically order the definition.  The unique terminal natural induction of this definition is:
\[  \ra Q \ra P \]
Formally, it is safe.  Indeed, initially $Q$ is derived from its first rule. While its condition $\neg P$ is later canceled, it is canceled only after the derivation of $P$, at which time $Q$ is derivable from its second rule. The defined set is a fixpoint. However, it is not a minimal fixpoint. The least fixpoint is $\{P\}$.  
\end{example}

In all above cases, we found  behavior that we probably never encounter in mathematics. This suggests that  an implicit convention of informal definitions is not yet explicit in  our theory: our definitions of monotone, ordered and iterated definitions and the concept of safe natural induction might be too liberal. Some convention regarding informal definitions remains to be discovered. 

\ignore{
The above examples show that, in one respect, the definitions of
Section~\ref{sec:formal:def} are very liberal and accept formal definitions that we do not expect to find in mathematical text. 
There is a clear difference between the formal definitions in
Example~\ref{ex:insensible} and~\ref{ex:Goldbach} and the simpler
definitions from which they were derived: in the original ones, each
individual rule matches the induction order; in the derived ones, this
is not the case. For example, the empty order $\order_\emptyset$ is a
dependency of the definition $\defin{P\rul P, P\rul \neg P}$ in which
$P$ does not depend on itself; however each rule separately depends on
$P$. Likewise, the total relation $\dep_t$ is a monotone dependency of
this definition in which $P$ depends monotonically on itself; yet the
second rule depends non-monotonically on $P$. A similar pattern is
found in Example~\ref{ex:Goldbach}.  To detect for one of those
contrived definitions that the proposed induction order is a
(monotone) dependency, requires a {\em global} analysis of the set of
all rules. For the original ones, a simpler {\em local} analysis
suffices. This observation also leads us to expect that the
task of verifying that a proposed relation $\dep$ is a (monotone)
dependency is a complex one.  The following proposition proves this,
in the context structure of finite structures.
}

\ignore{

\subsection{Summary} 

In this section, we evaluated the semantics of safely defined structures for rule sets beyond iterated definitions. Such definitions cannot be monotonically ordered, and hence the concept of natural inductions respecting an induction order is not available. Safe natural inductions are well-defined.  Some of these definitions still faithfully repesent sensible mathematical definitions with a set defined by safe natural inductions; others only partially defined their set, in a context dependent way, and some rule sets are plainly paradoxical. In general, the safely defined structure behaves quite well, derives correct values. Natural inductions in these contexts are . safe

Advantages: 
- always well-defined
- good formal structure: it formally defines a unique structure in every context
- this unique structure is the intended semantics if the definition 
is of the formally defined ones within the context of the parameter structure
- the structure

disadvantages:
- no formal way to recognize well-defined and badly defined atoms 
- we have seen safe inductions of dubious nature, constructing defined sets that are formally well-behaved (fixpoints) but intuitively questionable 
- the high complexity is surprising

 This section explores this. We will see the following cases:
\begin{itemize}
\item Informal rule sets and their faithful representations that represent paradoxical definitions, or partially paradoxical definitions; in some cases, a  set is still partially defined. 
\item Formal rule sets that are not iterated definitions but faithfully express acceptable definitions in mathematics.
\item Monotone, ordered or iterated definitions in some context $\OO$ that are not acceptable mathematical definitions.
\end{itemize}
One aspect to look into is the behaviour of the semantics of safely defined structures for rule sets. For monotone, ordered and iterated definitions, the safely defined structure is the defined structure but the question is what it is beyond these classes. Essentially, we develop a certain intuition here for what informal and  formal rule sets correspond to good definitions. To a large extend, the question is where the notion of natural induction captures our intuitive understanding of a rule set. We will try to draw a finer line between informal rule sets that we would accept in mathematical text as definitions (in some context), and rule sets that we would not accept.  NOG NIET GOED.

}

\ignore{
These concepts are context dependent: a rule set $\D$ may be an iterated definition in some context structure $\OO$ but not in other ones. On the other hand the safely defined structure is mathematically well-defined for every rule set $\D$ and context $\OO$. 

In this section, we will try to draw a finer line between rule sets that we would accept in mathematical text as definitions (in some context), and rule sets that we would not accept. This discussion pertains to both informal rule sets and their formal faithfull representations. At the same time, we explore the semantics of safely defined structures for rule sets that are not, or not entirely sensible as a definition.
}

\ignore{To obtain a logic of definitions, its syntax should be defined in a context independent way, and its formal semantics should apply to  every syntactic definition of the logic. 

A logic that satisfies these conditions could be defined as follows. We define a definition as any rule set $\D$ over some vocabulary $\voc$. A model of $\D$ could be defined as any $\voc$-structure $\I$ such that $\I|_{\defp{\D}}$ is safely defined by $\D$ in context $\I|_{\pars{\D}}$.  The logic thus defined has a simple decidable syntax and a well-defined semantics and satisfaction relation. In this section, we evaluate this logic from different angles. 

A logic needs a decidable syntax. Unfortunately, whether a rule set $\D$ is a monotone, ordered or iterated definition depends on the context structure $\OO$. E.g., $\D_{ev}$ is an ordered definition in the Worse, it is undecidable in general whether $\D$ is a well-defined definition in $\OO$ (i.e., whether $\D$ is monotone, or admits an $\dep$ that (monotonically) orders it). E.g., in the context $\OO$ of the natural numbers, a definition of the form $\defin{P \rul \neg P\lor \varphi}$, with $\varphi$ a sentence in Peano arithmetic, is monotone, ordered as well as iterated if and only if $\varphi$ is true in the structure of the natural numbers. Hence, it is undecidable whether a definition of this form belongs to these classes. 
}

\section{On the role of the induction order}

%

The role of the induction order $\order$ in an informal definition is to delay the application of a (nonmonotone) rule until it is {\em safe} to do so, as explained above. This ensures that all natural inductions that respect $\order$ are confluent (if at least the induction order is a dependency of the informal definition). But contrary to what might be expected, the induction order has no semantical role: the set defined by a definition over an induction order can be computed without knowing this order and hence, it does not depend on the induction order.

Nevertheless,  there is value in specifying an induction order for (informal) definitions. 
\begin{itemize}
\item An induction order gives the reader insight in the structure of the definition, and the structure of its (safe or natural) induction processes. It helps the reader to understand the definition. 
\item Given that it is not difficult to write senseless definitions, specifying an induction order yields a ``parity check'' for verifying the mathematical sensibility of the definition. Once the reader has verified that the induction order is a well-founded order and dependency of the definition, she can be  certain that the definiendum is well-defined.  One will ``feel'' this parity check in operation when reading the following toy definition:
\begin{definition}
We define the set of grue natural numbers by induction on the standard order:
\begin{itemize}
\item $n$ is grue if $n+1$ is grue.
\end{itemize}
\end{definition}
This is a monotone definition defining the set of grue numbers to be empty, but  the induction order does not match the inductive rule, and such a definition is mathematically unacceptable. 

\item From a reasoning and computational point of view, a given induction order is a tremendous help. It helps to build safe natural inductions without the expensive ``safety check''.
\item  It helps to tune an induction process to compute truth or falsity of a defined fact. 
\item The induction order gives us a criterion to  stop a partial induction process and still be certain that a queried fact cannot be derived anymore. 
\item The induction order is underlying all top-down computation procedures to compute the truth value of a defined fact $A$. Such computations can be understood as intertwined computations of the set $\{B \mid B \order A\}$ and the computation of a natural induction on this set towards $A$. 
\item In infinite spaces, terminal natural inductions are infinite objects, but if $\{B \mid B \order A\}$ is finite, the value of a defined fact $A$ can be computed, and the size of $\{B \mid B \order A\}$ is an indicator of the computational complexity of this.
\end{itemize}

Thus, although the induction order $\order$ (and by extension, an induction relation $\dep$ for iterated definitions) is semantically not useful and it may put a burden on the knowledge representation, it might still be valuable to extend a definition logic to allow to express the induction order. Such an explicit induction order could then be used by the system for the various tasks that we discussed above.

\ignore{

\section{Related and future work} \label{sec:logic}

\subsection{Knowledge Representation }

An important part of human expert knowledge consists of non-recursive
as well as recursive definitions.  Other parts of human knowledge
consist of observations or assertions. In a Knowledge Representation
formalism, it is often useful to combine these two sorts of knowledge.
This motivated \citet{tocl/DeneckerT08} to integrate a rule-based
representation for inductive definitions (identical to the one used in
this article) with classical logic, resulting in the logic FO(ID). A
theory in FO(ID) is a set of FO sentences and rule sets; its models
are the (two-valued) structures that satisfy all FO sentences and are
well-founded models of all definitions.  Paradoxical definitions are
inconsistent, conform the way paradoxical definitions are treated in
mathematics. The concept of dependency introduced in this paper is a
generalization of the concept of dependency well-known in logic
programming \cite{iclp/LifschitzT94}. $\kass$-dependencies were
introduced in \cite{tocl/DeneckerT08} where they were called {\em
  reductions}.  They were used for splitting definitions and building
a calculus for constructing a dependency relation for a
definition. Currently, the knowledge base system IDP offers various
forms of (mainly finite domain) inference for an extension of FO(ID)
\cite{WarrenBook/DeCatBBD14}.

\ignore{A standard form of inference on definitions is: given a
  definition $\D$ and context structure $\OO$ (e.g., an extensional database
  interpreting the parameters of $\D$), to compute the defined
  structure $\OO^\D$. Or, for a query $Q(\bar{x})$, to compute the set
  $\{\bar{d} \mid \OO^\D\models Q(\bar{d})\}$. These are standard
  problems in deductive databases using Datalog$^\neg$.  The
  XSB-system \cite{sigmod/SagonasSW94} can be used to solve similar
  problems in logic programming.  \bart{waarom specifiek XSB en niet
    prolog engines in general? Omdat zij WFS hebben?}

  Also the inverse problem is quite common: partial or even complete
  knowledge is available for the defined predicates, and the problem
  is to compute values of the parameters. E.g., this is the case in
  the Hamiltonian cycle problem. In this problem we search a path
  through a graph expressed as a binary predicate $Next/2$ such that
  its transitive closure is the total binary relation. Hence, here the
  defined predicate is known (the transitive closure) and the
  parameters need to be computed. A finite domain solver for FO(ID)
  (and extensions of it) that can solve such problems is the IDP
  system
  \cite{WarrenBook/DeCatBBD14}.  
}

\ignore{ 
\begin{example}\label{ex:master}
  Assume two servers $S_1$ and $S_2$ that receive packages and perform some action on them. At each time, one  server is the master and the other one the slave. The master may chose the packages  he wants to treat; the acceptance criteria of  master $s$ to accept package $p$ is expressed in the formula $\Phi(s,p)$; the slave treats the remaining ones. At each time $t$, it is reconsidered who is master and who is slave on the basis of criteria that do not matter, but that depend on the past ($t'<t$): on the nature of packages that have been received, on who was master in the preceding period, etcetera. This criterion is expressed as $\Psi(t)$; if satisfied, $S_1$ becomes the master, otherwise $S_2$ becomes the master.  This logic can be expressed by the following rule set. 
\[\defin{ 
\forall t (Master(S_1,t) \rul  \Psi(t))\\
\forall t (Master(S_2,t) \rul  \neg \Psi(t))\\
\forall s\forall t (Slave(s,t) \rul \neg Master(s,t))\\
\forall t\forall p \forall m (Treat(m,p,t) \rul Master(m,t) \land\\
\quad  Incoming(p,t) \land \Phi(m,p)) \\
 \forall t\forall p \forall s \forall m   (Treat(s,p,t) \rul Slave(s,t) \land  \\
\quad Incoming(p,t) \land Master(m,t) \land \neg Treat(m,p,t))\\
}\]
Other parts of  the theory may specify $\Psi(t), \Phi(m,p)$, We observe that $Treat(S_1,p,t)$ and $Treat(S_2,p,t)$ occur in  a negative recursion due to the last rule. Nevertheless, the  well-founded model is two-valued in any context structure $\OO$ where time is interpreted as the set of integers.  Indeed,  it can be decided  at each time point $t$, whether $\Psi(t)$ is true or not. This is because, as assumed,  $\Psi(t)$ depends only on the past ($t'<t$). From this, it follows who is master and slave at $t$, and hence, who can chose the packages to treat. This is an example of a definition for which an induction  order  can be proven to exist but cannot be given in advance because its definition is  tightly entangled  with the induction process itself.  
\end{example}

\bart{Ik zou de analyse van dit voorbeeld er uit halen en iets uitgebreider maken. Een poging:

The above example is an inductive definition for which an induction order can be proven to exist but cannot be given in advance. 
The reason for this is that the induction order is tightly entangled with the induction process itself. 
We believe that in mathematical texts, a definition as the above would only be accepted if it is accompanied by an explanation of how to obtain the induction order. 
This example clearly surpasses the class of inductive definitions studied in this article. However, the well-founded semantics still manages to identify the defined set correctly. This illustrates again that the well-founded model construction is much more liberal than the natural indcutions we defined.}
\paragraph{Dependencies}

 The concept of dependency  generalizes the concept of dependency used in logic programming. $\kass$-dependencies were introduced  in \cite{} \bart{is deze ref \cite{iclp/LifschitzT94} ``Splitting a logic program''???}
 where it was called a reduction and where they were used for various purposes such as splitting or building a calculus for proving that a given relation is dependency or  constructing a dependency relation for a definition.

\paragraph{The logic(s) of definitions}
\bart{Ik vind deze paragraaf niet noodzakelijk.}

For knowledge representation, it is useful  for a logic  that equivalence is preserved under transformations based on natural ``laws of thought'':  idempotence, commutativity and associativity laws of $\land$ and $\lor$, distribution laws of $\land, \lor, \neg,  \forall$ and $\exists$ including the laws of De Morgan, double negation ($\neg\neg\varphi \equiv \varphi$)  etc. 

They are all satisfied in classical logic. Hence, it follows from Proposition~\ref{prop:equivalence} that  applying any of these laws to rule bodies  preserves monotone, ordered and iterated definitions and the set they define. The same transformations also preserve  $\spass$-paradox freeness and the ultimate well-founded model. These results generalize to the framework of Section~\ref{sec:refined}. We call two propositions $\psi, \varphi$ $\ass$-equivalent if their truth value is identical in all three-valued structures (notation $\psi\equiv_\ass \varphi$). Substituting a rule body for a $\ass$-equivalent formula preserves $\ass$-monotone, $\ass$-ordered and $\ass$-iterated definitions and the set they define, and $\ass$-paradox freeness and $\ass$-well-founded model. 

In case of the Kleene truth evaluation $\kass$, most laws of thought continue to hold, including the ones mentioned above, even double negation. One exception is complement splitting: $\psi \not  \equiv_\kass ( \varphi\land \psi \lor \neg\varphi \land \psi)$. Thus, the concepts of the framework relative to $\kass$ are closed under most standard laws. That they are  not preserved under complement splitting is not surprising, since complement splitting deletes the locality property (see Example~\ref{ex:Goldbach}).
\bart{``the locality property?'' Voorstel: ... since complement splitting can break local dependencies (see ...)}

The same questions rise for the strict definitions based on the T-/F-reasons. 
We define  $\varphi \equiv_{T/F} \psi$ if in every structure, $\varphi$ and $\psi$ have the same T-/F-reasons. Substituting equivalent formulas in rule bodies obviously preserves strict definitions and the set they  define which is the $\kass$-well-founded model. An interesting question is which laws  preserve $\equiv_{T/F}$. This is a topic for future research. 
}

\ignore{
A rule that is not equivlance preserving is case splitting using a new formula $\varphi$. The idea is to replace a formula $\psi$ by $(\varphi\land\psi \lor \neg\varphi\land\psi)$.  Case splitting in informal definitions is not  forbidden in general. For example, by case splitting of the rule  the following pair of rules defining truth of disjunctions would be acceptable mathematical practice. 
\begin{itemize}
\item $I\models \psi\lor\phi$ if $I\models\psi$.
\item  $I\models \psi\lor\phi$ if $I\not \models\psi$ and $I\models\phi$.
\end{itemize}
The rules are split over the fact $I\models\psi$ that is strictly smaller than the defined fact. This is entirely unproblematic. This shows that  case splitting is only problematic if the introduced formula does not respect the  induction order. A similar statement could be proven for formal definitions.

\paragraph{Inference}

A standard  form of inference  on definitions  is: given a definition $\D$ and context structure $\OO$ (e.g., an extensional database interpreting the parameters of $\D$), to compute the defined structure $\OO^\D$. Or, for a query $Q(\bar{x})$, to compute the set $\{\bar{d} \mid \OO^\D\models Q(\bar{d})\}$. These  are standard problems in deductive databases using Datalog$^\neg$. 
The  XSB-system \cite{sigmod/SagonasSW94} can be used to solve similar problems in  logic programming. 
\bart{waarom specifiek XSB en niet prolog engines in general? Omdat zij WFS hebben?}

 Also the inverse problem is quite common: partial or even complete knowledge is available for  the defined predicates, and  the problem is to compute values of the parameters. E.g., this is the case in the Hamiltonian cycle problem. In this problem we search a path through a graph  expressed as a binary predicate $Next/2$ such that its  transitive closure is the total binary relation. Hence, here the defined predicate is known (the transitive closure) and the parameters need to be computed. A finite domain solver for FO(ID) (and extensions of it) that can solve such problems is the IDP system \cite{WarrenBook/DeCatBBD14}. 
}

\subsection{Metamathematical issues} 

DEZE DISCUSSIE KOMT TE LAAT VOOR IEDEREEN DIE WAT AFWEET OVER INDUCTIE, 
DIE DIT BELANGRIJK VIND. DE DISCUSSIE MOET VAN MEET AF AAN STELLEN WAT ER NIEUW AAN DEZE PAPER IS. DIT IS NIET BELANGRIJK VOOR IEDEREEN DIE NIETS AFWEET VAN AL DIT WERK, MAAR HELAAS, ZIJ ZIJN NIET ONZE REFEREES. WE HEBBEN HIER AL EERDER OVER GEDISCUSSIEERD, IK HEB DIT PUNT AL EERDER GEMAAKT. VOLGENS MIJ WIJZEN SOMMIGE REACTIES VAN DE REVIEWERS HIER OP. DE KUNST IS EEN STUK TEKST TE MAKEN IN DE INLEIDING, DIE LEZERS DIE NIET VERTROUWD ZIJN MET HET BESTAANDE WERK VAN INDUCTIE KUNNEN OVERSLAAN.  

(Informal) definitions, including inductive ones, are fundamental in
building mathematics and consequently, they have been a prime topic of
research in the field of metamathematics (the mathematical study of
mathematical methods). Seminal work on monotone induction includes
\cite{ajm/Post43,Spector61,Moschovakis74,Aczel77}.  Iterated induction
was studied in
\cite{Kreisel63,Feferman70,MartinLoef71,BuchholzFPS81}.  As remarked
in \cite{journals/tcs/Hallnas91}, these studies are primarily
concerned with {\em inductive definability}; the representation of
{\em inductive definitions} is a side issue. 
%

A key aspect of the present paper is the formalization of the
induction process. Not all studies of inductive definitions
explicitate the induction process (e.g., proof-theoretical studies
or approaches using minimal model characterisations in second order
logic). Others focus on a unique inductive process, in particular the
least fixpoint construction $\bot, O(\bot), O^2(\bot),\dots$ of the
operator $O$ associated to the definition.  In contrast, the induction
process is the key concept of this article.  We have presented an inductive
definition as a description of a class of confluent
induction process(es) that converge to the defined set.  This exposes
several fundamental aspects of inductive definitions that we studied
here for the first time: the non-determinism of the induction process,
the problem of confluence of these induction processes, the links
between rules and the induction order, and ultimately, the
independence of the defined set from the induction order, which allowed
us to construct the defined set without explicitating the induction
order.

In the logic of ordered and iterated induction (IID) presented by
\citet{BuchholzFPS81}, an iterated inductive definition is expressed
via a second order logic formula that expresses a definition $\D$ and,
independently, an induction order $\order$. They use this logic to
study proof-theoretic strength and expressivity of iterated
definitions. In this logic, the order can be chosen independently of
the definition; there is no requirement similar to our notion of
dependency.  We showed  that the risk of choosing an order that does
not match with the definition is that there is no confluence of
different induction processes, or that an unintended set is
constructed.  The first problem is avoided in the logic of
\citet{BuchholzFPS81}. Essentially, the second order formula can be
seen to impose a strong additional constraint on the induction process
so that convergence can be guaranteed. However, the second problem
cannot be avoided. For example, one can encode the definition
$\D_{ev}$ with the non-matching order $Ev(1)\order Ev(0)\order
Ev(2)\order \dots$, in which case the unintended set $\{Ev(1), Ev(0),
Ev(3), Ev(5),\dots\}$ is constructed.


As mentioned in Section~\ref{sec:formal:def}, formal scientists often
use fixpoints of operators to define complex concepts. Likewise, fixpoint
logics are often used to express inductive definitions in logic
\cite{focs/GurevichS85}. They are tightly related to the logics
defined here. Indeed, a  most abstract treatment of inductive definitions is
found in fixpoint theory.  Here a definition is represented as a lattice
operator and the defined point is a particular fixpoint.  In case of
monotone definitions, the algebraic representation is a monotone
operator, and the defined point its least fixpoint.  Approximation
Fixpoint Theory (AFT) \cite{DeneckerMT00} extends the picture to the
sort of inductive definitions considered in this paper.  Such
definitions correspond to arbitrary (potentially non-monotone)
operators.  AFT defines a range of fixpoints for a (non-monotone)
lattice operator, including a well-founded fixpoint. The well-founded
fixpoint of Fitting's immediate consequence operator is nothing else
than the standard well-founded model. 
In addition, \citet{lpnmr/DeneckerV07} also introduces the algebraic notion of a well-founded induction sequence, and shows that the well-founded fixpoint can be constructed as the limit of any such sequence. This algebraic concept is closely related to the logical kind of well-founded inductions we have studied here. As such, AFT can be seen as the algebraic fixpoint theory that corresponds to the logical theory we have developed in this article.

\ignore{

JOOST's VERSIE:

Instead of representing inductive definitions by logical formulas, they also be studied in a purely algebraic way, by means of lattice operators. The famous result by Tarski shows that, for monotone definitions/operators, their defined point is the least fixpoint of the operator, which can be constructed as the limit of an ascending sequence $\bot, O(\bot),\ldots, O^n(\bot),\ldots$.  There is of course a strong link between the logical and algebraic points of view: if we associate to a set of rules $\Delta$ in a context structure $O$, an immediate consequence operator $T_\Delta^O$, then the sequence $\bot, T_\Delta^O(\bot),\ldots, (T_\Delta^O)^n(\bot),\ldots$ is a natural induction of  $\Delta$. 

Extending Tarski's theory, Approximation Fixpoint Theory (AFT) \cite{DeneckerMT00} studies also fixpoints of non-monotone operators. Among others, AFT defines the well-founded fixpoint of an arbitrary lattice operator $O$, and shows that, when $O = T_\Delta^O$, its well-founded fixpoint coincides with the well-founded model of $\Delta$ given $O$. In addition, \cite{...} also introduces the algebraic notion of a well-founded induction sequence, and shows that the well-founded fixpoint can be constructed as the limit of any such sequence. This algebraic concept is closely related to the logical kind of well-founded inductions we have studied here. As such, AFT can be seen as the algebraic fixpoint theory that corresponds to the logical theory we have developed in this article.
}

}

\section{Conclusion} \label{sec:conclusion}

This paper could be viewed as an empirical, formal, exact scientific study of certain classes of informal definitions, in the following sense of these words: empirical: definitions exist, can be written, read, interpreted, reasoned upon; formal: a mathematical model is built for them; exact:  the formal model characterises the informal definitions in detail. 

For lecturers and authors of text books in mathematics and logic, the results of this paper may in the long run prove useful to explain students the meaning of various types of  definitions that appear in their courses or text books. 

To metamathematics, the paper makes a contribution of formalizing the induction process in common forms of definitions. It led us to explore  several fundamental aspects that were not studied before. To recall the most important ones: the non-determinism of the induction process, the role of the induction order and its link with the rules, the confluence of induction processes, the independency of the defined set of the induction order. We pointed to the pragmatical importance of these properties, e.g., the possibility of directing the induction process towards answering a specific query. 

The independency of the induction order suggests to build a general rule-based definition logic without induction order, using the semantics of safely defined structures. Our exploration of this idea had a mixed outcome with on the positive side sensible definitions beyond iterated definitions and the link with definitional paradoxes but on the negative side the high complexity of the semantics in finite structures, a lack of distinction between atoms that are defined as false  and that are undefined, and also some counterintuitive  examples with, e.g., non-minimal defined fixpoints. The paper is open-ended on this level. 
\new{We conjecture that methods of three-valued logic inspired by the semantics of logic programming are useful to tackle at least some of these problems.}

To knowledge representation, the paper contributes a study of an important form of human knowledge available in virtually all KR applications, and supported in many expressive declarative systems. 

From the perspective of KR, an uncommon aspect of our study is its exact, formal approach to the studied forms of knowledge. KR research aims to develop formal languages and methods to express knowledge, but it rarely pretends to be the scientific study of knowledge. Knowledge  \footnote{The word ``knowledge'' is not he appropriate term here, but the right word does not seem to exist yet, or at least, we do not know it. In philosophical logic, knowledge is often defined as  ``true  justified belief'', and this is certainly not what we have in mind. What we have in mind with is the ``thing'' that could be true or false, believed or not believed, and if it is believed, could be believed in a justified way or in an unjustified way. It could perhaps better be called ``a piece of information'', or a ``quantum of information'', a word used in \cite{Devlin91} in a sense that is certainly close that what we had in mind.} is a cognitive reality that many  consider to be out of reach of the methods of formal empirical scientific investigation. Moreover, human knowledge is communicated in informal natural language which for many  is inevitably vague and ambiguous. It is true that natural language is sometimes ambiguous, but there are other contexts where it can be extremely precise. In particular, mathematicians and formal scientist are trained in precise language and they use informal natural language to build their disciplines with mathematical precision. The definition is a part of the informal language of formal science and mathematics.  Its precision makes it an ideal target for a formal 'empirical' study. In this, we have stressed the mathematical nature of the ``experiments''. E.g., the question of  whether an informal definition such as  Definition~\ref{def:sat} is faithfully expressed by a formal definition such as $\D_{\models}$, whether its subformula order is formalized by $\order_{\models}$, or whether the formally and informally defined sets coincide: these facts are not matter of contention; they are mathematical facts.  

A contribution that is not yet elaborated here is to logic programming.  There is an obvious syntactical link between definitions in this papers and logic programs. Ever since the early days of logic programming when the negation as failure problem arose, some have explained the declarative meaning of  logic programs in terms of definitions. This holds especially for those that adopted the well-founded semantics \cite{GelderRS91}. The link was made explicit in several publications \cite{amai/Schlipf95,Denecker98,tocl/DeneckerBM01,tocl/DeneckerT08}, where 
various arguments can be found that, under the well-founded semantics,  each rule in a set of rules can be seen as an (inductive or base) case of a (possibly inductive) definition. A problem with all these attempts has been the absence of a natural formal model  of the relevant sorts of informal  definitions. Therefore, a weakness in all these studies is the absence of an obvious connection between how we understand various types of informal inductive definitions in mathematical text and the complex mathematics of the well-founded semantics. With the present paper, we believe to have closed this gap. Now, we are on mathematical ground. 
%
\new{We conjecture that for all mathematical definitions that occur in practice, the set defined by a formal definition is the well-founded model. It is part of our future research agenda to \emph{prove} that this indeed holds.}

\ignore{

It is suitable for a scientific paper to end with a research question. Given that our conscious
understanding of inductive definitions is quite partial, we may wonder where does
  our proficiency to reason with them come from? Indeed, it is not
  because most cannot explain how, e.g., the satisfaction relation is to determined by its definition or even consciously (and erroneously) believes that the
  satisfaction relation is the least set closed under the familiar
  rules, that he or she is not capable to correctly reason about
  it. Indeed, daily practice provides ample evidence to the contrary.
  Arguably, there is a gap between our conscious and subconscious
  understanding of inductive definitions. However, in contrast to
  similar gaps, e.g., in the context of reasoning on statistical
  information \cite{Kahneman11}, it is here the (slow) conscious
  understanding that is erroneous, and the (fast) subconscious
  reasoning that is correct! The explanation that we see for this
  phenomenon is that the principle(s) of inductive definition is not a
  primitive of human cognition, but is just a manifestation of a
  deeper common sense principle 
  of the human mind. 
  We believe that this base cognitive principle is that of
  causal reasoning: the induction process as a causal process that
  creates the defined object. 
  \bart{deze paragraaf staat nogal in sterk contrast met de precisie van de rest van de paper. Ik zou hem precies weglaten (dit is niet wat moet blijven hangen bij een lezer...)}
}
\ignore{

Definitions are a fundamentally important form of human knolwedge. 
This paper is an empirical, formal, exact scientific study of certain classes of informal definitions, in the following sense of these words: empirical: definitions exist, can be written, read, interpreted, reasoned upon; formal: a mathematical model is built for them; exact:  the formal model characterises the informal definition: formally and informally defined objects correspond in detail. 
We explained our research hypotheses and indicated how they could be falsified.

A formal model was developed for monotone, ordered and iterated definitions. It is centered around a faithful  formalization of the induction process. Several fundamental properties  of definition by induction emerged: the non-determinism of the  induction process and the necessity of confluence that it induces,  how the definition constrains the induction order, how the induction order constrains the induction process, that the role of the induction order lies in ensuring safety and, ultimately, that the induction order is irrelevant: the defined set does not depend on the induction order and can be constructed by safe natural inductions which do not depend on the induction order. This nondeterminism is of great pragmatical value, since it allows us to  tweak the induction process towards the problem to be solved or the question to be answered. The concept of safe natural induction is valuable in the sense that it does not depend on the induction order, is defined for every rule sets in every context, and seems to give a more natural account of the induction process as performed by human. It does not necessarily respect the induction order and may therefore be faster fewer steps) than inductions that respect the induction order. 

We also studied the border cases of our formalization and obtained a slightly open ended conclusion: in the first place, the high complexity of computing safe natural inductions; in the second place, that  (safe) natural inductions exist that  that do not feel entirely natural because rules are applied during the induction process that do not continue to be applicable. 

This paper is the first part of a project. In a second companion paper, 
we will investigate the above open question and the link with logic programming.  Ever since the well-founded semantics was first defined
\cite{GelderRS91}, researchers have referred to the concept of {\em an
  inductive definition}, as we know it from mathematics, to explain
the intuitions behind this formal semantics.  
The link was made explicit in several publications
\cite{amai/Schlipf95,Denecker98,tocl/DeneckerBM01,tocl/DeneckerT08}, where 
various arguments can be found that, under the well-founded semantics, 
each rule in a set of rules can be seen as an (inductive or base) case of a (possibly inductive) definition. A problem with all these attempts has been the absence of a natural formal model  of the relevant sorts of informal  definitions. Therefore, a weakness in all these studies is the absence of an obvious connection between how we understand various types of informal inductive definitions in mathematical text and the complex mathematics of the well-founded semantics. In the present paper, we have closed this gap.  The logic programming formalism has an obvious embedding in the rule formalism of this paper and can be viewed as a fragment of the logic here. The well-founded semantics can be easily extended to the formalism here \cite{Denecker:CL2000}.  This paper build a natural formalization of the relevant sorts of informal definitions. Now, we are on mathematical ground and in the companion paper, we will {\em prove} that the set defined by a definition of the sorts studied here, is the well-founded model. 

But our study will also have an interesting return for the logical study of definitions.  It will appear that the well-founded model construction has several important benefits compared to the concept of induction process as introduced above and formalized in the first part of the paper. Like safe natural inductions, it is faster than induction processes that respect the induction order; it does not depend on an explicitly given induction order: it ``guesses'' the right induction order! Moreover, the complexity is lower than of safely while still being able to capture standard inductive definitions in mathematics. These properties offer great benefits for knowledge representation and computation with inductive definitions but also increase our insight of definitions. 

}

\ignore{

A prominent feature that emerges from this is  the non-determinism of the induction process.  This non-determinism is of great pragmatical value, since it allows us to  tweak the induction process towards the problem to be solved or the question to be answered. E.g., if we want to know if $I\models\varphi$, we can tweak the induction process to derive the answer.  In the context structure of computational logic, this is reminiscent of goal-oriented search  implemented in Prolog and Datalog systems. However, the non-determinism also points to the all-important problem of convergence: in a ``good'' definition, induction processes should converge to the same, two-valued set. In our framework, we were able to prove this property for a class of monotone, ordered and iterated definitions. 

Our study  also exposes the role of the induction order. Contrary to what might be expected,  it does not determine the outcome of the induction process.  However, it provides valuable insight in  the structure of the definition and the induction process, and it serves to prove that the definition is a ``good'' definition: if the proposed order  is well-founded and a dependency of the rules, then convergence of induction processes is guaranteed.

Motivated by this gap, the second part of the paper was  devoted to a study of what are sensible definitions. We discussed the link with definitional paradoxes, defined the concept of $\ass$-paradox-free definitions, and pointed to the difference between local versus global dependencies in definitional rules. This observation led us  to refine the notion of (monotone) dependency and the induced concepts of monotone, ordered and iterated definitions. We argued that these refined notions still cover the class  of informal (rule-based) definitions that we expect to find in mathematical practice. For the refined concepts, we  proved that  the standard well-founded semantics is sound and complete.

\paragraph{Logic Programming}

Our theory also has ramifications for the foundations of logic programming. Ever since the well-founded semantics was first defined
\cite{GelderRS91}, researchers have referred to the concept of {\em an
  inductive definition}, as we know it from mathematics, to explain
the intuitions behind this formal semantics.  
The link was made explicit in several publications
\cite{amai/Schlipf95,Denecker98,tocl/DeneckerBM01,tocl/DeneckerT08}, where 
various arguments can be found that, under the well-founded semantics, 
each rule in a set of rules can be seen as an (inductive or base) case of a (possibly inductive) definition. A problem with all these attempts has been the absence of a natural formal model  of the relevant sorts of informal  definitions. Therefore, a weakness in all these studies is the absence of an obvious connection between how we understand various types of informal
inductive definitions in mathematical text and the complex mathematics
of the well-founded semantics. In this paper, we close this gap. The logic programming formalism has an obvious embedding in the rule formalism of this paper and can be viewed as a fragment of the logic here. The well-founded semantics can be easily extended to the formalism here \cite{Denecker:CL2000}.  The first sections of this paper build a natural formalization of the relevant sorts of informal definitions along the lines explained above. After that, we are on mathematical ground and we will {\em prove} that the set defined by a definition of the sorts studied here, is the well-founded model. 

The theorem of the previous paragraph also has an interesting return for the logical study of definitions.  It will appear that the well-founded model construction has several important benefits compared to the concept of induction process as introduced above and formalized in the first part of the paper. In particular, it is faster than the above induction process: it may derive a defined atom in far fewer induction steps. Moreover, it does not depend on an explicitly given induction order: it ``guesses'' the right induction order! These properties offer great benefits for knowledge representation and computation with inductive definitions but also increase our insight of definitions. A surprising implication is that the set defined by a definition over an induction order can be computed without knowing this order and hence, that it does not depend on the induction order! One may wonder then what is the role of the induction order? This  issue will be analyzed in this paper. The point serves to show that from a knowledge representation and computational point of view, the well-founded semantics has a contribution to the formal study of informal definitions. 

}

\ignore{

We have presented an analysis of the concept of an inductive definition,
as it is known from mathematical texts. Such definitions are an
important form of human expert knowledge, occurring frequently in not
only scientific text, but also in knowledge  representation
applications. We have shown how they can be represented in a general,
rule-based syntax and, in the presence of a suitable induction order,
assigned their intended meaning as the limit of a natural induction.
One of the main result of this paper has been to show that the more
general concept of a well-founded induction in fact allows us to
construct his limit without explicitly knowing the induction order
beforehand.

Well-founded inductions use three-valued logic to approximate the information that is already present about the
defined concept at each intermediate stage of a natural
induction.  In other words, even though natural inductions are a
more fundamental formalisation of the intuitions underlying inductive definitions, 
well-founded inductions provide a more convenient tool for
reasoning about (the limit of) natural inductions. As an approximation of the ``real'' two-valued natural induction
process, we would expect a well-founded induction to reach a
two-valued limit, i.e., it should be able to decide the truth of all
domain atoms.  This observation has motivated us to introduce the
notion of {\em paradox-freeness} as a criterion that sensible definitions
should satisfy. 

Different variants of three-valued logic give
rise to different notions of well-founded inductions. The most precise
variant is the supervaluation, which therefore also defines the most
general class of paradox-free definitions. As we have argued, however, we
believe that this generality is not required, and that all inductive definitions
of practical relevance are also paradox-free w.r.t.~the weaker (and
computationally cheaper) Kleene valuation. This point is not only interesting
in itself, but also reveals a strong link between our study of
inductive definitions and results from the area of logic
programming. In particular, when using the Kleene valuation, the limit
of a well-founded induction is precisely the {\em well-founded model}
as it has been defined for logic programs.
}

\ignore{
The view of logic programs as representations of {\em definitions} has
been implicit in many studies on the semantics of logic programming,
for example in the completion semantics \cite{adbt/Clark78}, in the {\em standard} semantics of stratified logic programs
\cite{minker88/AptBW88}, in the original work of well-founded
semantics \cite{GelderRS91}, in \cite{jcss/Schlipf95} and also in
\cite{VanGelder93}.  In the latter work, the well-founded semantics
was extended to rule sets with arbitrary FO-bodies, parameter symbols
and arbitrary structures in the form of an extensional
database-structure of the parameter symbols. As such, it was the first
definition of the parametrized well-founded
semantics for the rule set formalism that we used here. Van Gelder
presents this logic in the spirit of fixpoint logics, as a logic of
alternating fixpoints and calls it {\em AFP}, for {\em Alternating
  Fixpoint Partial model}. Although it is clear that he intends this
logic to be a logic of definitions, the link with informal definitions
and, in particular, the classes of monotone, ordered and iterated inductive definitions was laid
only later in
 in \cite{jcss/Schlipf95}
\cite{Denecker98,tocl/DeneckerBM01,tocl/DeneckerT08}. However, what
these studies miss is an analysis based on a direct formalization of
the informal induction process, which is what we provide in this paper
for the first time.
}
 
\newpage
\bibliographystyle{ACM-Reference-Format-Journals}
\bibliography{idp-latex/krrlib} 

\end{document}